\begin{document}

\pgfdeclarelayer{background}
\pgfdeclarelayer{firstbackground}
\pgfdeclarelayer{secondbackground}
\pgfsetlayers{secondbackground,firstbackground,background,main}

\pgfooclass{stamp}{ 
    \method stamp() { 
    }
    \method apply(#1,#2,#3) { 
	\draw (#1+2,#2) -- (#1,#2) -- (#1,#2+1) -- (#1+2,#2+1) -- cycle;
          \node[font=\tiny] at (#1+1,#2+0.5) {#3};
    }
    \method box(#1,#2,#3,#4,#5) { 
	\filldraw[fill=#5] (#1+#3,#2) -- (#1,#2) -- (#1,#2+#4) -- (#1+#3,#2+#4) -- cycle;
   }
 \method cnot(#1,#2,#3,#4) { 
	\draw (#1,#2) -- (#1-#3,#2) -- (#1-#3,#2-#4) -- (#1,#2-#4)-- (#1-#3-2,#2-#4);
	\draw (#1-#3,#2) circle (0.2);
	\draw (#1-#3,#2+0.2) -- (#1-#3,#2) -- (#1-#3-0.25,#2);
	\draw (#1-#3-0.25,#2) -- (#1-#3-2,#2);
   }
}
\pgfoonew \mystamp=new stamp()

\pgfooclass{block}{ 
    \method block() { 
    }
    \method basic(#1,#2,#3) { 
	\mystamp.apply(#1,#2,#3)
	\draw (#1-1,#2+0.5) -- (#1,#2+0.5);
	\draw (#1+2,#2+0.5) -- (#1+3,#2+0.5);
    }
}

\pgfoonew \myblock=new block()

\title{Bounds on Information Combining\\With Quantum Side Information}

\author{Christoph~Hirche and~David~Reeb
\thanks{C. Hirche is with the F\'{\i}sica Te\`{o}rica: Informaci\'{o} i Fen\`{o}mens
 Qu\`{a}ntics, 
 Departament de F\'{i}sica, 
Universitat Aut\`{o}noma de Barcelona,   ES-08193
 Bellaterra (Barcelona), Spain.}
\thanks{D. Reeb is with the Insitute for Theoretical Physics, Leibniz Universitat Hannover, Appelstrasse 2, 30167 Hannover, Germany.}
}

\maketitle

\begin{abstract}
``Bounds on information combining'' are entropic inequalities that determine how the information (entropy) of a set of random variables can change when these are combined in certain prescribed ways. Such bounds play an important role in classical information theory, particularly in coding and Shannon theory; entropy power inequalities are special instances of them. The arguably most elementary kind of information combining is the addition of two binary random variables (a CNOT gate), and the resulting quantities play an important role in Belief propagation and Polar coding. We investigate this problem in the setting where quantum side information is available, which has been recognized as a hard setting for entropy power inequalities.

Our main technical result is a non-trivial, and close to optimal, lower bound on the combined entropy, which can be seen as an almost optimal ``quantum Mrs.\ Gerber's Lemma''. Our proof uses three main ingredients: (1) a new bound on the concavity of von Neumann entropy, which is tight in the regime of low pairwise state fidelities; (2) the quantitative improvement of strong subadditivity due to Fawzi-Renner, in which we manage to handle the minimization over recovery maps; (3) recent duality results on classical-quantum-channels due to Renes et al. We furthermore present conjectures on the optimal lower and upper bounds under quantum side information, supported by interesting analytical observations and strong numerical evidence.

We finally apply our bounds to Polar coding for binary-input classical-quantum channels, and show the following three results: (A) Even non-stationary channels polarize under the polar transform. (B) The blocklength required to approach the symmetric capacity scales at most sub-exponentially in the gap to capacity. (C) Under the aforementioned lower bound conjecture, a blocklength polynomial in the gap suffices.
\end{abstract}
%

\section{Introduction}
Many of the tasks in classical and quantum information theory are concerned with the evolution of random variables and their corresponding entropies under certain ``combining operations''. 
A particularly elementary example is the addition of two classical random variables (with values in some group). In this case the entropy can be easily computed since we know that the addition of two random variables has a probability distribution which corresponds to the convolution of the probability distributions of the individual random variables. 
The picture changes when we have random variables \emph{with} side information. Now we are interested in the entropy of the sum conditioned on all the available side information. Evaluating this is substantially more difficult, already in the case of classical side information. 

The field of \textit{bounds on information combining} is concerned with finding optimal entropic bounds on the conditional entropy in this and other ``information combining'' scenarios. 
A first lower bound was given by Wyner and Ziv in \cite{WZ73}, the well known \textit{Mrs. Gerbers Lemma}, which immediately found many applications (see e.g. \cite{GKbook}). 

Following these results, additional approaches to the problem have been found which also led to an \emph{upper} bound on the conditional entropy of the combined random variables. One proof method and several additional applications can be found e.g. in \cite{RU08} along with the optimal upper bound. 

Now, we are interested in above setting, but with \emph{quantum} -- rather than classical -- side information. Unfortunately, it turns out that none of the classical proof techniques apply in this quantum setting, since conditioning on quantum side information does not generally correspond to a convex combination over unconditional situations. In this work we are concerned with investigating the optimal entropic bounds under quantum side information and report partial progress along with some conjectures. 

An alternative way of looking at the problem is by associating the random variables along with the side information to channels, where the random variable models the input of the channel leading to a known output given by the side information. This analogy is especially useful when investigating coding problems for classical channels. Recently, Arikan~\cite{A09} introduced the first example of constructive capacity achieving codes with efficient encoding and decoding, Polar codes. The elementary idea of Polar codes is to combine two channels by a CNOT gate at their (classical) input, which means that the input of the second channels gets added to the input of the first channel, therefore adding additional noise on the first input, but providing assistance when decoding the second channel. 
To evaluate the performance of these codes, the Mrs.\ Gerbers Lemma provides an essential tool to tracking the evolution of the entropy through the coding steps (see e.g.\ \cite{AT14,GX15}, which we will build on below).
Following their introduction in the classical setting, Polar codes have been generalized to classical-quantum channels~\cite{WG13}. Finding good bounds on information combining with quantum side information can therefore be very useful for proving important properties of classical-quantum Polar codes, as we will see.

In this work we provide a lower bound on the conditional entropy of added random variables with quantum side information, using novel bounds on the concavity of von Neumann entropy, recent improvements of strong subadditivity by Fawzi and Renner~\cite{FR14}, and results on channel duality by Renes \etal~\cite{RSH14, R17}. Furthermore we provide a conjecture on the optimal inequalities (upper and lower bounds) in the quantum case. Then we discuss applications of our technical results to other problems in information theory and coding; in particular, we show how to use our results to prove sub-exponential convergence of classical-quantum Polar codes to capacity, and that polarization takes place even for non-stationary classical-quantum channels. 

\subsection{Entropy power inequalities}\label{EPI}
Bounds on information combining are generalizations of a family of entropic inequalities that are called \emph{entropy power inequalities} (for historic reasons). The first and paradigmatic of these inequalities was suggested by Shannon in the second part of his original paper on information theory~\cite{S48}, stating that
\begin{align}\label{shannon-epi}
 e^{2h(X_1)/n}+e^{2h(X_2)/n}\leq e^{2h(X_1+X_2)/n},
\end{align}
where $X_1,X_2$ and are random variables with values in ${\mathbb R}^n$ and $h$ denotes the differential entropy (each of the three terms in (\ref{shannon-epi}) is called the \emph{entropy power} of the respective random variable $X_1$, $X_2$, and $X_1+X_2$); rigorous proofs followed later~\cite{S59}. Clearly, this inequality gives a lower bound on the entropy $h(X_1+X_2)$ of the sum $X_1+X_2$ given the individual entropies $h(X_1),h(X_2)$, and it is easy to see that the bound is tight (namely, for Gaussian $X_1,X_2$).

Similar lower bounds on the entropy of a sum of two (or more) random variables with values in a group $(G,+)$ have also been termed entropy power inequalities, see e.g.\ \cite{SW90}. For the simplest group $G={\mathbb Z}_2$, the optimal lower bound follows from a famous theorem in information theory, called \emph{Mrs.\ Gerber's Lemma} \cite{WZ73}, which we will describe below in more detail. For the group $G={\mathbb Z}$ of integers, entropy power inequalities in the form of lower bounds on the entropy have emerged~\cite{HAT14, Tao10} after a combinatorial version of the question had been investigated in the field of arithmetic (and in particular, additive) combinatorics for a long time.

Most (or all) of the above entropy power inequality-like lower bounds remain valid when classical side information $Y_i$ is available for each of the random variables $X_i$, so that for example the entropic terms in (\ref{shannon-epi}) are replaced by $h(X_1|Y_1)$, $h(X_2|Y_2)$, and $h(X_1+X_2|Y_1Y_2)$, respectively. This is due to typical convexity properties of these lower bounds along with a representation of the conditional Shannon entropy as a convex combination of unconditional entropies (see our description of the classical conditional Mrs.\ Gerber's Lemma below).

Entropy power inequalities have recently been investigated in the quantum setting \cite{KS14, PMG14,ADO15}, with the action of addition replaced by some quantum combining operation, such as a beamsplitter operation on continuous-variable states, or a partial swap. These inequalities also hold under conditioning on \emph{classical} side information.
 
However, when the side information is of \emph{quantum} nature, i.e.\ each $(X_i,Y_i)$ is a classical-quantum state \cite{NC00} (for the classical entropy power inequalities) or a fully quantum state (for the quantum entropy power inequalities), the proofs do not anymore go though in the same way. Actually, as we will see in this paper, the inequalities that hold under classical side information can sometimes be violated in the presence of quantum side information.

The only lower bounds available under quantum side information so far can be found in~\cite{K15, dPT17}, where for (Gaussian) quantum states an entropic lower bound was proven for the beamsplitter interaction. No general results for all classical-quantum states have been obtained so far.

In the light of these developments, our contribution can be seen as the natural entry point into investigating the influence of \emph{quantum} side information in entropy power inequalities and information combining: For the ``information part'' we concentrate on the simplest scenario, namely \emph{classical} random variables $X_i$ that are \emph{binary-valued}, i.e.\ valued in the simplest non-trivial group $({\mathbb Z}_2,+)$. For the side information $Y_i$, however, we allow any general quantum system and states. Our question therefore highlights the added difficulties coming from the quantum nature of side information.

But already this bare scenario gives new results on highly relevant coding scenarios: The entropic lower bounds we prove are enough to guarantee the polarization of classical-quantum Polar codes, even with a guaranteed speed of polarization in the i.i.d.\ case. Furthermore, we conjecture optimal upper and lower bounds on information combining in this simple scenario which exhibit interesting properties, as we will describe.

\section{Preliminaries}\label{Pre}
In this section we will introduce the necessary notation along with some definitions and simple observations. 
In the following the systems in question will be modeled by random variables, where the main classical random variable is usually denoted by $X_i$ (equivalent to the input system of a channel). Whenever the random variables modeling the side information (or the channel output) are classical we will denote them by $Y_i$, and when they are quantum then usually by $B_i$ (a good part of the formalism applies to both situations).

The main quantity under investigation will be the von-Neumann entropy, for a quantum state $\rho_A$ on a quantum system $A$ defined by
\begin{equation}
H(A) = -\tr \rho \log\rho,
\end{equation}
which reduces to the Shannon entropy in the case of classical states (those which are diagonal in the computational basis). In this paper, we leave the base of the logarithm unspecified, unless stated otherwise, so that the resulting statements are valid in any base (like binary, or natural); our figures, however, use the natural logarithm. A particular case is the Shannon entropy for a binary probability distribution, called the binary entropy $h_2(p)=-p\log{p} - (1-p)\log{(1-p)}$. In the following will often use the inverse of this function
\begin{equation}
h_2^{-1}:[0,\log2]\to[0,1/2]. 
\end{equation}
From here the conditional entropy of a quantum state $\rho_{AB}$ on a quantum system $AB$ is defined by 
\begin{equation}
H(A|B) = H(AB) - H(B). 
\end{equation}
Whenever the conditioning system is classical we can state the following important property
\begin{equation}
H(A|Y) = \sum_{y} p(y) H(A|Y=y), 
\end{equation}
this obviously holds also for the Shannon entropy, but importantly we cannot write down such a decomposition when the conditioning system is quantum.

For a binary random variable $X$ we can associate a probability distribution $p$ for which then $H(X) = h_2(p)$. Now it is well known that when we sum two random variables the corresponding probability distribution is the convolution of the original probability distributions. The binary convolution is defined as $a\ast b := a(1-b) + (1-a)b$. It easily follows that
\begin{equation}
H(X_1+X_2) = h_2(h_2^{-1}(H(X_1))\ast h_2^{-1}(H(X_2))).
\end{equation}

Often we want to stress the duality of classical-quantum states to classical-quantum channels. In this case, for a given channel $W$ with input modeled by a random variable $X$ and the output by $Y$ we write equivalently 
\begin{equation}
H(X|Y) = H(W)
\end{equation}
(usually we assume here the uniform distribution over input values $X$). An additional useful entropic quantity is the mutual information defined as 
\begin{equation}
I(X:Y) = H(X) + H(Y) - H(XY).
\end{equation} 
Again for a channel $W$ with binary classical input, we write $I(W)$, which, when fixing $X$ to correspond to the uniform probability distribution, is also called the \emph{symmetric capacity} of that channel:
\begin{equation}
I(W) = I(X:Y) = \log 2 - H(X|Y)  = \log 2 - H(W).
\end{equation}
Lastly, mostly for technical purposes we will also use the relative entropy defined as 
\begin{equation}
D(\rho\|\sigma) = \tr \rho \left(\log\rho -\log\sigma\right), 
\end{equation} 
which in the case of classical probability distributions is the Kullback-Leibler divergence.

\section{Bounds on Information Combining in classical information theory}

In classical information theory the topic of bounds on information combining describes a number of results concerned with what happens -- in particular, to the entropy of the involved objects -- when random variables get combined. This is especially interesting when we have side information for these random variables, due to the analogy with channel problems.
The name of this field goes back to~\cite{LHHH05} where such bounds where used for repetition codes. Later on, many more results where found, also as \textit{Extremes of information combining}~\cite{SSZ05} for MAP decoding and LDPC codes. Furthermore, information combining plays an important role for belief propagation~\cite{RU08}. 

Examples of particular importance are the combinations at the variable and check nodes in belief propagation and the transformation to \textit{better} and \textit{worse} channels in Polar coding. For the latter, two (classical) binary random variables with side information are added, or equivalently two binary channels get combined by a CNOT gate (see Figure~\ref{Fig:Basic}). 
In the first setting we are concerned with the entropy of the sum $X_1+X_2$ given the side information $Y_1Y_2$, which corresponds to check nodes in belief propagation and the worse channel in Polar coding. In the channel picture this can be seen as channel combination
\begin{equation}\label{cl:minus}
(W_1\boxast W_2)(y_1y_2|u_1) = \frac{1}{2}\sum_{u_2} W_1(y_1 | u_1 \oplus u_2)W_2(y_2|u_2),
\end{equation} 
and is therefore given by
\begin{equation}
H(X_1+X_2|Y_1Y_2) = H(W_1\boxast W_2).
\end{equation}
In the second setting we are interested in the entropy evolution at a variable node, with output states given by
\begin{equation}\label{cl:plus}
(W_1\varoast W_2) (y_1y_2|u_2)  = W_1(y_1 | u_2)W_2(y_2|u_2).
\end{equation} 
It turns out that the combined channel can be reversibly transformed (see e.g.~\cite{R16bp}) into a channel with the output states
\begin{equation}\label{cl:plus2}
u_2\rightarrow \frac{1}{2} W(y_1 | u_1 \oplus u_2)W(y_2|u_2),
\end{equation}  
which is equivalent to decoding the second input to two channels combined by a CNOT gate given the side information $Y_1Y_2$ but additionally $X_1+X_2$. This again is equal to the generation of a \textit{better} channel studying Polar codes.

Therefore we are interested in the entropy
\begin{equation}\label{eq:varo-minus}
H(X_2 | X_1+X_2,Y_1Y_2) = H(W_1\varoast W_2).
\end{equation}

Lower and upper bounds on both of these quantities have many applications in classical information theory, e.g. in coding theory giving exact bounds on EXIT charts~\cite{RU08} and, of course, the investigation of Polar codes~\cite{AT14,GX15}. \\
In classical information theory the optimal bounds are well known as follows:
\begin{alignat}{2}
h_2(h_2^{-1}(H_1)\ast h_2^{-1}(H_2)) &\leq H(X_1+X_2|Y_1Y_2) &&\leq \log 2 - \frac{(\log 2 - H_1)(\log 2 - H_2)}{\log 2}, \label{minus-bounds}\\
\frac{H_1 H_2}{\log 2} &\leq H(X_2 | X_1+X_2,Y_1Y_2) &&\leq H_1 + H_2 - h_2(h_2^{-1}(H_1)\ast h_2^{-1}(H_2)),  \label{plus-bounds}
\end{alignat}
with $H_1 = H(X_1|Y_1)$ and $H_2 = H(X_2|Y_2)$. \\
Later in this work we will be particularly interested in the lower bound in \ref{minus-bounds} (and equivalently the upper bounds in \ref{plus-bounds}), which are also known under the name Mrs. Gerbers Lemma. We will review the proofs  of these inequalities in the next subsection, also to show difficulties when translating these inequalities to the quantum setting.
A well known fact is that 
\begin{equation}\label{additiveentropies}
H(X_1+X_2|Y_1Y_2) + H(X_2 | X_1+X_2,Y_1Y_2) = H(X_1|Y_1) + H(X_2|Y_2). 
\end{equation}
From this follows that it is sufficient to prove the inequalities for either Equation~\ref{minus-bounds} or ~\ref{plus-bounds}. We will therefore mostly focus on the setting leading to Equation~\ref{minus-bounds}. \\
Moreover, it is even known for which channels equality is achieved in above equations (see e.g. \cite{RU08}). For the lower bound in Equation \ref{minus-bounds} this is the binary symmetric channel (BSC) and for the upper bound it is the binary erasure channel (BEC).
Therefore these channels are sometimes called the most and least informative channels.

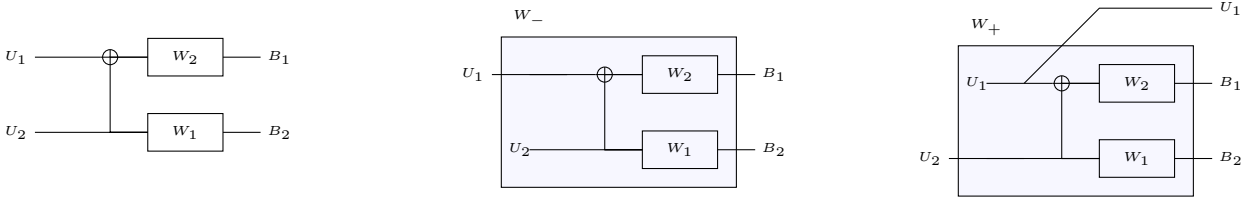
\begin{figure}
\begin{minipage}[ht]{0.33\textwidth}
\begin{tikzpicture}[scale=0.5]
	\myblock.basic(0, 0, $W_1$)
	\myblock.basic(0, 2, $W_2$)
	\mystamp.cnot(0,2.5,1,2)
	\node[font=\tiny] at (-3.5,0.5) {$U_2$};
	\node[font=\tiny] at (-3.5,2.5) {$U_1$};
        \node[font=\tiny] at (3.5,2.5) {$B_1$};
        \node[font=\tiny] at (3.5,0.5) {$B_2$};
\end{tikzpicture}
\end{minipage}
\begin{minipage}[ht]{0.33\textwidth}
\begin{tikzpicture}[scale=0.5]
	\myblock.basic(0, 0, $W_1$)
	\myblock.basic(0, 2, $W_2$)
	\mystamp.cnot(0,2.5,1,2)
    \begin{pgfonlayer}{secondbackground}
	\mystamp.box(-3.75,-0.5,6.25,4,blue!3);
    \end{pgfonlayer}
        \node[font=\tiny] at (-3,4) {$W_-$};
	\node[font=\tiny] at (-3.25,0.5) {$U_2$};
	\node[font=\tiny] at (-4.5,2.5) {$U_1$};
	\draw (-4,2.5) -- (-2,2.5);
        \node[font=\tiny] at (3.5,2.5) {$B_1$};
        \node[font=\tiny] at (3.5,0.5) {$B_2$};
\end{tikzpicture}
\end{minipage}
\begin{minipage}[ht]{0.33\textwidth}
\begin{tikzpicture}[scale=0.5]
	\myblock.basic(0, 0, $W_1$)
	\myblock.basic(0, 2, $W_2$)
	\mystamp.cnot(0,2.5,1,2)
    \begin{pgfonlayer}{secondbackground}
	\mystamp.box(-3.75,-0.5,6.25,4,blue!3);
    \end{pgfonlayer}
        \node[font=\tiny] at (-3,4) {$W_+$};
	\node[font=\tiny] at (-4.5,0.5) {$U_2$};
	\node[font=\tiny] at (-3.25,2.5) {$U_1$};
	\draw (-4,0.5) -- (-2,0.5);
	\draw (-2,2.5) -- (-0,4.5) -- (3,4.5);
        \node[font=\tiny] at (3.5,2.5) {$B_1$};
        \node[font=\tiny] at (3.5,0.5) {$B_2$};
        \node[font=\tiny] at (3.5,4.5) {$U_1$};
\end{tikzpicture}
\end{minipage}
\caption{\label{Fig:Basic}A useful figure to understand the concept of information combining is to look at two channels $W_1$ and $W_2$ which get combined by a CNOT gate, as in the figure on the left. From this we can generate to types of channels, which, in analogy to Polar coding, we call $W^-$ and $W^+$. Both are directly related since the overall entropy is conserved under combining channels in this way (see Eq.\ (\ref{additiveentropies})).  }
\end{figure}

\subsection{Proof techniques for the classical bounds}
In this section we will review the classical Mrs.\ Gerbers Lemma~\cite{WZ73} and a corresponding upper bound for combining of classical information, in order to contrast these results and proofs with our later results, where conditioning on \emph{quantum} side information is allowed. The following proof sketches illustrate that the classical proofs, which crucially use that the conditional Shannon entropy is affine under conditioning, cannot be easily extended to the case of quantum side information.

\begin{lem}[Mrs. Gerbers Lemma]
Let $(X_1, Y_1)$ and $(X_2,Y_2)$ be independent pairs of classical random variables. Then:
\begin{equation}
H(X_1 + X_2 | Y_1Y_2) \geq h_2(h_2^{-1}(H(X_1 | Y_1))\ast h_2^{-1}(H(X_2 | Y_2))).
\end{equation}
\end{lem}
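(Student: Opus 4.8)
The plan is to exploit the one structural feature that will fail in the quantum setting, namely that the conditional Shannon entropy is \emph{affine} under a classical conditioning variable, and then to reduce the whole statement to a single scalar convexity property to which Jensen's inequality is applied twice. Concretely, since the pairs are independent we have $p(y_1,y_2)=p(y_1)p(y_2)$, and conditioned on $Y_1=y_1,Y_2=y_2$ the variables $X_1,X_2$ are independent binary variables with biases $a(y_1):=\Pr[X_1=1\,|\,Y_1=y_1]$ and $b(y_2):=\Pr[X_2=1\,|\,Y_2=y_2]$. The sum then has the convolved distribution, so $H(X_1+X_2\,|\,Y_1=y_1,Y_2=y_2)=h_2\bigl(a(y_1)\ast b(y_2)\bigr)$.

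The next step is to rewrite this per-outcome entropy purely in terms of the conditional entropies. Using the identity $(1-a)\ast b = 1-(a\ast b)$ together with $h_2(p)=h_2(1-p)$, one sees that $h_2(a\ast b)$ depends on $a$ and $b$ only through $h_2(a)=H(X_1|Y_1=y_1)$ and $h_2(b)=H(X_2|Y_2=y_2)$; in particular $h_2^{-1}(H(X_1|Y_1=y_1))=\min(a(y_1),1-a(y_1))$ may be substituted for $a(y_1)$, and likewise for $b(y_2)$. Defining $F(u,v):=h_2\bigl(h_2^{-1}(u)\ast h_2^{-1}(v)\bigr)$ and using affineness of the conditional entropy, I obtain the clean expression
\[
H(X_1+X_2\,|\,Y_1Y_2)=\sum_{y_1,y_2}p(y_1)p(y_2)\,F\bigl(H(X_1|Y_1=y_1),\,H(X_2|Y_2=y_2)\bigr).
\]

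The key analytic input, which I would isolate as a lemma, is that for each fixed $v\in[0,\log 2]$ the map $u\mapsto F(u,v)$ is convex on $[0,\log 2]$, and symmetrically in $v$. Granting this, I apply Jensen's inequality one variable at a time. Summing first over $y_1$ at fixed $y_2$ and using $\sum_{y_1}p(y_1)H(X_1|Y_1=y_1)=H(X_1|Y_1)$ gives $\sum_{y_1}p(y_1)F(H(X_1|Y_1=y_1),v)\ge F(H(X_1|Y_1),v)$; summing the result over $y_2$ and applying convexity in the second slot yields
\[
H(X_1+X_2\,|\,Y_1Y_2)\ge F\bigl(H(X_1|Y_1),\,H(X_2|Y_2)\bigr)=h_2\bigl(h_2^{-1}(H(X_1|Y_1))\ast h_2^{-1}(H(X_2|Y_2))\bigr),
\]
which is exactly the claim.

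The main obstacle is the convexity lemma itself, which is the genuine content of Mrs.\ Gerber's Lemma. I would prove it by a direct second-derivative computation: holding $\beta=h_2^{-1}(v)$ fixed, parametrising $u=h_2(a)$ with $a\in[0,1/2]$, and computing $\tfrac{d^2}{du^2}h_2(a\ast\beta)$ via the chain rule (using $h_2'(a)=\log\tfrac{1-a}{a}$), then showing the resulting expression is non-negative on the relevant range. This is where essentially all the work sits, and it is precisely the step with no analogue once the side information is quantum: the opening decomposition relies on affineness of the conditional entropy over a \emph{classical} conditioning variable, which fails for quantum $B_i$, so neither the reduction to $F$ nor the subsequent Jensen argument survives. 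This is exactly why the quantum version demands the entirely different machinery developed later in the paper.
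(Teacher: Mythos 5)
Your proposal is correct and follows essentially the same route as the paper's own proof: decompose the conditional entropy as a product-weighted average over the outcomes $y_1,y_2$ using affineness of classical conditioning, rewrite each term via the function $F(u,v)=h_2(h_2^{-1}(u)\ast h_2^{-1}(v))$, and apply Jensen's inequality twice using the per-argument convexity of $F$, which is the key lemma the paper likewise attributes to Wyner--Ziv. Your additional justification that $h_2(a\ast b)$ depends on $a,b$ only through their entropies (via $(1-a)\ast b = 1-(a\ast b)$ and the symmetry of $h_2$) is a correct and slightly more careful version of the step the paper states implicitly.
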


An important ingredient in the proof of Wyner and Ziv~\cite{WZ73} is the observation that the function
\begin{equation}
g_c (H_1,H_2) := h_2(h_2^{-1}(H_1)\ast h_2^{-1}(H_2))
\end{equation}
is convex in $H_1\in[0,\log2]$ for each fixed $H_2\in[0,\log2]$, and, by symmetry, convex in $H_2$ for each fixed $H_1$. These convexity properties, together with the representation of the conditional Shannon entropy as an average over Shannon entropies, give a proof of the lemma as follows:
\begin{align}
H(X_1 + X_2 | Y_1Y_2) &= \sum_{y_1,y_2} p(Y_1=y_1) p(Y_2=y_2) H(X_1 + X_2 | Y_1=y_1 Y_2=y_2) \label{ccond}\\
&= \sum_{y_1,y_2} p(Y_1=y_1) p(Y_2=y_2) h(h^{-1}(H(X_1 | Y_1=y_1))\ast h^{-1}(H(X_2 | Y_2=y_2))) \\
&\geq \sum_{y_1} p(Y_1=y_1) h(h^{-1}(H(X_1 | Y_1=y_1))\ast h^{-1}(H(X_2 | Y_2))) \\
&\geq h(h^{-1}(H(X_1 | Y_1))\ast h^{-1}(H(X_2 | Y_2))).
\end{align}
Note that the way in which conditioning is handled by the equality \ref{ccond} plays a crucial role in the proof. Unfortunately, this equality does generally not hold for the conditional entropy with quantum side information, i.e.\ when $Y_1$, $Y_2$ are quantum systems; in this case it is not even clear what the correct generalization of the right-hand-side of \ref{ccond} may be. Understanding conditioning on quantum systems is an important but apparently difficult question in quantum information theory, as is drastically illustrated by the much higher difficulty in proving the strong subadditivity property for quantum entropy~\cite{Lieb2002} compared to Shannon entropy. Better understanding of conditioning on quantum side information would not only help for bounds on information combining but for many other open problems as well, like the related question of conditional entropy power inequalities (see Section~\ref{EPI}) or even quantum cryptography~\cite{DFR16}.

In the proof for the upper bound in Equation~\ref{minus-bounds} we encounter a very similar problem handling quantum conditional information. The important inequality for the upper bound is the fact that the function $g_c (H_1,H_2)$ defined above can be bounded by an expression that is affine in both $H_1$ and $H_2$ separately:
\begin{equation}
g_c (H_1,H_2) \leq \log 2 - \frac{(\log 2 - H_1)(\log 2 - H_2)}{\log 2}.
\end{equation}
This follows immediately from the convexity of $g_c$ in $H_1$ and the fact that the inequality holds with equality for each fixed $H_2$ at the two endpoints $H_1\in\{0,\log2\}$, see e.g.\ \cite{LHHH05}. From here, the proof of the classical inequality proceeds in a similar fashion as for the lower bound, using again the expression of the conditional Shannon entropy:
\begin{align}
H(X_1 + X_2 | Y_1Y_2) &= \sum_{y_1,y_2} p(Y_1=y_1) p(Y_2=y_2) h(h^{-1}(H(X_1 | Y_1=y_1))\ast h^{-1}(H(X_2 | Y_2=y_2))) \\
&\leq \sum_{y_1,y_2} p(Y_1=y_1) p(Y_2=y_2) [ \log 2 - \frac{(\log 2 - H(X_1 | Y_1=y_1))(\log 2 - H(X_2 | Y_2=y_2))}{\log 2} ] \\
& =  \log 2 - \frac{(\log 2 - H(X_1 | Y_1))(\log 2 - H(X_2 | Y_2))}{\log 2}.
\end{align}

\section{Information combining with quantum side information}\label{qcombining}


In this section we introduce the generalized scenario of information combining with quantum side information. 
The main ingredient are generalizations of the channel combinations in Equations~\ref{cl:minus} and~\ref{cl:plus} to the case of quantum outputs. Now we are combining two classical-quantum channels, with uniformly distributed binary inputs $\{0,1\}$. Again we will look at both, variable and check nodes under belief propagation and better and worse channels in Polar coding.  Since the inputs are classical we can investigate the same combination procedure via CNOT gates. Belief propagation for quantum channels has been recently introduced in~\cite{R16bp}, for Polar coding the resulting channels can be seen as special case of those in~\cite{WG13}.
 
The generalization of Equation~\ref{cl:minus}, where we look at a check note or equivalently try to decode the input of the first channel while not knowing that of the second becomes a channel with output states
\begin{equation}\label{qu:minus}
W_1\boxast W_2 : u_1 \rightarrow \frac{1}{2}\sum_{u_2} \rho_{u_1 \oplus u_2}^{B_1}\otimes\rho_{u_2}^{B_2}.
\end{equation} 
Similarly the generalization of Equation~\ref{cl:plus} for a variable node is given by
\begin{equation}\label{qu:plus}
W_1\varoast W_2: u_2 \rightarrow \rho_{u_2}^{B_1}\otimes\rho_{u_2}^{B_2},
\end{equation} 
which, by a similar argument then in the classical case is equivalent up to unitaries to the Polar coding setting where we try to decode the second bit while assuming the first bit to be known. This becomes a channel with output states
\begin{equation}\label{qu:plus2}
u_2 \rightarrow  \frac{1}{2}\sum_{u_1}\ketbra{U_1}{u_1}{u_1}\otimes \rho_{u_1 \oplus u_2}^{B_1}\otimes\rho_{u_2}^{B_2},
\end{equation} 
where the additional classical register $U_1$ is used to make the input of the first channel available to the decoder. 

Our goal now is to find bounds on the conditional entropy of those combined channels
\begin{equation}\label{qu:boxast}
H(X_1+X_2|B_1B_2) = H(W_1\boxast W_2),
\end{equation}
and
\begin{equation}\label{qu:varoast}
H(X_2 | X_1+X_2,B_1B_2) = H(W_1\varoast W_2),
\end{equation} 
 in terms of the entropies of the original channels, analog to the bounds on information combining in the classical case. 
An important relation between these two entropies can be directly translated to the setting with quantum side information~\cite{WG13}
\begin{equation}\label{qchain}
H(X_1+X_2|B_1B_2) + H(X_2 | X_1+X_2,B_1B_2) = H(X_1|B_1) + H(X_2|B_2). 
\end{equation}
From here it follows that, as in the classical case, proofing bounds on the entropy in Equation~\ref{qu:boxast} automatically also gives bounds on the one in Equation~\ref{qu:varoast}. 

In the remainder of this section we will introduce the concept of channel duality and discuss its application to channel combining which will help us later to find better bounds on above quantities.

\subsection{Duality of classical and classical-quantum channels}\label{duality}
The essential idea is to embed a classical channel into a quantum state, take its Stinespring dilation and trace over the original output system. In the way we use it here it has been first used in \cite{WR12} to extend classical polar codes to quantum channels and then has been refined in \cite{RSH14} to investigate properties of polar codes for classical channels. A comprehensive overview with some new applications has recently been given in~\cite{R17}.
We explain the procedure here by applying it to a general binary classical channel $W$ with transition probabilities $W(y|x)$. 
The first step is to embed the channels into a quantum state 
\begin{equation}
\varphi_x = \sum_y W(y|x) \ketbra{}{y}{y}
\end{equation}
and then choose a purification of this state with
\begin{equation}
\ket{}{\varphi_x} = \sum_y \sqrt{W(y|x)}\ket{}{y}\ket{}{y}.
\end{equation}
Now we can define our classical quantum channel by an isometry acting as follows
\begin{equation}
U \ket{}{x} = \ket{}{\varphi_x}\ket{}{x}.\label{Udual}
\end{equation}
The dual channel is now defined by the isometry acting on states of the form $\ket{}{\tilde x} = \frac{1}{\sqrt{2}}\sum_z (-1)^{xz} \ket{}{z}$, 
\begin{align}
U \ket{}{\tilde x} &= \frac{1}{\sqrt{2}}\sum_z (-1)^{xz} \ket{}{\varphi_z} \ket{}{z} \\
&= \frac{1}{\sqrt{2}}\sum_{y,z} (-1)^{xz} \sqrt{W(y|x)} \ket{}{y}\ket{}{y}\ket{}{z}.
\end{align}
Finally the output states are given by tracing out the initial output system 
\begin{equation}\label{Eq:gen}
\sigma_x = \frac{1}{2} \sum_{y,z,z'} (-1)^{x(z+z')} \sqrt{W(y|z) W(y|z')} \ket{}{y}\ketbra{}{z}{y}\bra{}{z'}.
\end{equation}
We denote the channel dual to $W$ as $W^{\bot}$.
In the same manner we can define dual channels for arbitrary classical-quantum channels following the steps above starting from Equation~\ref{Udual} with the $\ket{}{\varphi_z}$ being purifications of the output states of the given channel. \\

This now allows us to calculate the duals of specific channels and also for combinations of channels. One result we state in the following Lemma, which is Theorem 1 in \cite{R17}. 
\begin{lem}\label{lem:boxvaro}
Let $W_1$ and $W_2$ be two binary input cq-channels, then the following holds
\begin{align}
W_1^{\bot}\boxast W_2^{\bot} &= (W_1 \varoast W_2)^{\bot} \\
W_1^{\bot}\varoast W_2^{\bot} &= (W_1 \boxast W_2)^{\bot}.
\end{align}
\end{lem}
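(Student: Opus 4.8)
The plan is to verify the two stated identities by computing the output states of both channels directly from the dual-channel formula \ref{Eq:gen} together with the combining rules \ref{qu:minus}--\ref{qu:plus}. First I would note that it suffices to prove a single one of them: duality is an involution, $W^{\bot\bot}=W$ (up to the same channel equivalence used in the statement), which one checks directly from \ref{Udual}--\ref{Eq:gen}; substituting $W_i\mapsto W_i^{\bot}$ into the first identity and dualizing both sides then yields the second. So I would focus on $W_1^{\bot}\boxast W_2^{\bot}=(W_1\varoast W_2)^{\bot}$. The conceptual content, which I would keep in mind as a guide, is that under the Fourier transform on the binary input the repetition structure of the variable-node rule $\varoast$ and the single-parity-check structure of the check-node rule $\boxast$ get exchanged; algebraically this is the identity that $H^{\otimes 2}\,\mathrm{CNOT}\,H^{\otimes 2}$ reverses the control and target of the CNOT.

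Concretely I would fix purifications $\ket{}{\varphi^i_{u_i}}$ of the output states $\rho^{B_i}_{u_i}$ on $B_iR_i$, with $R_i$ the reference. For the right-hand side I would purify the product output $\rho^{B_1}_{u}\otimes\rho^{B_2}_{u}$ of $W_1\varoast W_2$ (Eq.~\ref{qu:plus}) by the product $\ket{}{\varphi^1_{u}}\otimes\ket{}{\varphi^2_{u}}$ — the dual does not depend on this choice — and then apply \ref{Eq:gen}: Fourier-transform the single input, retain one input-copy register, and trace out the genuine outputs $B_1B_2$. For the left-hand side I would instead write each $W_i^{\bot}$ via \ref{Eq:gen} first, insert them into \ref{qu:minus} by feeding $u_1\oplus u_2$ to $W_1^{\bot}$ and $u_2$ to $W_2^{\bot}$, and sum over $u_2$.

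The key step is the summation over $u_2$ on the left-hand side. It acts only on the Fourier phases $(-1)^{u_2(\cdot)}$ carried by the two dual channels and collapses them to the single constraint that the off-diagonal \emph{coherence labels} $v\oplus v'$ of $W_1^{\bot}$ and $w\oplus w'$ of $W_2^{\bot}$ agree modulo $2$. This parity constraint is precisely the correlation produced on the right-hand side when the jointly purified output of the repetition channel $W_1\varoast W_2$ is partially traced. In other words, the marginalization over $u_2$ in the check-node rule plays exactly the role of the partial trace in the dual of the variable-node rule, which is the computational incarnation of the $\boxast\leftrightarrow\varoast$ swap described above.

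Comparing the two resulting expressions, I expect them to coincide up to a single input-independent ancillary register (a maximally mixed qubit in the binary case), which appears because combining the duals at a check node retains two input-copy registers whereas the dual of the variable-node combination retains only one; appending or discarding such a fixed, input-independent factor is a reversible degradation that changes neither $H(W)$ nor $I(W)$, so the two channels are equal in the relevant sense. The main obstacle I anticipate is exactly this bookkeeping: matching the marginalization over $u_2$ with the partial trace while carefully tracking the reference and input-copy registers, so that the fully traced-out output states agree up to that fixed ancilla. The Hadamard-conjugation identity supplies the clean algebraic engine, but converting it into an equality of the traced-out states, and certifying that the leftover register is genuinely input-independent, is the step demanding the most care.
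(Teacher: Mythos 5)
First, note that the paper does not prove this lemma at all: it is imported verbatim as Theorem~1 of the cited work of Renes (reference [R17] in the bibliography), so there is no in-paper argument to measure your attempt against. Your overall strategy --- computing both sides from the dual-channel formula (\ref{Eq:gen}) and the combining rules (\ref{qu:minus})--(\ref{qu:plus}), with the identity that Hadamard conjugation reverses the CNOT as the algebraic engine --- is the right one and is the mechanism behind the cited proof. Your central computation is also correct: summing $(-1)^{u_2(z\oplus z'\oplus w\oplus w')}$ over $u_2$ in $W_1^{\bot}\boxast W_2^{\bot}$ collapses the double sum to the coherence classes $z\oplus z'=w\oplus w'$, which is exactly the correlation left behind when $B_1B_2$ is traced out of the jointly purified output of $W_1\varoast W_2$.

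Two of your steps are genuinely problematic, however. First, the reduction of the second identity to the first rests on duality being an involution, but the paper explicitly notes that $(W^{\bot})^{\bot}\neq W$ in general (only $H((W^{\bot})^{\bot})=H(W)$ survives, with channel equality holding for symmetric $W$); the shortcut therefore fails, and both identities must be computed directly (which is no harder). Second, and more seriously, the final bookkeeping is not mere tidying --- it is where the content lies, and your predicted outcome is not what the computation produces. Carrying out the CNOT relabelling on the two retained input-copy registers, one finds that $W_1^{\bot}\boxast W_2^{\bot}$ is \emph{not} $(W_1\varoast W_2)^{\bot}$ tensored with a fixed maximally mixed qubit; it is a classically flagged equal mixture of $(W_1\varoast W_2)^{\bot}$ and $(W_1\varoast\widetilde{W}_2)^{\bot}$, where $\widetilde{W}_2:u\mapsto\rho^{B_2}_{\bar u}$ is $W_2$ with relabelled input. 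For non-symmetric channels $W_1\varoast\widetilde{W}_2$ and $W_1\varoast W_2$ need not even have equal conditional entropy (a classical Z-channel already separates them), so the two branches cannot be identified by an input-independent isometry; the identity in the lemma really concerns the flagged channel of Eq.~(\ref{qu:plus2}) (equivalently, symmetric channels), and the asserted equivalence between (\ref{qu:plus}) and (\ref{qu:plus2}) is precisely the point that must be argued rather than assumed. As written, your proposal asserts this step rather than proving it, so the argument is incomplete exactly where the difficulty sits.
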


We want to combine above Lemma~\ref{lem:boxvaro} with an observation made in \cite{RB08,WR12a} witch states that for any $W$
\begin{equation}\label{II1}
I(W) + I(W^{\bot}) = \log 2, 
\end{equation} 
which leads us to 
\begin{equation}\label{Hbv}
H( W_1 \varoast W_2) = \log 2 - H(W_1^{\bot}\boxast W_2^{\bot}).
\end{equation}
 Note that in general $(W^{\bot})^{\bot} \neq W$~\cite{R17}, although this relation becomes an equality if $W$ is symmetric, but in either case from Equation~\ref{II1} we can directly conclude that 
\begin{equation}
H((W^{\bot})^{\bot})=H(W).
\end{equation}

From the above arguments we can directly make an important observation. 
Namely, let $W_j$ be the channels corresponding to the states $\rho^{X_jB_j}$ ($j=1,2$), which in particular means $H(W_j)=H(X_j|B_j)=H_j$. Then we have the following chain of equalities:
\begin{align}
H(X_1+X_2|B_1B_2)&=H(W_1\boxast W_2)\nonumber\\
&=H(W_1)+H(W_2)-H(W_1\varoast W_2)\nonumber\\
&=H_1+H_2-H((W_1^{\bot}\boxast W_2^{\bot})^{\bot})\nonumber\\
&=H_1+H_2-\log2+H(W_1^{\bot}\boxast W_2^{\bot})\label{eqntermafterperpequation}
\end{align}
where the first line is by definition of $\boxast$, the second line the chain rule for mutual information (conservation of entropy), the third line follows from Lemma \ref{lem:boxvaro}, and the fourth line follows from Eq.\ (\ref{II1}).

In particular this can be rewritten, using Equation~\ref{Hbv}, as
\begin{equation}
H(W_1\varoast W_2)  - \left( H(W_1)+H(W_2) \right)/2 =  H(W_1^{\bot}\varoast W_2^{\bot}) - \left( H(W_1^{\bot})+H(W_2^{\bot}) \right)/2. 
\end{equation}
This is especially interesting, because it follows directly that, due to the additional uncertainty relation given by Equation~\ref{II1}, the lower bound in the quantum setting has an additional symmetry w.r.t. the transformation $H_i\mapsto\log2-H_i$, which the classical bound does not have. Therefore one can also easily see that there must exist states with quantum side information that violate the classical bound. \\

 Finally we will give two particular examples of duals to classical channels, which were already provided in \cite{RSH14}, which state that the dual of every binary symmetric channel is a channel with pure state outputs and the dual of a BEC is again a BEC. 
\begin{example}{Binary symmetric channel (Example 3.8 in~\cite{RSH14})}
Let W be the classical BSC($p$), for every $p$ the output states of the dual channel are of the form
\begin{equation}
\sigma_x = \ketbra{}{\theta_x}{\theta_x},  \label{BSCdual}
\end{equation}
with $\ket{}{\theta_x} = Z^x \left( p\ket{}{0} + (1-p)\ket{}{1}\right)$, where $Z$ is the Pauli-Z matrix.
\end{example}
\begin{example}{Binary erasure channel (Example 3.7 in~\cite{RSH14})}
Let W be the classical BEC($p$), for every $p$ the the dual channel is again a binary erasure channel, now with erasure probability $1-p$. 
\end{example}
These examples will become useful again when discussing our conjectured optimal bound. 
 
\section{Bounds on the concavity of the von Neumann entropy}
Later we will need to relate the fidelity characteristic $f=F(\rho_0,\rho_1)$ of a binary-input classical-quantum channel with output states $\rho_0$ and $\rho_1$ back to its symmetric capacity $\log2-H$ (first, not necessarily for uniformly distributed inputs), therefore we need a lower bound on the concavity of the von Neumann entropy. This will be a special case of the following new bounds (see also Remark \ref{remarkOnRenesBound}):

\begin{thm}[Lower bounds on concavity of von Neumann entropy]\label{vNconcavityimprovement}
Let $\rho_i\in{\mathcal B}(\mathbb C^d)$ be quantum states for $i=1,\ldots,n$ and $\{p_i\}_{i=1}^n$ be a probability distribution. Then:
\begin{align}
H\left(\sum_{i=1}^np_i\rho_i\right)&-\sum_{i=1}^np_iH(\rho_i)\nonumber\\
&=H(\{p_i\})-D\Big(\sum_{i,j=1}^n\sqrt{p_ip_j}|i\rangle\langle j|\otimes\sqrt{\rho_i}\sqrt{\rho_j}\Big\|\sum_{i=1}^np_i|i\rangle\langle i|\otimes\rho_i\Big)\label{equalityinconcavity}\\
&\geq H(\{p_i\})-\log\Big(1+2\sum_{1\leq i<j\leq n}\sqrt{p_ip_j}{\rm tr}[\sqrt{\rho_i}\sqrt{\rho_j}]\Big)\label{concavityLBwithSQRT}\\
&\geq H(\{p_i\})-\log\Big(1+2\sum_{1\leq i<j\leq n}\sqrt{p_ip_j}F(\rho_i,\rho_j)\Big).\label{concavityLBwithF}
\end{align}
\end{thm}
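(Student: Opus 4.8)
The plan is to read the left-hand side as the Holevo/mutual information of the classical--quantum state $\sigma_{XB}=\sum_i p_i|i\rangle\langle i|\otimes\rho_i$, and to rewrite it through the auxiliary operator
\[
\omega_{XB}=\sum_{i,j}\sqrt{p_ip_j}\,|i\rangle\langle j|\otimes\sqrt{\rho_i}\sqrt{\rho_j}
\]
that already appears in (\ref{equalityinconcavity}). The three displayed lines are then established in order: first the exact identity, then the R\'enyi-2 upper bound on the relative entropy, then the passage from $\tr[\sqrt{\rho_i}\sqrt{\rho_j}]$ to the fidelity.

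For the equality (\ref{equalityinconcavity}) I would first observe that $\omega_{XB}$ is the marginal of the pure state $|\Omega\rangle_{XBB'}=\sum_i\sqrt{p_i}\,|i\rangle\otimes|\sqrt{\rho_i}\rangle_{BB'}$, where $|\sqrt{\rho_i}\rangle$ is the vectorisation of $\sqrt{\rho_i}$, chosen so that $\tr_{B'}|\sqrt{\rho_i}\rangle\langle\sqrt{\rho_j}|=\sqrt{\rho_i}\sqrt{\rho_j}$; one checks $\langle\Omega|\Omega\rangle=\sum_i p_i\tr\rho_i=1$. Since $\omega_{XB}$ and $\omega_{B'}=\tr_{XB}|\Omega\rangle\langle\Omega|=\sum_i p_i\,\overline{\rho_i}$ are marginals of the same pure state, $H(\omega_{XB})=H(\omega_{B'})=H(\sum_i p_i\rho_i)$, where the last step uses that complex conjugation preserves the spectrum. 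On the other hand, because $\sigma_{XB}$ is block-diagonal in the register basis we have $\log\sigma=\sum_k|k\rangle\langle k|\otimes(\log p_k+\log\rho_k)$, and only the diagonal blocks of $\omega$ survive the trace, so $-\tr[\omega\log\sigma]=H(\{p_i\})+\sum_i p_iH(\rho_i)$. Combining, $D(\omega\|\sigma)=-H(\omega_{XB})-\tr[\omega\log\sigma]=H(\{p_i\})+\sum_i p_iH(\rho_i)-H(\sum_i p_i\rho_i)$, which is a rearrangement of (\ref{equalityinconcavity}); in particular $D(\omega\|\sigma)$ is finite, so $\operatorname{supp}\omega\subseteq\operatorname{supp}\sigma$ holds automatically.

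For (\ref{concavityLBwithSQRT}) I would bound $D(\omega\|\sigma)$ from above using monotonicity of the sandwiched R\'enyi relative entropies $\alpha\mapsto\widetilde D_\alpha(\omega\|\sigma)$ together with $\widetilde D_1=D$, to obtain $D(\omega\|\sigma)\le\widetilde D_2(\omega\|\sigma)=\log\tr[\omega\sigma^{-1/2}\omega\sigma^{-1/2}]$. The key simplification is that $\sigma^{-1/2}=\sum_k p_k^{-1/2}|k\rangle\langle k|\otimes\rho_k^{-1/2}$, and the factor $\sqrt{\rho_j}\rho_j^{-1/2}$ collapses to the projector onto the support of $\rho_j$, so on the relevant subspace $\omega\sigma^{-1/2}=\sum_{i,j}\sqrt{p_i}\,|i\rangle\langle j|\otimes\sqrt{\rho_i}$; squaring and tracing then gives $\tr[\omega\sigma^{-1/2}\omega\sigma^{-1/2}]=\sum_{i,j}\sqrt{p_ip_j}\tr[\sqrt{\rho_i}\sqrt{\rho_j}]=1+2\sum_{i<j}\sqrt{p_ip_j}\tr[\sqrt{\rho_i}\sqrt{\rho_j}]$, which is exactly (\ref{concavityLBwithSQRT}). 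Finally (\ref{concavityLBwithF}) is immediate: for any operator $A$ one has $\tr A\le|\tr A|\le\|A\|_1$, so with $A=\sqrt{\rho_i}\sqrt{\rho_j}$ and $F(\rho_i,\rho_j)=\|\sqrt{\rho_i}\sqrt{\rho_j}\|_1$ we get $\tr[\sqrt{\rho_i}\sqrt{\rho_j}]\le F(\rho_i,\rho_j)$; since the coefficients $\sqrt{p_ip_j}$ are nonnegative and $\log$ is increasing, this only enlarges the right-hand side of (\ref{concavityLBwithSQRT}).

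I expect the main obstacle to be the first step: identifying the right auxiliary operator $\omega_{XB}$ and recognising it as a pure-state marginal, which is what makes the relative-entropy identity (\ref{equalityinconcavity}) \emph{exact} and, crucially, produces the off-diagonal coherences $\sqrt{\rho_i}\sqrt{\rho_j}$ that reappear in the R\'enyi-2 computation. Once this state is in hand, the remaining steps are assembly of standard tools (R\'enyi monotonicity and the trace-norm characterisation of fidelity); the only technical care needed is handling rank-deficient $\rho_i$ via support projectors or generalised inverses, which does not affect the final inequalities.
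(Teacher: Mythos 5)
Your proposal is correct and follows essentially the same route as the paper's proof: the auxiliary operator $\sum_{i,j}\sqrt{p_ip_j}\,|i\rangle\langle j|\otimes\sqrt{\rho_i}\sqrt{\rho_j}$ is exactly the pure-state marginal $\psi_{BC}$ used there (your vectorisation $|\sqrt{\rho_i}\rangle$ is the paper's $(\mathbbm1\otimes\sqrt{\rho_i})|\Omega\rangle$ up to relabelling, with the transpose/conjugation handled the same way), the equality is obtained by the same relative-entropy-to-the-pinched-state computation, and the two inequalities use the identical tools ($D\leq \widetilde D_2$ and $\mathrm{tr}[\sqrt{\rho_i}\sqrt{\rho_j}]\leq F(\rho_i,\rho_j)$). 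The only differences are presentational (you compute $\mathrm{tr}[\omega\log\sigma]$ directly rather than via self-adjointness of the pinching map), and your handling of rank-deficient states via support projectors is sound.
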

\begin{proof}We will obtain the equality (\ref{equalityinconcavity}) by keeping track of the gap term in the proof of the upper bound on the concavity in \cite[Theorem 11.10]{NC00}, and the further inequalities by bounding the relative entropy from above. For the proof, define $\rho:=\sum_{i=1}^np_i\rho_i$.

Denote by $|\Omega\rangle_{AC}:=\sum_{i=1}^d|i\rangle_A\otimes|i\rangle_C$ the (unnormalized) maximally entangled state between two systems $A$ and $C$ of dimension $d$. Then $|\phi_i\rangle_{AC}:=(\mathbbm1\otimes\sqrt{\rho_i})|\Omega\rangle_{AC}$ are purifications of the $\rho_i$ in the sense that ${\rm tr}_A[|\phi_i\rangle\langle\phi_i|_{AC}]=\rho_i$. We also have ${\rm tr}_C[|\phi_i\rangle\langle\phi_i|_{AC}]=\rho_i^T$, where $^T$ denotes the transposition w.r.t.\ the basis $\{|i\rangle\}_A$. For a system $B$ of dimension $n$ with orthonormal basis $\{|i\rangle_B\}_{i=1}^n$, the state
\begin{align*}
|\psi\rangle_{ABC}:=\sum_{i=1}^n\sqrt{p_i}|i\rangle_B\otimes|\phi_i\rangle_{AC}
\end{align*}
is therefore a purification of $\rho^T$ in the sense that $\rho^T=\psi_A:={\rm tr}_{BC}[\psi_{ABC}]$, where we have defined $\psi_{ABC}:=|\psi\rangle\langle\psi|_{ABC}$. Since the transposition leaves the spectrum invariant, we have $S(\rho)=S(\rho^T)=S(\psi_A)=S(\psi_{BC})$, where
\begin{align*}
\psi_{BC}:={\rm tr}_A[\psi_{ABC}]=\sum_{i,j=1}^n\sqrt{p_ip_j}|i\rangle\langle j|_B\otimes(\sqrt{\rho_i}\sqrt{\rho_j})_C.
\end{align*}

Consider now the map $P_B(X):=\sum_{i=1}^n|i\rangle\langle i|_BX|i\rangle\langle i|_B$ acting on subsystem $B$, such that $P_B(\psi_{BC})=\sum_{i=1}^np_i|i\rangle\langle i|\otimes\rho_i$. Note that $P_B=P_B^*$ represents a projective measurement on $B$ and is selfadjoint w.r.t.\ the Hilbert-Schmidt inner product. We can therefore write:
\begin{align*}
D(\psi_{BC}\|P_B(\psi_{BC}))&=-H(\psi_{BC})-{\rm tr}[\psi_{BC}\log P_B(\psi_{BC})]\\
&=-H(\psi_{BC})-{\rm tr}[P_B(\psi_{BC})\log P_B(\psi_{BC})]\\
&=-H(\rho)+H(P_B(\psi_{BC}))\\
&=-H(\rho)+H(\{p_i\})+\sum_{i=1}^np_iS(\rho_i),
\end{align*}
which proves the equality (\ref{equalityinconcavity}).

To obtain the lower bound (\ref{concavityLBwithSQRT}), we bound the relative entropy from above by the \emph{sandwiched Renyi-$\alpha$ divergence} of order $\alpha=2$ \cite{Wilde2014asdf,muller2013quantum,tomamichel2015quantum}:
\begin{align*}
D(\psi_{BC}\|P_B(\psi_{BC}))&\leq D_2(\psi_{BC}\|P_B(\psi_{BC}))=\log{\rm tr}[(P_B(\psi_{BC}))^{-1/2}\psi_{BC}(P_B(\psi_{BC}))^{-1/2}\psi_{BC}].
\end{align*}
Note that the sandwiched Renyi divergences are the \emph{minimal} quantum generalizations of the classical Renyi-$\alpha$ divergences \cite{tomamichel2015quantum}, which will be advantageous to obtain a good lower bound. We can continue by using the explicit forms of $\psi_{BC}$ and $P_B(\psi_{BC})$ from above:
\begin{align*}
D(\psi_{BC}\|P_B(\psi_{BC}))&\leq\log{\rm tr}\Big[\Big(\sum_{i,j=1}^n|i\rangle\langle j|_B\otimes\mathbbm1_C\Big)\Big(\sum_{k,l=1}^n\sqrt{p_kp_l}|k\rangle\langle l|_B\otimes\sqrt{\rho_k}\sqrt{\rho_l}\Big)\Big]\\
&=\log\Big(\sum_{i,j=1}^n\sqrt{p_ip_j}{\rm tr}[\sqrt{\rho_i}\sqrt{\rho_j}]\Big),
\end{align*}
which agrees with (\ref{concavityLBwithSQRT}) since the terms with $i=j$ sum to $\sum_{i=1}^np_i{\rm tr}[\rho_i]=1$. The final bound (\ref{concavityLBwithF}) is obtained by noting that $F(\rho_i,\rho_j)=\|\sqrt{\rho_i}\sqrt{\rho_j}\|_1\geq{\rm tr}[\sqrt{\rho_i}\sqrt{\rho_j}]$ holds for any quantum states \cite{NC00,audenaert2012comparisons}.
\end{proof}

\begin{remark}[Upper bounds on concavity of von Neumann entropy]
The equality (\ref{equalityinconcavity}) in Theorem \ref{vNconcavityimprovement} can also be used to obtain \emph{upper} bounds on the concavity of von Neumann entropy: As opposed to the proof of Theorem \ref{vNconcavityimprovement}, where we used the upper bound $D\leq D_2$ involving the sandwiched Renyi-$2$ divergence, one could bound the relative entropy $D$ from below, e.g.\ using the Pinsker inequality~\cite{W13} or using a smaller divergence measure such as one of the various Renyi-$\alpha$ divergences with parameter $\alpha\in[0,1)$.
\end{remark}

We will later need the special case $n=2$ of Theorem \ref{vNconcavityimprovement} with uniform probabilities $\{p_i\}$ together with a bound from \cite{PhysRevLett.105.040505}, in order to obtain a bound on the fidelity parameter $f$ in terms of the channel entropy $H$ for binary-input classical-quantum channels.
\begin{thm}[Relation between fidelity parameter and channel entropy]\label{tightRelationFHtheorem}
Let $\sigma_0,\sigma_1\in{\mathcal B}(\mathbb C^d)$ be quantum states, and define $f:=F(\sigma_0,\sigma_1)$ and $H=\log2-H((\sigma_0+\sigma_1)/2)+(H(\sigma_0)+H(\sigma_1))/2=H(X|B)$, where $H(X|B)$ is evaluated on the state $\frac{1}{2}|0\rangle\langle0|_X\otimes(\sigma_0)_B+\frac{1}{2}|1\rangle\langle1|_X\otimes(\sigma_1)_B$. Then the following bound holds:
\begin{align}\label{tightrelationHf}
e^H-1\leq f\leq1-2h_2^{-1}(\log2-H),
\end{align}
where $h_2^{-1}:[0,\log2]\to[0,1/2]$ is the the inverse of the binary entropy function.
\end{thm}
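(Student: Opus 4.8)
The plan is to prove the two bounds separately, identifying throughout the quantity $\log2-H$ with the Holevo information (symmetric mutual information) of the binary ensemble $\{(\tfrac12,\sigma_0),(\tfrac12,\sigma_1)\}$. Writing $\bar\sigma:=(\sigma_0+\sigma_1)/2$, the definition of $H$ gives immediately
\[
\log2-H=H(\bar\sigma)-\tfrac12\big(H(\sigma_0)+H(\sigma_1)\big)=I(X:B)=:\chi,
\]
so the whole statement reduces to sandwiching $\chi$ between two explicit functions of $f$.

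For the lower bound $e^H-1\le f$, I would apply Theorem~\ref{vNconcavityimprovement} in the special case $n=2$, $p_1=p_2=\tfrac12$, $\rho_1=\sigma_0$, $\rho_2=\sigma_1$. The bound (\ref{concavityLBwithF}) then reads
\[
H(\bar\sigma)-\tfrac12\big(H(\sigma_0)+H(\sigma_1)\big)\ \ge\ \log2-\log\!\big(1+f\big),
\]
since the single off-diagonal term contributes $2\sqrt{p_1p_2}\,F(\sigma_0,\sigma_1)=f$. The left-hand side equals $\log2-H$, so rearranging yields $H\le\log(1+f)$, i.e.\ $e^H-1\le f$. This is a pure substitution, and I expect no difficulty here; by the abstract's remark it should be tight precisely in the low-fidelity regime.

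For the upper bound $f\le 1-2h_2^{-1}(\log2-H)$, the crux is the sharp upper estimate on the Holevo quantity $\chi\le h_2\!\big(\tfrac{1-f}{2}\big)$, which is the content I would import from \cite{PhysRevLett.105.040505}. A self-contained route proceeds via Uhlmann's theorem and data processing: pick purifications $|\psi_0\rangle,|\psi_1\rangle$ of $\sigma_0,\sigma_1$ with $|\langle\psi_0|\psi_1\rangle|=F(\sigma_0,\sigma_1)=f$ (possible by Uhlmann, see \cite{NC00}); the average of the two rank-one projectors has eigenvalues $\tfrac{1\pm f}{2}$, so the purified ensemble has Holevo quantity exactly $h_2\!\big(\tfrac{1-f}{2}\big)$, and since $\chi=\sum_i p_i\,D(\sigma_i\|\bar\sigma)$ cannot increase when the purifying system is traced out (monotonicity of relative entropy under CPTP maps), $\chi\le h_2\!\big(\tfrac{1-f}{2}\big)$ follows. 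Substituting $\chi=\log2-H$ and inverting $h_2$ — legitimate because $h_2^{-1}$ is monotone increasing on $[0,\log2]$, the argument $\log2-H$ lies in $[0,\log2]$, and $\tfrac{1-f}{2}\in[0,\tfrac12]$ — gives $h_2^{-1}(\log2-H)\le\tfrac{1-f}{2}$, i.e.\ $f\le 1-2h_2^{-1}(\log2-H)$.

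I expect the main obstacle to be the upper bound, specifically pinning down the sharp Holevo inequality $\chi\le h_2\!\big(\tfrac{1-f}{2}\big)$ with the \emph{correct} constant inside $h_2$: the lower bound is a one-line consequence of Theorem~\ref{vNconcavityimprovement}, whereas the upper bound rests on the nontrivial external estimate (or on the purification-plus-data-processing argument above), and one must be careful that it is the root fidelity $F=\|\sqrt{\sigma_0}\sqrt{\sigma_1}\|_1$ that enters. As a consistency check, both bounds coincide and are simultaneously tight at the endpoints $f\in\{0,1\}$ (where $H\in\{0,\log2\}$), and the upper bound is additionally tight for pure output states $\sigma_0,\sigma_1$ — equivalently, for duals of binary symmetric channels — since there $\chi=h_2\!\big(\tfrac{1-f}{2}\big)$ exactly.
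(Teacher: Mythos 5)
Your proposal is correct and follows essentially the same route as the paper: the lower bound $e^H-1\leq f$ is obtained exactly as in the paper by specializing Theorem~\ref{vNconcavityimprovement} to $n=2$ with uniform weights, and the upper bound rests on the same Holevo--fidelity estimate $\log2-H\leq h_2\bigl(\tfrac{1-f}{2}\bigr)$ that the paper imports from \cite{PhysRevLett.105.040505}, followed by the same monotone inversion of $h_2$. The only (welcome) addition is that you make the imported estimate self-contained via Uhlmann purifications achieving overlap $f$ together with monotonicity of $\chi=\sum_ip_iD(\sigma_i\|\bar\sigma)$ under the partial trace; this argument is sound, correctly uses the root fidelity $\|\sqrt{\sigma_0}\sqrt{\sigma_1}\|_1$, and does not change the structure of the proof.
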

\begin{proof}The lower bound follows immediately from Theorem \ref{vNconcavityimprovement} in the special case of $n=2$  states $\sigma_0,\sigma_1$ with equal probabilities $p_0=p_1=1/2$:
\begin{align*}
\log2-H=H\Big(\frac{\sigma_0+\sigma_1}{2}\Big)-\frac{H(\sigma_0)+H(\sigma_1)}{2}\geq\log2-\log(1+F(\sigma_0,\sigma_1))=\log2-\log(1+f).
\end{align*}
For the other direction, we need the following bound from \cite{PhysRevLett.105.040505}:
\begin{align*}
\log2-H=H\Big(\frac{\sigma_0+\sigma_1}{2}\Big)-\frac{H(\sigma_0)+H(\sigma_1)}{2}\leq h_2\Big(\frac{1-F(\sigma_0,\sigma_1)}{2}\Big)=h_2\Big(\frac{1-f}{2}\Big),
\end{align*}
where $h_2$ is the binary entropy function. The upper bound in (\ref{tightrelationHf}) follows now by noting that the inverse function $h_2^{-1}:[0,\log2 ]\to[0,1/2]$ is monotonically increasing.
\end{proof}

\begin{remark}\label{remarkonotherconcavitybounds}The main feature of the bound (\ref{tightrelationHf}) for our purposes is that it is tight on \emph{both} ends of the interval $H\in[0,\log2]$. Namely, the bound implies $H=0\Leftrightarrow f=0$ as well as $H=\log2\Leftrightarrow f=1$, see also Fig.\ \ref{figboundHf}. In particular, we are not aware of any previous bound showing that small fidelity $f\approx0$ implies $H$ to be close to $0$. Such a statement, however, is needed for our proofs of Theorems \ref{QuantumMrsGerberTheorem2DifferentStates} and \ref{QuantumMrsGerberTheorem1StateTwice} (see Eqs.\ (\ref{first-asymmetric-lower-bound-eqn}) and (\ref{minof2expressionsinproofofHH})).

In particular, the bound $\log2-H\geq\frac{1}{2}\left(\frac{1}{2}\|\sigma_0-\sigma_1\|_1\right)^2$, which is the main result of \cite{kim2014bounds}, can never yield any non-trivial information for $H\in[0,(\log2)-1/2)$ (i.e.\ near $f\approx0$), since its right-hand side will never exceed $\frac{1}{2}$. Using the Fuchs-van de Graaf inequality $\frac{1}{2}\|\sigma_0-\sigma_1\|_1\geq1-f$ \cite{NC00}, we would only obtain the bound $f\geq1-\sqrt{2(\log2-H)}$, which is also shown in Fig.\ \ref{figboundHf}.

Our lower bounds (\ref{concavityLBwithSQRT}) and (\ref{concavityLBwithF}) are generally good when the states $\rho_i$ are close to pairwise orthogonal: If $\max_{i\neq j}F(\rho_i,\rho_j)$ becomes close to $0$ then these lower bounds approach the value $H(\{p_i\})$, which is the value of the left-hand-side of the inequality for exactly pairwise orthogonal states $\rho_i$. Note however that the lower bounds (\ref{concavityLBwithSQRT}) and (\ref{concavityLBwithF}) can become negative and therefore trivial, e.g.\ when all states $\rho_i$ coincide (or have high pairwise fidelity) and the probability distribution $\{p_i\}$ is not uniform on its support. For a uniform probability distribution $\{p_i=1/n\}$ and any states $\rho_i$, the bounds (\ref{concavityLBwithSQRT}) and (\ref{concavityLBwithF}) are however always nonnegative (this case also covers Theorem \ref{tightRelationFHtheorem}).
\end{remark}
\begin{remark}\label{remarkOnAlexDaniel}
In \cite{MFW16} a different lower bound on the concavity of the von Neumann entropy was found which was shown to outperform the bound in \cite{kim2014bounds} in some cases. The bound is given in terms of the relative entropy, which can be easily bounded by $D(\rho || \sigma) \geq -2 \log{F(\rho, \sigma)}$, see \cite{muller2013quantum}. Nevertheless this bound can not be used in our general scenario since it becomes trivial whenever the involved states are pure. Note that this is not the case for our bound presented above. 
\end{remark}
\begin{remark}\label{remarkOnRenesBound}
Shortly before the initial submission of our paper we discovered that the bound from Theorem \ref{tightRelationFHtheorem} (the case of uniform input distribution) has recently been given in \cite{NR17}, also in the context of Polar codes. Our Theorem \ref{vNconcavityimprovement} is however more general, it constitutes an \emph{equality} form of the concavity of the von Neumann entropy which allows for convenient relaxations, and is valid for \emph{non-uniform} distributions. Furthermore a weaker bound can already be found in~\cite{SRDR13}.
\end{remark}

\begin{figure}[t!]
\centering
\includegraphics[trim=3.1cm 21.3cm 8.7cm 2.6cm, clip, scale=0.8]{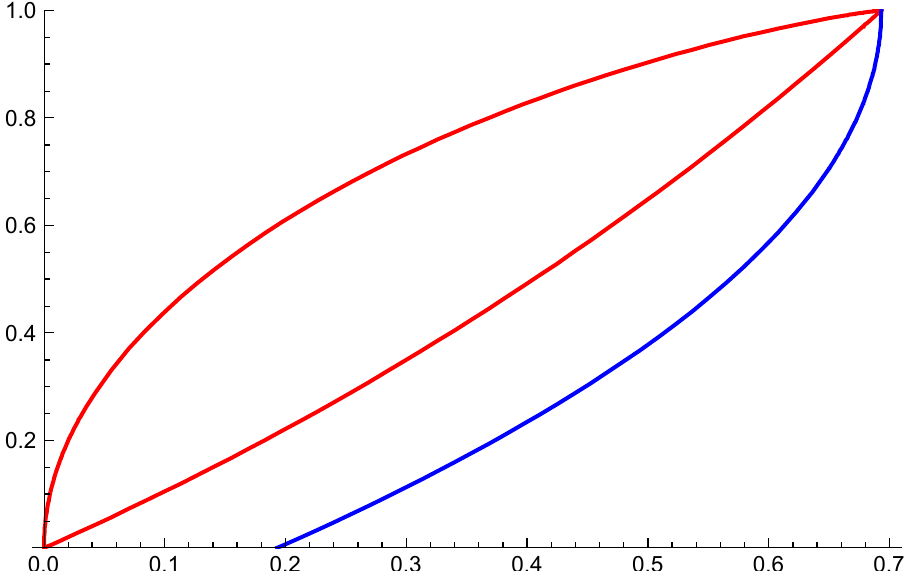}
\caption{\label{figboundHf}The $x$-axis is $H$ and the $y$-axis is $f$. The red curves show the upper and lower bounds from (\ref{tightrelationHf}), the blue curve the lower bound via \cite{kim2014bounds} (see Remark \ref{remarkonotherconcavitybounds}).}
\end{figure}

\section{Nontrivial bound for special case of Mrs.\ Gerber's Lemma with quantum side information}\label{lower-bound}

For general \emph{quantum} side information, we prove in this section nontrival lower bounds akin to the classical Mrs.\ Gerber's Lemma, albeit only for the special case when the a priori probabilities are uniform, i.e.\ $p(X_1=0)=p(X_2=0)=1/2$. This case is relevant for several applications, as we show in later sections. A conjecture of the optimal bound, also covering the case of nonuniform probabilities, is made in Section \ref{main}.

\begin{thm}[Mrs.\ Gerber's Lemma with quantum side information for uniform probabilities]\label{QuantumMrsGerberTheorem2DifferentStates}
Let $\rho^{X_1B_1}$ and $\rho^{X_2B_2}$ be \emph{independent} classical-quantum states carrying uniform a priori classical probabilites on the binary variables $X_1$, $X_2$, i.e.\
\begin{equation}
\rho^{X_jB_j}=\frac{1}{2}{\ketbra{}{0}{0}}_{X_j}\otimes\sigma_0^{B_j}+\frac{1}{2}{\ketbra{}{1}{1}}_{X_j}\otimes\sigma_1^{B_j},
\end{equation}
where $\sigma_i^{B_j}\in{\mathcal B}({\mathbb C}^{d_j})$ are quantum states on a $d_j$-dimensional Hilbert space ($i,j=1,2$). Denote their conditional entropies by $H_1 = H(X_1|B_1)$ and $H_2 = H(X_2|B_2)$, respectively. Then the following entropy inequality holds:
\begin{equation}
\begin{split}\label{firstFGmainresult}
H(X_1+X_2|B_1B_2)\geq\max \Big\{  & H_1-2\log\cos\left[\frac{1}{2}\arccos[(1-2h_2^{-1}(\log2-H_1))(e^{H_2}-1)]-\frac{1}{2}\arccos[e^{H_2}-1]\right]\,,\\
& H_2-2\log\cos\left[\frac{1}{2}\arccos[(1-2h_2^{-1}(\log2-H_2))(e^{H_1}-1)]-\frac{1}{2}\arccos[e^{H_1}-1]\right]\,,\\
& H_2-2\log\cos\left[\frac{1}{2}\arccos[(1-2h_2^{-1}(H_1))(2e^{-H_2}-1)]-\frac{1}{2}\arccos[2e^{-H_2}-1]\right]\,,\\
& H_1-2\log\cos\left[\frac{1}{2}\arccos[(1-2h_2^{-1}(H_2))(2e^{-H_1}-1)]-\frac{1}{2}\arccos[2e^{-H_1}-1]\right]\Big\}\,.
\end{split}
\end{equation}
\end{thm}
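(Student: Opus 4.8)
The plan is to establish only the \emph{first} of the four expressions in the maximum; the second is then the same statement with the two channels exchanged ($1\leftrightarrow2$), and the remaining two follow for free from channel duality. Write $\sigma_i:=\sigma_i^{B_1}$ and $\sigma_i':=\sigma_i^{B_2}$, and set $f_1:=F(\sigma_0,\sigma_1)$, $f_2:=F(\sigma_0',\sigma_1')$. Once the inequality
\[
H(X_1+X_2|B_1B_2)\geq H_1-2\log\cos\!\Big[\tfrac12\arccos[\bar f_1\underline f_2]-\tfrac12\arccos[\underline f_2]\Big],\qquad \bar f_1:=1-2h_2^{-1}(\log2-H_1),\ \underline f_2:=e^{H_2}-1,
\]
is proved, I would apply it to the dual channels $W_1^\bot,W_2^\bot$, whose entropies are $H(W_j^\bot)=\log2-H_j$ by Eq.\ (\ref{II1}), and insert the resulting bound on $H(W_1^\bot\boxast W_2^\bot)$ into the duality identity (\ref{eqntermafterperpequation}), $H(X_1+X_2|B_1B_2)=H_1+H_2-\log2+H(W_1^\bot\boxast W_2^\bot)$. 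The substitution $H_j\mapsto\log2-H_j$ sends $e^{H_j}-1\mapsto2e^{-H_j}-1$ and $1-2h_2^{-1}(\log2-H_j)\mapsto1-2h_2^{-1}(H_j)$, reproducing exactly the third and fourth terms.

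For the single inequality I would begin with the chain rule. Since the two cq-states are independent, conditioning $X_1+X_2$ on the extra classical bit $X_2$ turns it into $X_1$, so that $H(X_1+X_2|B_1B_2X_2)=H(X_1|B_1)=H_1$, giving the exact decomposition $H(X_1+X_2|B_1B_2)=H_1+I(X_1+X_2:X_2|B_1B_2)$. The task is thus to lower bound the conditional mutual information, and here I would invoke the Fawzi--Renner strengthening of strong subadditivity \cite{FR14}: with $A=X_1+X_2$, $B=B_1B_2$, $C=X_2$ it yields $I(A:C|B)\geq-2\log F^\ast$, where $F^\ast:=\sup_{\mathcal R}F(\rho_{ABC},\mathcal R_{B\to BC}(\rho_{AB}))$ is the best fidelity with which a recovery channel on $B_1B_2$ can reconstruct the discarded bit $X_2$. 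To convert this into a usable lower bound on $I$, one must bound $F^\ast$ from \emph{above}, uniformly over all recovery maps — this is the heart of the argument.

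To control $F^\ast$ I would use that both $A$ and $C$ are classical, so $\rho_{ABC}$ and $\mathcal R(\rho_{AB})$ are block-diagonal in $A$ and the overall fidelity becomes the uniform average $\tfrac12\sum_{s}F(\omega_s,\mathcal R(\tau_s))$, where the $\tau_s$ are the output states of $W_1\boxast W_2$ and the $\omega_s$ their $C$-labelled refinements; the difficulty is that a \emph{single} $\mathcal R$ must serve both branches $s=0,1$. Passing to the Bures angle $\arccos F(\cdot,\cdot)$, which is a genuine metric, I would use the triangle inequality to argue that one recovery map cannot sit closer than half the angular separation of its two targets, producing the characteristic half-angle. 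The two relevant angles are the separation $\arccos[f_1f_2]$ of the product branches $\sigma_0\otimes\sigma_0'$ and $\sigma_1\otimes\sigma_1'$ inside a fixed $\tau_s$ (using multiplicativity $F(\sigma_0\otimes\sigma_0',\sigma_1\otimes\sigma_1')=f_1f_2$) and the separation $\arccos[f_2]$ of the $B_2$-marginals $\sigma_0',\sigma_1'$ that alone carry the value of $X_2$; reducing to the extremal pure-state (BSC-dual) configuration should give $F^\ast\leq\cos[\tfrac12\arccos(f_1f_2)-\tfrac12\arccos(f_2)]$.

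Finally I would remove the fidelities in favour of entropies via Theorem \ref{tightRelationFHtheorem}, which gives $f_1\leq1-2h_2^{-1}(\log2-H_1)=\bar f_1$ and $f_2\geq e^{H_2}-1=\underline f_2$. After checking that $\cos[\tfrac12\arccos(f_1f_2)-\tfrac12\arccos(f_2)]$ is monotone in the needed directions (nondecreasing in $f_1$, nonincreasing in $f_2$ on the relevant range), the substitutions $f_1\mapsto\bar f_1$, $f_2\mapsto\underline f_2$ produce the first term, and taking the maximum over all four terms finishes the proof. I expect the main obstacle to be precisely the uniform control of the recovery-map optimization in Fawzi--Renner: establishing the clean geometric bound on $F^\ast$ requires both collapsing the supremum over all channels to a two-branch Bures-geometry problem and identifying the pure-state configuration as extremal. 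By comparison, the fidelity--entropy substitution and the duality bookkeeping are routine.
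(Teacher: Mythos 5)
Your overall architecture — the chain-rule identity $H(X_1+X_2|B_1B_2)=H_1+I(X_1+X_2:X_2|B_1B_2)$, the Fawzi--Renner bound, a Bures-angle triangle inequality, the fidelity--entropy conversion via Theorem \ref{tightRelationFHtheorem}, and the duality bookkeeping for the third and fourth terms — is exactly the paper's, and your monotonicity directions and substitutions $f_1\mapsto 1-2h_2^{-1}(\log2-H_1)$, $f_2\mapsto e^{H_2}-1$ are correct. However, there is a genuine gap in the central step, namely in \emph{which} Fawzi--Renner direction you choose and which classical register you decompose over. You take the recovery map $\mathcal R_{B\to BC}$ reconstructing $C=X_2$ and decompose the fidelity over the values $s$ of $A=X_1+X_2$. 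In that decomposition the two targets are $\omega_0=\frac12|0\rangle\langle0|_C\otimes\sigma_0\otimes\sigma_0'+\frac12|1\rangle\langle1|_C\otimes\sigma_1\otimes\sigma_1'$ and $\omega_1$ (with $\sigma_0,\sigma_1$ swapped on $B_1$), which satisfy $F(\omega_0,\omega_1)=f_1$, while the recovery inputs are the $\boxast$-outputs $\tau_s=\frac12(\sigma_s\otimes\sigma_0'+\sigma_{1-s}\otimes\sigma_1')$, which by strong concavity of the fidelity satisfy $F(\tau_0,\tau_1)\geq f_1$. The triangle-inequality path $\omega_0\to\mathcal R(\tau_0)\to\mathcal R(\tau_1)\to\omega_1$ therefore yields only $\arccos F(\omega_0,\omega_1)-\arccos F(\tau_0,\tau_1)\leq 0$: the bound collapses to the trivial $I\geq0$, and the angles $\arccos[f_1f_2]$ and $\arccos[f_2]$ you want never appear.

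The working choice is the opposite one: apply Fawzi--Renner with a recovery map $\mathcal R'_{B\to AB}$ acting on $\tau_{CB}$ and decompose over $C=X_2$. Then the recovery inputs conditioned on $X_2=c$ are $\overline\sigma^{B_1}\otimes\sigma_c^{B_2}$ with $\overline\sigma^{B_1}=\frac12(\sigma_0+\sigma_1)$ — they differ only on $B_2$, so data processing bounds the middle leg by $\arccos f_2$ — while the targets $\omega_c^{AB_1}\otimes\sigma_c^{B_2}$ have fidelity $F(\omega_0^{AB_1},\omega_1^{AB_1})\cdot f_2=f_1f_2$, giving the outer leg $\arccos[f_1f_2]$ and hence the strictly useful gap $\frac12\arccos[f_1f_2]-\frac12\arccos[f_2]\geq0$. (One also needs the concavity of $\arccos$ on $[0,1]$ to pass from the averaged fidelity to the averaged angle before applying the triangle inequality.) Relatedly, your appeal to ``reducing to the extremal pure-state (BSC-dual) configuration'' is not a step that can be carried out as stated and is not needed: with the correct decomposition the bound $F^*\leq\cos[\frac12\arccos(f_1f_2)-\frac12\arccos(f_2)]$ follows directly from concavity, the triangle inequality, and monotonicity of the fidelity under channels, with no extremality argument over state configurations.
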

\begin{proof}We first prove that $H(X_1+X_2|B_1B_2)$ is not smaller than the first expression in the $\max$ in (\ref{firstFGmainresult}). To begin with, note the following:
\begin{align}
I(X_1+X_2:X_2|B_1B_2)&=H(X_1+X_2|B_1B_2)+H(X_2|B_1B_2)-H(X_1+X_2,X_2|B_1B_2)\nonumber\\
&=H(X_1+X_2|B_1B_2)+H(X_2|B_1B_2)-H(X_1,X_2|B_1,B_2)\nonumber\\
&=H(X_1+X_2|B_1B_2)+H(X_2|B_2)-H(X_1|B_1)-H(X_2|B_2)\nonumber\\
&=H(X_1+X_2|B_1B_2)-H_1,\label{writewithIcondmutualinfo}
\end{align}
where the first equality is just the definition, the second uses that there is a bijective (or unitary) relation between $(X_1+X_2,X_2)$ and $(X_1,X_2)$ (namely, a CNOT gate), and the third uses (twice) that $X_1B_1$ and $X_2B_2$ are independent.

Whereas the strong subadditivity property of von Neumann entropy~\cite{Lieb2002, NC00} guarantees that $I(X_1+X_2:X_2|B_1B_2)\geq0$, and therefore $H(X_1+X_2|B_1B_2)-H_1$ is nonnegative, we employ a recently established breakthrough result by Fawzi and Renner \cite{FR14}, giving a nontrivial lower bound on the quantum conditional mutual information, in order to derive our inequality (\ref{firstFGmainresult}). The result in \cite{FR14} provides a lower bound based on the so called Fidelity of Recovery. This bound has later been improved in several ways, including the structure of the involved recovery map~\cite{JRSWW15, STH16, SBT16}, stronger bounds in terms of the measured relative entropy~\cite{BHOS15}, and providing an operational interpretation~\cite{CHMOSWW16}. 

To apply the inequality from \cite{FR14}, we introduce the quantum state $\tau_{ACB}$ with binary (classical) registers $A=X_1+X_2$ and $C=X_2$, and a quantum register $B=B_1B_2$:
\begin{align}
\tau_{ACB} &\equiv \tau_{(X_1+X_2)(X_2)(B_1B_2)} := {\rm CNOT}_{(X_1,X_2)\mapsto(X_1+X_2,X_2)}\big(\rho^{X_1B_1}\otimes\rho^{X_2B_2}\big)\\
&=\frac{1}{4}{\ketbra{}{0}{0}}_A\otimes{\ketbra{}{0}{0}}_C\otimes\sigma_0^{B_1}\otimes\sigma_0^{B_2}+\frac{1}{4}{\ketbra{}{1}{1}}_A\otimes{\ketbra{}{0}{0}}_C\otimes\sigma_1^{B_1}\otimes\sigma_0^{B_2}\nonumber\\
&\quad+\frac{1}{4}{\ketbra{}{1}{1}}_A\otimes{\ketbra{}{1}{1}}_C\otimes\sigma_0^{B_1}\otimes\sigma_1^{B_2}+\frac{1}{4}{\ketbra{}{0}{0}}_A\otimes{\ketbra{}{1}{1}}_C\otimes\sigma_1^{B_1}\otimes\sigma_1^{B_2}\nonumber\\
&=\frac{1}{2}{\ketbra{}{0}{0}}_C\otimes\omega_0^{AB_1}\otimes\sigma_0^{B_2}+\frac{1}{2}{\ketbra{}{1}{1}}_C\otimes\omega_1^{AB_1}\otimes\sigma_1^{B_2},
\end{align}
where we defined in the last line $\omega_0^{AB_1}:=\frac{1}{2}({\ketbra{}{0}{0}}_A\otimes\sigma^{B_1}_0+{\ketbra{}{1}{1}}_A\otimes\sigma^{B_1}_1)$ and $\omega_1^{AB_1}:=\frac{1}{2}({\ketbra{}{0}{0}}_A\otimes\sigma^{B_1}_1+{\ketbra{}{1}{1}}_A\otimes\sigma^{B_1}_0)$ for later convenience. The main result of \cite{FR14} now says that there exists a quantum channel ${\mathcal R}'_{B\to AB}$ such that the following inequality holds:
\begin{align}
H(X_1+X_2|B_1B_2)-H_1&=I(A:C|B)_\tau\nonumber\\&\geq-2 \log F(\tau_{ACB}, \mathcal R'_{B\rightarrow AB}(\tau_{CB}))\label{fawzi-renner-bound-in-derivation} \\
&=-2\log\left[\frac{1}{2}F(\omega_0^{AB_1}\otimes\sigma_0^{B_2},{\mathcal R}'_{B\to AB}(\overline{\sigma}^{B_1}\otimes\sigma_0^{B_2}))+\frac{1}{2}F(\omega_1^{AB_1}\otimes\sigma_1^{B_2},{\mathcal R}'_{B\to AB}(\overline{\sigma}^{B_1}\otimes\sigma_1^{B_2}))\right]\nonumber  \\
&= -2\log\left[\frac{1}{2} F(\omega_0^{AB_1} \otimes \sigma_0^{B_2} , \mathcal R_{B_2\to AB}(\sigma^{B_2}_0)) + \frac{1}{2}F(\omega_1^{AB_1} \otimes \sigma_1^{B_2} , \mathcal R_{B_2\to AB}(\sigma^{B_2}_1)) \right],\label{minus2logAvgF}
\end{align}
where we introduced $\overline{\sigma}^{B_1}:=\frac{1}{2}(\sigma_0^{B_1}+\sigma_1^{B_1})$ and used in the third line that both $\tau_{ACB}$ and ${\mathcal R}'_{B\to AB}(\tau_{CB})$ are block-diagonal on the $C$-system to partially evaluate the fidelity, and in the fourth line defined the quantum channel ${\mathcal R}_{B_2\to AB}(\sigma_{B_2}):={\mathcal R}'_{B\to AB}(\overline{\sigma}^{B_1}\otimes\sigma_{B_2})$.

To obtain a nontrivial lower bound on $H(X_1+X_2|B_1B_2)-H_1$, we now derive a nontrivial upper bound on the expression in the square brackets in (\ref{minus2logAvgF}). Our derivation will involve a triangle inequality on the set of quantum states in order to ``join'' the two states ${\mathcal R}_{B_2\to AB}(\sigma_{0,1}^{B_2})$ occurring in this expression. There are various ways to turn the quantum fidelity $F$ into a metric (in particular, to satisfy the triangle inequality)~\cite{tomamichel2015quantum}, e.g.\ the \emph{geodesic distance} $A(\rho,\sigma):=\arccos F(\rho,\sigma)$~\cite{NC00}, the \emph{Bures metric} $B(\rho,\sigma):=\sqrt{1-F(\rho,\sigma)}$~\cite{B69}, or the \emph{purified distance} $P(\rho,\sigma):=\sqrt{1-F(\rho,\sigma)^2}$~\cite{GLN05,R02}. The following derivation can be done analogously with either of the three, but in the end the best bound will follow via the geodesic distance $A$, which we therefore use.

Using the concavity of the $\arccos$ function on the interval $[0,1]$ in the first step and abbreviating ${\mathcal R}:={\mathcal R}_{B_2\to AB}$, we obtain:
\begin{equation}
\begin{split}\label{concavity-triangle-monotonicity}
&\!\!\!\!\!\!\!\!\!\!\!\arccos\left[\frac{1}{2} F(\omega_0^{AB_1} \otimes \sigma_0^{B_2} , \mathcal R(\sigma^{B_2}_0)) + \frac{1}{2}F(\omega_1^{AB_1} \otimes \sigma_1^{B_2} , \mathcal R(\sigma^{B_2}_1)) \right]\\
&\geq \frac{1}{2} A(\omega_0^{AB_1} \otimes \sigma^{B_2}_0 , \mathcal R(\sigma^{B_2}_0)) +\frac{1}{2} A(\omega^{AB_1}_1 \otimes \sigma^{B_2}_1 , \mathcal R(\sigma^{B_2}_1)) \\
&\geq \frac{1}{2} A(\omega_0^{AB_1} \otimes \sigma_0^{B_2} , \omega_1^{AB_1} \otimes \sigma_1^{B_2}) - \frac{1}{2}A(\mathcal R(\sigma_0^{B_2}), \mathcal R(\sigma_1^{B_2}))\\
&\geq \frac{1}{2} \arccos[ F(\omega_0^{AB_1},\omega_1^{AB_1}) F(\sigma_0^{B_2},\sigma_1^{B_2}) ] - \frac{1}{2}A(\sigma_0^{B_2},\sigma_1^{B_2}) \\
&=\frac{1}{2}\arccos[F(\sigma_0^{B_1},\sigma_1^{B_1}) F(\sigma_0^{B_2},\sigma_1^{B_2}) ] - \frac{1}{2}A(\sigma_0^{B_2},\sigma_1^{B_2}) \\
&= \frac{1}{2} \arccos[fg]  - \frac{1}{2} \arccos g,
\end{split}
\end{equation}
where in the third line we used the triangle inequality along the path $\omega_0^{AB_1}\otimes\sigma_0^{B_2}\to{\mathcal R}(\sigma_0^{B_2})\to{\mathcal R}(\sigma_1^{B_2})\to\omega_1^{AB_1}\otimes\sigma_1^{B_2}$, in the fourth line we used that the fidelity is nondecreasing under quantum channels and multiplicative on tensor product states, and in the last two lines we evaluted and abbreviated $F(\omega_0^{AB_1},\omega_1^{AB_1})=F(\sigma_0^{B_1},\sigma_1^{B_1})=:f$ and $F(\sigma_0^{B_2},\sigma_1^{B_2})=:g$. Since the $\arccos$ function is nonincreasing in $[0,1]$, the last chain of inequalities yields an upper bound on the expression in square brackets in (\ref{minus2logAvgF}), and therefore:
\begin{align}
H(X_1+X_2|B_1B_2)-H_1\geq-2\log\cos\left[\frac{1}{2}\arccos[fg]-\frac{1}{2}\arccos g\right].\label{lowerboundWithfg}
\end{align}

As the last step, it is easy to verify that the right-hand-side of the inequality (\ref{lowerboundWithfg}) is monotonically decreasing in $f\in[0,1]$ for each fixed $g\in[0,1]$, and monotonically increasing in $g$ for each fixed $f$. Therefore, in order to continue the lower bound (\ref{lowerboundWithfg}), we can replace $f$ by an upper bound on $F(\sigma_0^{B_1},\sigma_1^{B_1})$ that is consistent with the given value of $H_1=H(X_1|B_1)$; and similarly replace $g$ by a lower bound on $F(\sigma_0^{B_2},\sigma_1^{B_2})$ consistent with $H_2=H(X_2|B_2)$. Exactly such upper and lower bounds are given in Theorem \ref{tightRelationFHtheorem}, following from bounds on the concavity of the von Neumann entropy, and result in
\begin{align}\label{first-asymmetric-lower-bound-eqn}
H(X_1+X_2|B_1B_2)-H_1\geq-2\log\cos\left[\frac{1}{2}\arccos[(1-2h_2^{-1}(\log2-H_1))(e^{H_2}-1)]-\frac{1}{2}\arccos[e^{H_2}-1]\right]\,,
\end{align}
showing that the first expression in the $\max$ in (\ref{firstFGmainresult}) is indeed a lower bound on $H(X_1+X_2|B_1B_2)$.

The same reasoning with $\rho^{X_1B_1}$ and $\rho^{X_2B_2}$ interchanged shows that the second expression in the $\max$ in (\ref{firstFGmainresult}) is a lower bound on $H(X_1+X_2|B_1B_2)$ as well.

To show that the third expression in the $\max$ in (\ref{firstFGmainresult}) is a lower bound on $H(X_1+X_2|B_1B_2)$, we exploit the symmetries of binary input classical-quantum channels and their complementary channels under the polar transform. For this, we recall from Section~\ref{duality} that 
\begin{equation}\label{eqntermafterperpequationrepeat}
H(X_1+X_2|B_1B_2)=H(W_1\boxast W_2)=H_1+H_2-\log2+H(W_1^{\bot}\boxast W_2^{\bot}).
\end{equation}
Thus, we can obtain another lower bound on $H(X_1+X_2|B_1B_2)$ by bounding the term $H(W_1^{\bot}\boxast W_2^{\bot})$ from below using the first expression in the $\max$ in (\ref{firstFGmainresult}); this gives the following lower bound on $H(X_1+X_2|B_1B_2)$:
\begin{align}
\ldots\geq H_1+H_2-\log2+H(W_1^{\bot})-2\log\cos\left[\frac{1}{2}\arccos[(1-2h_2^{-1}(\log2-H(W_1^{\bot}))(e^{H(W_2^{\bot})}-1)]-\frac{1}{2}\arccos[e^{H(W_2^{\bot})}-1]\right]\,,\nonumber
\end{align}
which is exactly the third expression in the $\max$ in (\ref{firstFGmainresult}) as, again by (\ref{II1}), the channels $W_j^{\bot}$ satisfy $H(W_j^{\bot})=\log2-H_j$. That the fourth expression in the $\max$ in (\ref{firstFGmainresult}) is a lower bound on $H(X_1+X_2|B_1B_2)$ is obtained similarly from (\ref{eqntermafterperpequationrepeat}) by bounding the term $H(W_1^{\bot}\boxast W_2^{\bot})$ from below using the second expression in the $\max$ in (\ref{firstFGmainresult}).
\end{proof} 

\begin{remark}\label{equality-condition-for-proof} Since the $\arccos$ function is stricly monotonically decreasing in $[0,1]$, one can see from the first expression in the $\max$ in (\ref{firstFGmainresult}) (cf.\ also (\ref{lowerboundWithfg})) that $H(X_1+X_2|B_1B_2)=H_1$ is possible only if $H_1=\log2$ or $H_2=0$. Conversely, if $H_1=\log2$ or $H_2=0$ then actually $H(X_1+X_2|B_1B_2)=H_1$ since: {\it(a)} $H_1\leq H(X_1+X_2|B_1B_2)$ holds due to (\ref{writewithIcondmutualinfo}) along with strong subadditivity; {\it(b)} $H(X_1+X_2|B_1B_2)\leq\log2$ holds as $X_1+X_2$ is a binary register; {\it(c)} and we have, similarly to (\ref{writewithIcondmutualinfo}) together with the fact that the conditional entropy $H(X_2|X_1+X_2,B_1B_2)$ of a classical system is nonnegative:
\begin{align*}
H(X_1+X_2|B_1B_2)&\leq H(X_1+X_2|B_1B_2)_\tau+H(X_2|X_1+X_2,B_1B_2)_\tau\\
&=H(X_1+X_2,X_2|B_1B_2)_\tau\\
&=H(X_1,X_2|B_1,B_2)\\
&=H(X_1|B_1)+H(X_2|B_2)=H_1+H_2.
\end{align*}
Analogously, $H(X_1+X_2|B_1B_2)=H_2$ if and only if $H_1=0$ or $H_2=\log2$. Thus, the inequality $H(X_1+X_2|B_2B_2)\geq\max\{H_1,H_2\}$ holds with equality if and only if $H_1\in\{0,\log2\}$ or $H_2\in\{0,\log2\}$. Therefore, the inequality $H(X_1+X_2|B_1B_2)\geq(H_1+H_2)/2$ holds with equality if and only if $H_1=H_2\in\{0,\log2\}$.
\end{remark}

The lower bound (\ref{firstFGmainresult}) from Theorem \ref{QuantumMrsGerberTheorem2DifferentStates} is illustrated in Fig.\ \ref{fig-bound-with-h1h2}.

\begin{figure}[t!]
\centering
\includegraphics[trim=0.2cm 0.1cm 0.2cm 0.0cm, clip, scale=0.43]{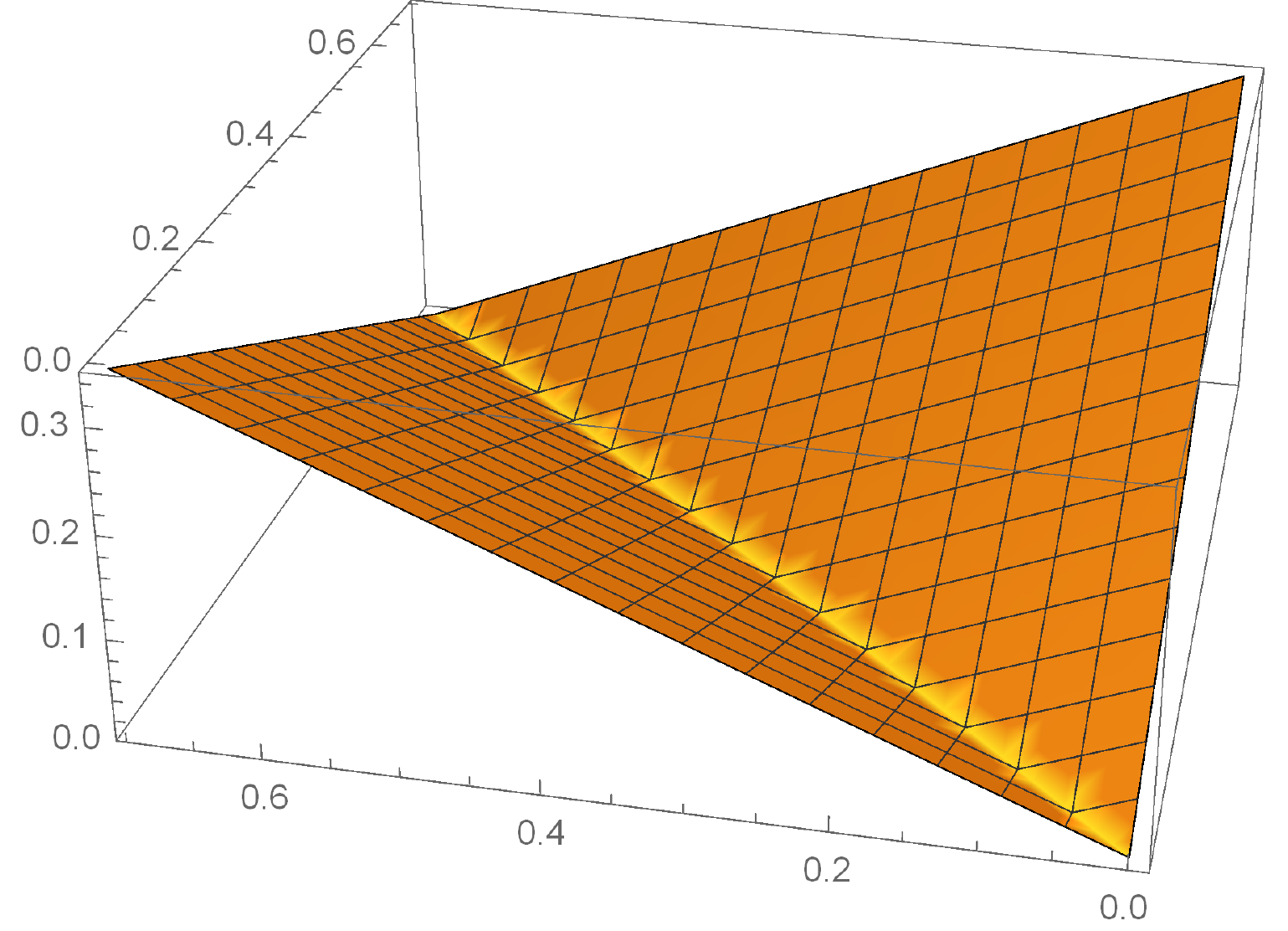}\hspace{1.5cm}\includegraphics[trim=0.0cm 0.3cm 0.0cm 0.1cm, clip, scale=0.44]{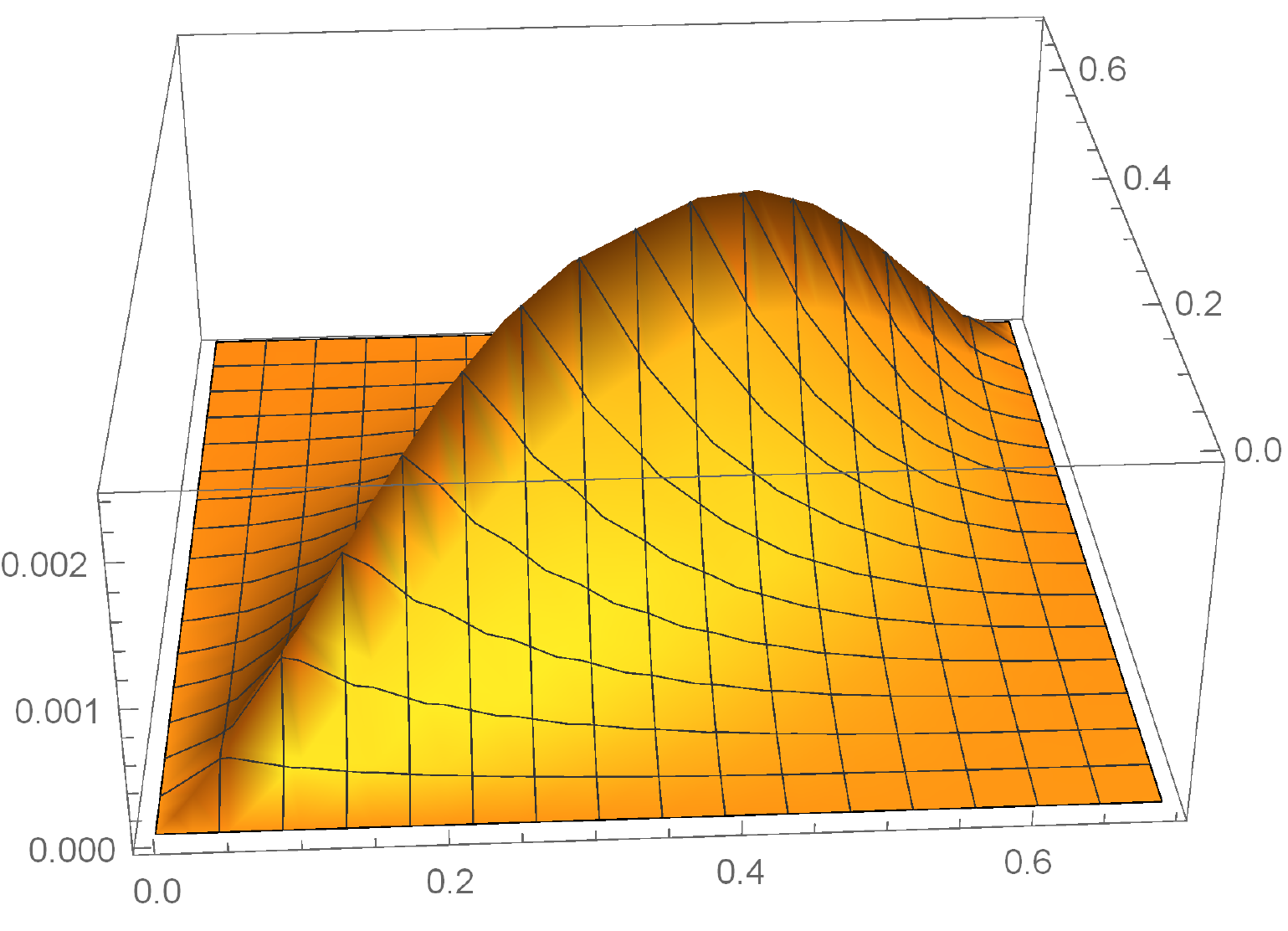}

\caption{\label{fig-bound-with-h1h2}The left plot shows the lower bound on $H(X_1+X_2|B_1B_2)-\frac{1}{2}(H_1+H_2)$ inferred from the bound (\ref{firstFGmainresult}) in Theorem \ref{QuantumMrsGerberTheorem2DifferentStates}, as a function of $(H_1,H_2)\in[0,\log2]\times[0,\log2]$; the right plot shows the lower bound on $H(X_1+X_2|B_1B_2)-\max\{H_1,H_2\}$ inferred from (\ref{firstFGmainresult}). The value of the bound along the diagonal line $H_1=H_2$ is shown again as the purple curve in Fig.\ \ref{fig-bound-with-h}.}
\end{figure}

In the important special case $\rho^{X_1B_1}=\rho^{X_2B_2}$ of Theorem \ref{QuantumMrsGerberTheorem2DifferentStates}, which will be useful e.g.\ for polar codes on i.i.d.\ channels, we can use the same proof idea to obtain a better bound:
\begin{thm}[Mrs.\ Gerber's Lemma with quantum side information on i.i.d.\ states for uniform probabilities]\label{QuantumMrsGerberTheorem1StateTwice}
Let $\rho^{X_1B_1}=\rho^{X_2B_2}$ be \emph{identical and independent} classical-quantum states carrying uniform a priori classical probabilites on the binary variables $X_1,X_2$, i.e.\
\begin{equation}
\rho^{X_1B_1}=\rho^{X_2B_2}=\frac{1}{2}{\ketbra{}{0}{0}}\otimes\sigma_0+\frac{1}{2}{\ketbra{}{1}{1}}\otimes\sigma_1,
\end{equation}
where $\sigma_i^{B_1}=\sigma_i^{B_2}=\sigma_i\in{\mathcal B}({\mathbb C}^{d})$ are quantum states on a $d$-dimensional Hilbert space ($i=1,2$). Denoting their conditional entropy by $H = H(X_1|B_1)=H(X_2|B_2)$, the following entropy inequality holds:
\begin{align}\label{FonlyMainResult}
H(X_1+X_2|B_1B_2)&\geq\left\{\begin{array}{ll}H-2\log\cos\Big[\frac{1}{2}\arccos[(1-2h_2^{-1}(H))^2]-\frac{1}{2}\arccos[1-2h_2^{-1}(H)]\Big],&H\leq\frac{1}{2}\log2\\H-2\log\cos\Big[\frac{1}{2}\arccos[(1-2h_2^{-1}(\log2-H))^2]-\frac{1}{2}\arccos[1-2h_2^{-1}(\log2-H)]\Big],&H>\frac{1}{2}\log2\end{array}\right.\\
&\geq\left\{\begin{array}{ll}H+0.083\cdot\frac{H}{1-\log H},&H\leq\frac{1}{2}\log2\\H+0.083\cdot\frac{\log2-H}{1-\log(\log2-H)},&H>\frac{1}{2}\log2.\end{array}\right.\label{FonlyMoreConvenientLowerBound}
\end{align}
The expressions (\ref{FonlyMoreConvenientLowerBound}) assume $\log$ to be the natural logarithm.
\end{thm}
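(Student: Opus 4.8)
The plan is to reuse the machinery already developed for Theorem~\ref{QuantumMrsGerberTheorem2DifferentStates} and to exploit the one extra feature of the i.i.d.\ setting: the two fidelities appearing in that proof now coincide. Concretely, in the bound~(\ref{lowerboundWithfg}) the quantities $f=F(\sigma_0^{B_1},\sigma_1^{B_1})$ and $g=F(\sigma_0^{B_2},\sigma_1^{B_2})$ both equal a single channel fidelity $f_0=F(\sigma_0,\sigma_1)$, and $H_1=H_2=H$. Setting $\psi(t):=-2\log\cos[\tfrac12\arccos(t^2)-\tfrac12\arccos t]$, inequality~(\ref{lowerboundWithfg}) specializes to $H(X_1+X_2|B_1B_2)\ge H+\psi(f_0)$. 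First I would record the elementary properties of $\psi$ on $[0,1]$: it is continuous and nonnegative, vanishes exactly at $t\in\{0,1\}$, and (by differentiating the inner angle) is strictly increasing on $[0,1/\sqrt3]$ and strictly decreasing on $[1/\sqrt3,1]$, with a unique maximum at $t=1/\sqrt3$.

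Since the value of $f_0$ is unknown and only its relation to $H$ is available, I would next invoke Theorem~\ref{tightRelationFHtheorem} to constrain $f_0\in[e^{H}-1,\,1-2h_2^{-1}(\log2-H)]$ and replace $\psi(f_0)$ by its minimum over this interval; by unimodality this minimum is attained at an endpoint. For $H>\tfrac12\log2$ I would argue that the minimum sits at the \emph{upper} endpoint $b=1-2h_2^{-1}(\log2-H)$, which reproduces exactly the second case of~(\ref{FonlyMainResult}). For $H\le\tfrac12\log2$ I would not bound $H(X_1+X_2|B_1B_2)$ directly but instead pass through the duality identity~(\ref{eqntermafterperpequation}), $H(X_1+X_2|B_1B_2)=2H-\log2+H(W_1^{\bot}\boxast W_2^{\bot})$: the dual channels are again i.i.d.\ with entropy $H^{\bot}=\log2-H\ge\tfrac12\log2$, so applying the just-established upper-endpoint bound to $H(W_1^{\bot}\boxast W_2^{\bot})$ and using $H(W_j^{\bot})=\log2-H$ from~(\ref{II1}) collapses, after the arithmetic, to $H+\psi(1-2h_2^{-1}(H))$, i.e.\ the first case of~(\ref{FonlyMainResult}). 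This reduces the whole statement to a single claim about a channel with entropy $\tilde H\ge\tfrac12\log2$.

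The hard part will be exactly that claim: for $\tilde H\in[\tfrac12\log2,\log2]$ one must show $\psi(b)\le\psi(a)$ with $a=e^{\tilde H}-1$ and $b=1-2h_2^{-1}(\log2-\tilde H)$, so that the interval-minimum of $\psi$ really is $\psi(b)$ (otherwise the stated bound would be an overclaim). A case split helps here: one checks that $b\ge1/\sqrt3$ throughout, and that $a\ge1/\sqrt3$ already for $\tilde H\ge\log(1+1/\sqrt3)$; on that larger sub-range $\psi$ is monotonically decreasing on all of $[a,b]$, so $\min\psi=\psi(b)$ is immediate. Only on the short residual interval $\tilde H\in[\tfrac12\log2,\log(1+1/\sqrt3))$ does $[a,b]$ straddle the peak $1/\sqrt3$, and there the genuine comparison is required. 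Since $\psi$ is monotone in its inner angle, I would reduce this to $\arccos(b^2)-\arccos b\le\arccos(a^2)-\arccos a$ and settle it by a monotonicity estimate of $\tilde H\mapsto\psi(a(\tilde H))-\psi(b(\tilde H))$ on this compact sub-interval, with equality holding only at $\tilde H=\log2$ where $a=b=1$.

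Finally, to obtain the more convenient bound~(\ref{FonlyMoreConvenientLowerBound}) I would lower-bound the closed form from~(\ref{FonlyMainResult}). For $H\le\tfrac12\log2$ this amounts to the scalar inequality $\psi(1-2h_2^{-1}(H))\ge0.083\cdot\frac{H}{1-\log H}$, the case $H>\tfrac12\log2$ being identical after $H\mapsto\log2-H$. Writing $q=h_2^{-1}(H)$, the left side obeys the near-$1$ expansion $\psi(1-2q)=(3-2\sqrt2)\,q+o(q)$, while the estimate $h_2(q)\le q(1-\log q)$ controls how the right-hand closed form, whose denominator $1-\log H$ tracks $\log(1/q)$, compares to $q$; asymptotically as $H\to0$ the left side behaves like $0.17\,q$ and the right side like $0.083\,q$, so the constant $0.083$ leaves a comfortable factor. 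The remaining task is to promote this asymptotic domination to a bound valid on the entire compact interval $H\in(0,\tfrac12\log2]$, which is a routine if slightly tedious monotonicity estimate once the two scalar ingredients above are in hand.
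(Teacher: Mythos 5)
Your proposal follows essentially the same route as the paper's proof: specializing (\ref{lowerboundWithfg}) to $f=g$, exploiting the unimodality of $f\mapsto-2\log\cos[\frac12\arccos f^2-\frac12\arccos f]$ with peak at $1/\sqrt3$ to reduce to the two endpoints of the fidelity interval from Theorem~\ref{tightRelationFHtheorem}, establishing that the upper endpoint gives the minimum for $H\geq\frac12\log2$ via the same case split at $\log(1+1/\sqrt3)$ (the paper anchors the residual comparison by a numerical check at $H=0.33$ plus monotonicity of each piece, which matches your monotonicity-of-the-difference plan), transferring to $H\leq\frac12\log2$ by the same duality identity, and deriving (\ref{FonlyMoreConvenientLowerBound}) by the same kind of scalar comparison of $\psi(1-2h_2^{-1}(H))$ against $H/(1-\log H)$. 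The argument is correct and matches the paper in structure and in all key ingredients.
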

\begin{proof}We follow the proof of Theorem \ref{QuantumMrsGerberTheorem2DifferentStates} up until Eq.\ (\ref{lowerboundWithfg}), which now reads
\begin{align}
H(X_1+X_2|B_1B_2)-H\geq-2\log\cos\left[\frac{1}{2}\arccos[f^2]-\frac{1}{2}\arccos f\right]\label{lowerboundWithOnlyf}
\end{align}
with $f:=F(\sigma_0,\sigma_1)$. The right-hand-side of the last lower bound is monotonically increasing for $f\in[0,1/\sqrt{3}]$ and monotonically decreasing for $f\in[1/\sqrt{3},\log2]$ since these statements hold for the function $f\mapsto\frac{1}{2}\arccos[f^2]-\frac{1}{2}\arccos f$. Therefore, a lower bound based on $e^H-1\leq f\leq1-2h_2^{-1}(\log2-H)$ from Theorem \ref{tightRelationFHtheorem} can be obtained by evaluating (\ref{lowerboundWithOnlyf}) at those boundaries:
\begin{equation}
\begin{split}\label{minof2expressionsinproofofHH}
H(X_1+X_2|B_1B_2)-H\geq\min\Big\{&-2\log\cos\Big[\frac{1}{2}\arccos[(e^H-1)^2]-\frac{1}{2}\arccos[e^H-1]\Big],\\
&-2\log\cos\Big[\frac{1}{2}\arccos[(1-2h_2^{-1}(\log2-H))^2]-\frac{1}{2}\arccos[1-2h_2^{-1}(\log2-H)]\Big]\Big\}.
\end{split}
\end{equation}
Numerically, one sees that for $H\in[\frac{1}{2}\log2,\log2]$ (and even for $H\in[0.33,\log2]$), the minimum in the last expression is attained by the second term, which gives
\begin{align}
H(X_1+X_2|B_1B_2)\geq H-2\log\cos\Big[\frac{1}{2}\arccos[(1-2h_2^{-1}(\log2-H))^2]-\frac{1}{2}\arccos[1-2h_2^{-1}(\log2-H)]\Big]\label{proofoffirstselectorinequalHcase}
\end{align}
for $H\geq\frac{1}{2}\log2$, and shows the second selector in (\ref{FonlyMainResult}). Analytically, one can easily show this statement for $H\in[\log(1+1/\sqrt{3}),\log2]$, as this implies by Theorem \ref{tightRelationFHtheorem} that $f$ is in the range $f\in[1/\sqrt{3},1]$, where the function $f\mapsto\frac{1}{2}\arccos[f^2]-\frac{1}{2}\arccos f$ is monotonically decreasing and we have $e^H-1\leq1-2h_2^{-1}(\log2-H)$ by Theorem \ref{tightRelationFHtheorem}. The statement is also true for $H\in[0.33,\log(1+1/\sqrt{3})]$, for the following reason: First, the statement is easily numerically certified for $H=0.33$; second, the function that maps $H$ to the first expression in the minimum in (\ref{minof2expressionsinproofofHH}) is monotonically \emph{in}creasing for $H\in[0,\log(1+1/\sqrt{3})]$ since $H\mapsto e^H-1$ is increasing from $0$ to $1/\sqrt{3}$, where the right-hand-side of (\ref{lowerboundWithOnlyf}) is increasing in $f$; third, the function that maps $H$ to the second expression in the minimum in (\ref{minof2expressionsinproofofHH}) is monotonically \emph{de}creasing for $H\in[0.33,\log(1+1/\sqrt{3})]$ since the function $H\mapsto1-2h_2^{-1}(\log2-H)$ is increasing and not smaller than $1-2h_2^{-1}(\log2-0.33)\geq0.76\geq1/\sqrt{3}$, where the right-hand-side of (\ref{lowerboundWithOnlyf}) is decreasing in $f$.

To prove the first selector in (\ref{FonlyMainResult}), i.e.\ the case $H\leq\frac{1}{2}\log2$, we again use the reasoning via complementary channels as in the proof of Theorem \ref{QuantumMrsGerberTheorem2DifferentStates}. Eq.\ (\ref{eqntermafterperpequation}) now reads:
\begin{align}
H(X_1+X_2|B_1B_2)=2H-\log2+H(W^{\bot}\varoast W^{\bot}),\label{mirrorsymmetryHHeqn}
\end{align}
where $W$ is the channel corresponding to the state $\rho^{X_1B_1}=\rho^{X_2B_2}$ and $W^{\bot}$ its complementary. Since $H(W^{\bot})=\log2-H\geq\frac{1}{2}\log2$ we can apply (\ref{proofoffirstselectorinequalHcase}) to the channel $W^{\bot}$ to bound the last expression from below:
\begin{align}\nonumber
\ldots\geq2H-\log2+H(W^{\bot})-2\log\cos\Big[\frac{1}{2}\arccos[(1-2h_2^{-1}(\log2-H(W^{\bot})))^2]-\frac{1}{2}\arccos[1-2h_2^{-1}(\log2-H(W^{\bot}))]\Big],
\end{align}
which with $H(W^{\bot})=\log2-H$ gives finally the desired expression in the first selector in (\ref{FonlyMainResult}).

We show the more convenient lower bound (\ref{FonlyMoreConvenientLowerBound}) by using a few inequalities without formal proof. First we employ
\begin{align}\nonumber
\frac{1}{2}\arccos[x^2]-\frac{1}{2}\arccos x\geq\frac{\frac{1}{2}\arccos[F^2]-\frac{1}{2}\arccos F}{\sqrt{1-F}}\sqrt{1-x}\qquad\forall x\in[F,1]
\end{align}
for $F:=1-2h_2^{-1}(\frac{1}{2}\log2)$, since the function $x\mapsto(\arccos[x^2]-\arccos[x])/\sqrt{1-x}$ is monotonically increasing in $x\in[0,1)$. Using this in the first selector in (\ref{FonlyMainResult}), i.e.\ for $x=1-2h_2^{-1}(H)$, we obtain for any $H\leq\frac{1}{2}\log2$:
\begin{align}
H(X_1+X_2|B_1B_2)&\geq H-2\log\cos\left[c_1\sqrt{2h_2^{-1}(H)}\right]\qquad\text{where}~c_1:=\left.\frac{\frac{1}{2}\arccos[F^2]-\frac{1}{2}\arccos F}{\sqrt{1-F}}\right|_{F=1-2h_2^{-1}(\frac{1}{2}\log2)}\nonumber\\
&\geq H-2\log\left(1-c_2c_1^2\cdot2h_2^{-1}(H)\right)\qquad\text{where}~c_2:=\left.\frac{1-\cos x}{x^2}\right|_{x=c_1\sqrt{2h_2^{-1}(\frac{1}{2}\log2)}},\nonumber
\end{align}
similar as before, since the function $x\mapsto(1-\cos x)/x^2$ is monotonically decreasing in $x\in[0,\pi/2]\ni c_1\sqrt{2h_2^{-1}(\frac{1}{2}\log2)}$. From there we continue by first using the concavity of the $\log$ function:
\begin{align}
H(X_1+X_2|B_1B_2)&\geq H+4c_2c_1^2\,h_2^{-1}(H)\nonumber\\
&\geq H+4c_2c_1^2(1-e^{-1})\frac{H}{1-\log H},\nonumber
\end{align}
where in the last step we employ a convenient lower bound on $h_2^{-1}$, containing Euler's number $e$. The first selector now follows by $4c_2c_1^2(1-e^{-1})\geq0.083$, and the second selector in (\ref{FonlyMoreConvenientLowerBound}) by interchanging $H$ and $\log2-H$.
\end{proof}

The lower bounds (\ref{FonlyMainResult}) and (\ref{FonlyMoreConvenientLowerBound}) from Theorem \ref{QuantumMrsGerberTheorem1StateTwice} are shown in Fig.\ \ref{fig-bound-with-h}, where they are also compared to the bound (\ref{firstFGmainresult}) that is obtained from Theorem \ref{QuantumMrsGerberTheorem2DifferentStates} in the case $H_1=H_2=H$.

\begin{figure}[t!]
\centering
\includegraphics[trim=3.1cm 21.8cm 8.7cm 2.6cm, clip, scale=0.8]{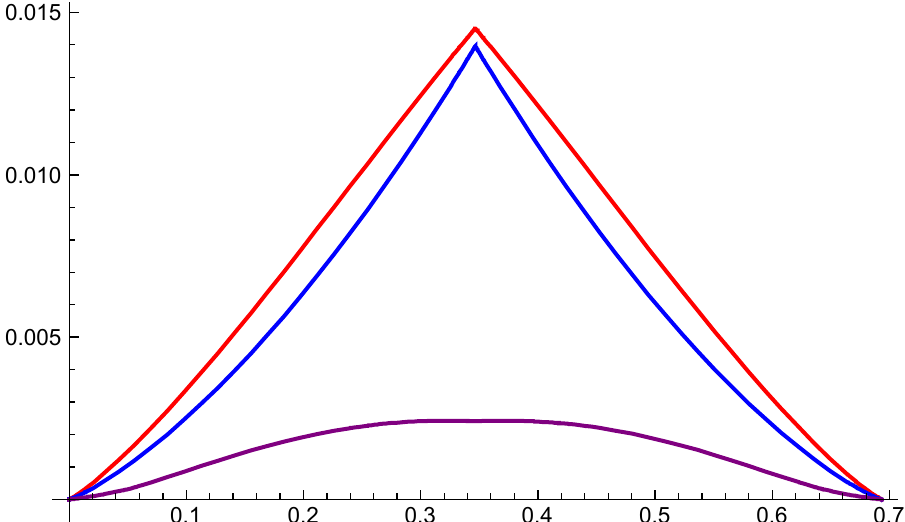}
\caption{\label{fig-bound-with-h}The red curve shows the lower bound on $H(X_1+X_2|B_1B_2)-H$ in terms of $H\in[0,\log2]$ from Eq.\ (\ref{FonlyMainResult}), the blue curve from Eq.\ (\ref{FonlyMoreConvenientLowerBound}), and the purple curve from Eq.\ (\ref{firstFGmainresult}) in the special case $H_1=H_2=H$.}
\end{figure}

\begin{remark}[Potential improvements of lower bounds] Our best lower bound (\ref{FonlyMainResult}) on $H(X_1+X_2|B_1B_2)-H$ behaves, by expanding the right-hand-side of (\ref{lowerboundWithOnlyf}) for $f\to1$, near the boundary $H\to\log2$ like
\begin{align}\label{behaviourofarccosf-expression}
\left.\left(\frac{3}{2}-\sqrt{2}\right)(1-f)+O((1-f)^2)\right|_{f=1-2h_2^{-1}(\log2-H)}=(3-\sqrt{8})h_2^{-1}(\log2-H)+O((h_2^{-1}(\log2-H))^2),
\end{align}
and thus vanishes faster than linearly $\propto(\log2-H)$ as $H\to\log2$ (see also Fig.\ \ref{fig-bound-with-h}; the behaviour for $H\to0$ follows by mirror symmetry $H\leftrightarrow\log2-H$ about $H=\frac{1}{2}\log2$). On the other hand, the bound vanishes at most as fast as $\Omega((\log2-H)/(-\log(\log2-H)))$ according to (\ref{FonlyMoreConvenientLowerBound}).

In contrast to this, our conjectured optimal lower bound from Conjecture \ref{QMGL} below posits that $H(X_1+X_2|B_1B_2)-H$ does \emph{not} vanish faster than the linear behaviour $(\log2-H)+o(\log2-H)$ for $H\to\log2$. When $\sigma_0,\sigma_1$ from Theorem \ref{QuantumMrsGerberTheorem1StateTwice} are pure states then $H=\log2-h_2((1-f)/2)$ with $f=F(\sigma_0,\sigma_1)$, and one can easily compute $H(X_1+X_2|B_1B_2)-H=h_2((1-f)/2)-(1-f)\log2+O((1-f)^2\log(1-f))=(\log2-H)+o(\log2-H)$ for $H\to\log2$ (see also Section \ref{main} for our conjectured optimal states).

If one would like to prove such a linear lower bound $\Omega(\log2-H)$ on $H(X_1+X_2|B_1B_2)-H$ for $H\to\log2$ by our proof strategy generally, one would have to improve the lower bound (\ref{lowerboundWithOnlyf}) near $f\to1$ from the linear behaviour $\Omega(1-f)$ (see (\ref{behaviourofarccosf-expression})) by a logarithmic factor, e.g.\ improve it to $\Omega(-(1-f)\log(1-f))=\Omega(h_2((1-f)/2))$ (which matches the behaviour in the pure state case described in the previous paragraph). In this respect, note that the upper bound $f\leq1-2h_2^{-1}(\log2-H)$, which is also used in our derivation (by Theorem \ref{tightRelationFHtheorem}), \emph{cannot} be improved since it is tight in the pure state case.

It is unlikely that the ``missing'' logarithmic factor in the desired $\Omega(-(1-f)\log(1-f))$ bound on the right-hand-side of (\ref{lowerboundWithOnlyf}) near $f\to1$ is due to the use of concavity, triangle inequality, and monotonicity in the part (\ref{concavity-triangle-monotonicity}) of our derivation. Rather, it is the crucial Fawzi-Renner bound itself \cite{FR14} that we use in step (\ref{fawzi-renner-bound-in-derivation}) which does not seem to be strong enough: To support this statement, we evaluate the inequality (\ref{fawzi-renner-bound-in-derivation}) again in the special setting of Theorem \ref{QuantumMrsGerberTheorem1StateTwice} (i.e.\ $\sigma_0^{B_1}=\sigma_0^{B_2}=\sigma_0$ and $\sigma_1^{B_1}=\sigma_1^{B_2}=\sigma_1$) with pure states $\sigma_0,\sigma_1$ with fidelity $f=F(\sigma_0,\sigma_1)$; and even under the optimistic assumption that the so-called \emph{Petz recovery map} ${\mathcal R}'^{Petz}$ \cite{FR14} applied in an optimal way would give a valid lower bound (which is not known to be true, and thus marked with `?' in the following), we would only obtain the following lower bound instead of (\ref{fawzi-renner-bound-in-derivation}):
\begin{align}
H(X_1+X_2|B_1B_2)-H&=I(A:C|B)_\tau\nonumber\\
&\stackrel{?}{\geq}\max\left\{-2\log F(\tau_{ACB},{\mathcal R}'^{Petz}_{B\to AB}(\tau_{CB})),-2\log F(\tau_{ACB},{\mathcal R}'^{Petz}_{B\to BC}(\tau_{AB}))\right\}\nonumber\\
&=\max\left\{-\log\left[\frac{1}{2}\left(1+f^4+(1-f^2)\sqrt{1+f^2}\right)\right],-\log\left[\frac{1}{2}\left(1+f^2+(1-f^2)^{3/2}\right)\right]\right\}\nonumber\\
&=-\log\left[\frac{1}{2}\left(1+f^2+(1-f^2)^{3/2}\right)\right]\nonumber\\
&=(1-f)+O((1-f)^{3/2})\qquad\text{as}~f\to1,\nonumber
\end{align}
which is again linear $O(1-f)$ and thus not $\Omega(-(1-f)\log(1-f))$, even though there is only one (optimistically assumed Fawzi-Renner-type) inequality in this computation.

One may hope that the desired logarithmic factor may come into a bound $\Omega(-(1-f)\log(1-f))$ improving (\ref{lowerboundWithOnlyf}) by use of recovery results employing the \emph{measured relative entropy} $D_{\mathbb M}\geq F$ instead of the fidelity~\cite{BHOS15, STH16, JRSWW15, SBT16}. We leave this possibility for further investigation. In particular, the convex programming formulations of the measured relative entropy of recovery in~\cite{BFT15} may prove useful for this purpose. During such a derivation, one may need to keep more information about the involved states $\sigma_0,\sigma_1$ than their fidelity $f=F(\sigma_0,\sigma_1)$.

As a last note, we remark that, instead of exploiting (\ref{minof2expressionsinproofofHH}) in the regime of large $H\in[\frac{1}{2}\log2,\log2]$ and afterwards symmetrizing the bound into the regime $H\in[0,\frac{1}{2}\log2]$ via (\ref{mirrorsymmetryHHeqn}), one could instead have exploited (\ref{minof2expressionsinproofofHH}) in the regime of small $H\in[0,\frac{1}{2}\log2]$ and later symmetrized towards large $H$. Any bounds that can be obtained in this way will, however, never be better than \emph{quadratic} at the boundaries, i.e.\ they will behave like $O(H^2)$ for $H\to0$ and thus $O((\log2-H)^2)$ for $H\to\log2$, and will therefore be inferior to (\ref{FonlyMainResult}) and (\ref{FonlyMoreConvenientLowerBound}) at the boundaries. The reason for this is that: {\it(a)} no lower bound in terms of the fidelity $f$ akin to (\ref{lowerboundWithOnlyf}) can be better than $f^2/2+O(f^4)$ near $f\to0$, because this is the behaviour of $H(X_1+X_2|B_1B_2)-H$ in the pure state case described above; {\it(b)} no lower bound on $f$ can be larger than linear in $H$ for $H\to0$ (such as, e.g., the desired $f\geq\Omega(\sqrt{H})$), because the (mixed) states $\sigma_0={\rm diag}(f,1-f,0)$, $\sigma_1={\rm diag}(f,0,1-f)$ satisfy the linear relation $f=F(\sigma_0,\sigma_1)=H/\log2$.
\end{remark}

\section{Conjectures for optimal bounds}\label{main}

In this section we will present conjectures on what the optimal bounds for information combining with quantum side information might be, i.e.\ the generalization of the inequalities in Eq.\ (\ref{minus-bounds}) to the case of quantum side information.

First we give a conjecture for a lower bound in analogy to the Mrs. Gerbers Lemma (compare to the left inequality in Eq.\ (\ref{minus-bounds})):
\begin{conjecture}{[Quantum Mrs. Gerber's Lemma]}\label{QMGL}
Let $\rho^{X_1B_1}$ and $\rho^{X_2B_2}$ be classical quantum states with conditional entropy $H_1 = H(X_1|B_1)$ and $H_2 = H(X_2|B_2)$ respectively. Then the following entropy inequality holds: 
\begin{equation}\label{conj:lower}
H(X_1+X_2|B_1B_2) \geq 
\begin{cases}
h(h^{-1}(H_1)\ast h^{-1}(H_2)) &H_1+H_2 \leq \log2 \\
H_1 + H_2 -\log2 + h(h^{-1}(\log2 - H_1)\ast h^{-1}(\log2 - H_2))  &H_1+H_2 \geq \log2
\end{cases}
\end{equation}
\end{conjecture}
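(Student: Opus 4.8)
The plan is to first collapse the two-case statement into a single inequality by invoking the channel duality of Section~\ref{duality}, and then to attack that one inequality directly. Concretely, it suffices to prove the ``low-entropy'' claim
\begin{equation*}
H(W_1\boxast W_2)\geq h_2\big(h_2^{-1}(H_1)\ast h_2^{-1}(H_2)\big)\qquad\text{whenever } H_1+H_2\leq\log2,
\end{equation*}
i.e.\ the first branch. Indeed, the second branch (for $H_1+H_2\geq\log2$) then follows by applying this claim to the dual channels $W_1^{\bot},W_2^{\bot}$: by Eq.~(\ref{II1}) these have entropies $H(W_j^{\bot})=\log2-H_j$ whose sum is $2\log2-(H_1+H_2)\leq\log2$, so the low-entropy claim yields $H(W_1^{\bot}\boxast W_2^{\bot})\geq h_2(h_2^{-1}(\log2-H_1)\ast h_2^{-1}(\log2-H_2))$, and feeding this into the duality identity (\ref{eqntermafterperpequation}), $H(W_1\boxast W_2)=H_1+H_2-\log2+H(W_1^{\bot}\boxast W_2^{\bot})$, reproduces exactly the second branch. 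Thus the whole conjecture reduces to the single low-entropy inequality.

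Before proving it I would pin down the conjectured extremizers, since the argument must be tight precisely there. In the regime $H_1+H_2\leq\log2$ I expect the minimum of $H(W_1\boxast W_2)$ at fixed $H_1,H_2$ to be attained by \emph{binary symmetric channels}, exactly as in the classical Mrs.\ Gerber's Lemma where the BSC is the most informative channel; a BSC is a legitimate (diagonal-output) classical-quantum channel, so the bound is achievable, and the task is to show that no classical-quantum channel does strictly better. By the duality reduction above this automatically predicts that the extremizers in the regime $H_1+H_2\geq\log2$ are the \emph{duals} of BSCs, i.e.\ the pure-state channels of Eq.~(\ref{BSCdual})---which is precisely the regime in which quantum side information can beat the classical bound, consistent with the extra $H_i\mapsto\log2-H_i$ symmetry observed in Section~\ref{duality}.

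For the core inequality I see two routes. The first mirrors the proof of Theorem~\ref{QuantumMrsGerberTheorem2DifferentStates}: write $H(W_1\boxast W_2)-H_1=I(A:C|B)_\tau$ as in (\ref{writewithIcondmutualinfo}) and lower-bound the conditional mutual information, but now with a recovery estimate sharp enough to reach $g_c$. As the authors' own analysis following Theorem~\ref{QuantumMrsGerberTheorem1StateTwice} shows, the fidelity-based Fawzi--Renner bound (\ref{fawzi-renner-bound-in-derivation}) loses a logarithmic factor at the boundary and cannot give the conjectured linear behaviour; I would therefore replace it by a recovery bound phrased through the \emph{measured relative entropy of recovery}~\cite{BHOS15,STH16}, using the convex-programming formulations of~\cite{BFT15} to again carry out the minimization over recovery maps explicitly while, crucially, retaining more information about $\sigma_0,\sigma_1$ than their single fidelity $f$. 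A complementary route is a direct extremality argument in the spirit of ``extremes of information combining'': set up the constrained minimization of $H(W_1\boxast W_2)$ over all $\rho^{X_jB_j}$ with $H(X_j|B_j)=H_j$ fixed, reduce the side-information dimension by purification and symmetry, and use the stationarity (Lagrange/perturbation) conditions to force the optimal output states $\sigma_i^{B_j}$ into the commuting BSC form in the low-entropy regime.

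I expect the main obstacle to be exactly the feature that breaks every classical proof: conditioning on quantum side information does not decompose as a convex average over unconditional instances, so the classical argument via convexity of $g_c$ (Eq.~(\ref{ccond})) has no analogue. In the recovery-map route this manifests as the missing logarithmic factor in (\ref{lowerboundWithOnlyf}) near $f\to1$, whose repair seems to require a conditional-mutual-information lower bound genuinely stronger than all currently known fidelity-of-recovery estimates. In the variational route it manifests as the non-additivity and non-commutativity of the von Neumann entropy, which make the stationarity conditions hard to solve in closed form and---most delicately---hard to exclude non-commuting stationary points in the low-entropy regime while still \emph{admitting} them (the dual-BSC pure states) in the high-entropy regime, since it is exactly this asymmetry between the two regimes that the conjecture encodes.
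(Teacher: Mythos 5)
This statement is presented in the paper as a \emph{conjecture}: the authors offer no proof, only supporting evidence (the extremal states of Section~\ref{subsec:optimal}, the duality symmetry, and numerics). Your proposal likewise does not prove it. Your duality reduction of the second branch to the first is sound and matches the paper's own observation that Eq.~(\ref{eqntermafterperpequation}) together with $H(W_j^{\bot})=\log2-H_j$ forces the $H_i\mapsto\log2-H_i$ symmetry; your identification of the extremizers (BSC embeddings in the low-entropy regime, their pure-state duals in the high-entropy regime) coincides exactly with what the paper establishes in Section~\ref{subsec:optimal}. But note one caveat even in the reduction: the duality machinery and Eq.~(\ref{II1}) are set up in the paper for channels with \emph{uniform} input distribution, whereas the conjecture is explicitly meant to cover non-uniform priors as well, so the reduction as you state it only collapses the two branches in the uniform case.

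The genuine gap is the core low-entropy inequality itself, which you do not establish. Both routes you sketch are research directions, not arguments: the recovery-map route requires a conditional-mutual-information lower bound strictly stronger than the Fawzi--Renner fidelity bound (you correctly note the missing logarithmic factor near $f\to1$, but no known measured-relative-entropy recovery result is shown to supply it, and you do not carry out the minimization over recovery maps that such a bound would demand); the variational route requires solving non-commutative stationarity conditions and, as you yourself observe, excluding non-commuting critical points in one regime while admitting them in the other --- precisely the asymmetry that makes the conjecture hard. Since neither route is executed, the proposal should be read as a plausible plan consistent with the paper's own discussion (cf.\ the remark on potential improvements following Theorem~\ref{QuantumMrsGerberTheorem1StateTwice}), not as a proof; the conjecture remains open.
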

Additionally, we conjecture the following upper bound (compare to the right inequality in Eq.\ (\ref{minus-bounds})):
\begin{conjecture}[Upper bound]\label{upper}
Let $\rho^{X_1B_1}$ and $\rho^{X_2B_2}$ be classical quantum states with conditional entropy $H_1 = H(X_1|B_1)$ and $H_2 = H(X_2|B_2)$ respectively. Then the following entropy inequality holds: 
\begin{equation}
H(X_1+X_2|B_1B_2) \leq \log2 - \frac{(\log2 - H_1)(\log2 - H_2)}{\log2}.
\end{equation}
\end{conjecture}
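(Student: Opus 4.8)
The plan is to convert the claimed \emph{upper} bound on $H(X_1+X_2|B_1B_2)=H(W_1\boxast W_2)$ into an equivalent \emph{lower} bound on the complementary quantity, which is conceptually cleaner and sidesteps controlling $\boxast$ directly. By the entropy-conservation identity (\ref{qchain}), which holds for arbitrary a priori distributions, the conjecture is equivalent to
\[
H(X_2\mid X_1+X_2,B_1B_2)=H(W_1\varoast W_2)\;\geq\;\frac{H_1H_2}{\log2},
\]
i.e.\ exactly the quantum analogue of the left inequality in (\ref{plus-bounds}). Specializing to the relevant case of uniform priors and using $W_1\varoast W_2:u\mapsto\rho_u^{B_1}\otimes\rho_u^{B_2}$, a direct computation with $\omega_{B_1B_2}:=\tfrac12(\rho_0^{B_1}\otimes\rho_0^{B_2}+\rho_1^{B_1}\otimes\rho_1^{B_2})$ gives the identity
\[
H(W_1\varoast W_2)=H_1+H_2-\log2+I(B_1:B_2)_\omega ,
\]
so the whole statement collapses to the single appealing inequality
\[
I(B_1:B_2)_\omega\;\geq\;\frac{I(W_1)\,I(W_2)}{\log2},
\]
where $I(W_j)=\log2-H_j$ is the (Holevo) symmetric capacity. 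In words: feeding one shared uniform bit through both channels produces output correlations at least the normalized product of the two individual capacities. This reformulation uses only entropy conservation and no quantum-specific machinery, so it is the natural target; note also that it is \emph{self-dual} under $W_j\mapsto W_j^{\bot}$ via (\ref{II1}) and Lemma~\ref{lem:boxvaro}, a useful consistency check.

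Second, I would attempt to prove this correlation bound by establishing that the quantum binary erasure channel is the \emph{least informative} channel, mirroring the classical situation where equality in (\ref{minus-bounds})/(\ref{plus-bounds}) is attained by the BEC (whose quantum dual is again a BEC, see the binary erasure channel example above). The classical proof upper-bounds $g_c(H_1,H_2)$ by an expression affine in each argument and then averages over classical outputs; the quantum substitute would be to show that, among all channels with fixed $I(W_1),I(W_2)$, the functional $I(B_1:B_2)_\omega$ is minimized at erasure-type outputs. A firm anchor is that the bound is provably \emph{tight} at the endpoints: if channel $1$ is noiseless, i.e.\ $\rho_0^{B_1}\perp\rho_1^{B_1}$, one computes $I(B_1:B_2)_\omega=I(W_2)=I(W_1)I(W_2)/\log2$ exactly, and symmetrically for channel $2$. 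I would therefore aim for an interpolation/symmetrization argument in a parametrization where the right-hand side is affine in each $I(W_j)$, reducing the general case to these endpoints, ideally phrased as an SDP whose extreme points are erasure channels.

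The hard part will be the lower bound on the \emph{quantum} mutual information $I(B_1:B_2)_\omega$. The classical argument is powered by the decomposition of conditional entropy over classical outcomes, which has no quantum counterpart --- precisely the obstruction flagged after (\ref{ccond}). The obvious workaround, measuring $B_1$ and $B_2$ and invoking $I(B_1:B_2)_\omega\geq I(Y_1:Y_2)$ together with the classical bound, fails, because the same measurements degrade the individual capacities $I(U:Y_j)\leq I(W_j)$, so the resulting estimate is strictly weaker than $I(W_1)I(W_2)/\log2$ and destroys tightness at the noiseless endpoint; data processing only yields the wrong-direction inequality $I(B_1:B_2)_\omega\leq\min\{I(W_1),I(W_2)\}$. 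What seems to be needed is a genuinely quantum lower bound on $I(B_1:B_2)_\omega=D(\omega_{B_1B_2}\,\|\,\omega_{B_1}\otimes\omega_{B_2})$ that stays tight for orthogonal output states --- plausibly via operator convexity of the entropic functional or a direct SDP-duality analysis of this relative entropy. I expect this missing quantum estimate is exactly why the statement remains only a conjecture.
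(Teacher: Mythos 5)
This statement is Conjecture~\ref{upper} in the paper: the authors do not prove it, and offer only supporting evidence (the equality case of BEC embeddings in Section~\ref{subsec:optimal}, the consistency of the bound with channel duality, and numerics). Your proposal likewise does not prove it, and you say so yourself; so the honest verdict is that there is a genuine, acknowledged gap, namely the central inequality $I(B_1:B_2)_\omega\geq I(W_1)I(W_2)/\log2$ is never established. That said, the reductions you do carry out are correct and worth recording: the equivalence of the conjecture to $H(W_1\varoast W_2)\geq H_1H_2/\log2$ via the conservation identity (\ref{qchain}) is exact (the algebra $H_1+H_2-\log2+(\log2-H_1)(\log2-H_2)/\log2=H_1H_2/\log2$ checks out); the identity $H(W_1\varoast W_2)=H_1+H_2-\log2+I(B_1:B_2)_\omega$ for uniform priors is a correct computation; the endpoint tightness at noiseless/useless channels is right; and the self-duality of the reformulated inequality under $W_j\mapsto W_j^{\bot}$ via (\ref{II1}) and Lemma~\ref{lem:boxvaro}) is consistent with the paper's observation that the BEC family is closed under duality. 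You have also correctly diagnosed the obstruction: measuring $B_1,B_2$ and invoking the classical bound degrades $I(W_j)$ and ruins tightness, and this is exactly the failure of the decomposition (\ref{ccond}) that the paper flags as the reason none of the classical proofs carry over.

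Two caveats on the plan itself. First, your $I(B_1:B_2)_\omega$ reformulation is derived only for uniform priors, whereas the conjecture is stated for general classical-quantum states; the first reduction (to the $\varoast$ lower bound) is fully general, but the correlation-functional picture would need to be redone for non-uniform $p$. Second, the proposed route via ``erasure channels as extreme points of an SDP'' is speculative: the constraint of fixed $I(W_j)$ is an entropic, not semidefinite, constraint, and $I(B_1:B_2)_\omega=D(\omega_{B_1B_2}\|\omega_{B_1}\otimes\omega_{B_2})$ is neither affine nor concave in the output states in any way that obviously admits an extreme-point argument. So while your reformulation is a clean and plausibly useful restatement of the conjecture --- arguably a sharper target than the form in the paper --- it does not constitute progress beyond what the authors already know, and the statement remains open.
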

In the following we will discuss several observations that give strong evidence in favour of our conjectures.

\begin{figure}[t!]
\centering
\includegraphics[clip, scale=0.55]{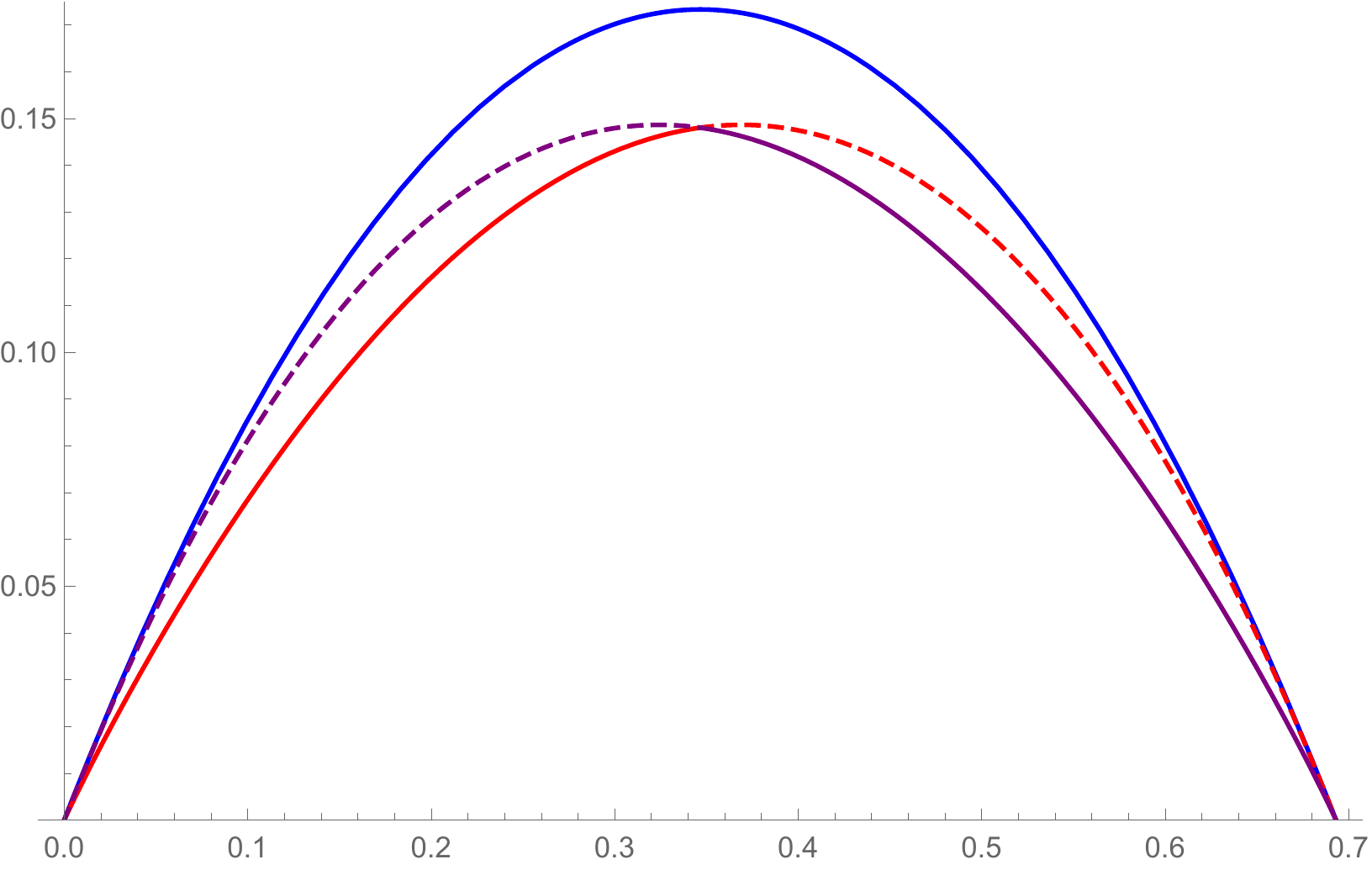}
\caption{\label{fig:conjectures} This plot shows our conjectured bounds on $H(X_1+X_2|B_1B_2) - H$ when $H_1=H_2=H$. The blue curve is the upper bound in Conjecture~\ref{upper}, while the red curve gives the lower bound for $H\leq \frac{\log 2}{2}$ and purple for $H\geq \frac{\log 2}{2}$ in Conjecture~\ref{QMGL}. Plain lines give the actual bounds, while dashed lines are shown to illustrate the two functions in Equation~\ref{conj:lower} and for comparison to the classical bound.}
\end{figure}

\subsection{Quantum states that achieve equality}\label{subsec:optimal}
First we will discuss the states that achieve equality in the conjectured inequalities. 
It can easily be seen that the classical half (i.e.\ the first selector in Eq.\ (\ref{conj:lower})) of Conjecture \ref{QMGL} can be achieved by embedding a BSC into a classical quantum state as follows (with $p\in[0,1]$ chosen accordingly):
\begin{equation}
\rho = \frac{1}{2} \ketbra{}{0}{0} \otimes (p\ketbra{}{0}{0} + (1-p)\ketbra{}{1}{1}) + \frac{1}{2} \ketbra{}{1}{1} \otimes ((1-p)\ketbra{}{0}{0} + p\ketbra{}{1}{1}).
\end{equation}
Optimality of these states follows from the inequality in the classical Mrs. Gerber's Lemma (and can also be verified easily by calculating the entropy terms).
Possibly more interesting is the quantum half of Conjecture \ref{QMGL}. The optimal states represent binary classical-quantum channels with pure output states and can therefore be represented as
\begin{equation}
\rho = \frac{1}{2} \ketbra{}{0}{0} \otimes \ketbra{}{\Psi_0}{\Psi_0}  + \frac{1}{2} \ketbra{}{1}{1} \otimes \ketbra{}{\Psi_1}{\Psi_1},
\end{equation}
where $\Psi_0$ and $\Psi_1$ are pure states. Due to unitary invariance we can choose them to be $\ket{}{\Psi_0}=\left(\begin{smallmatrix} 1 \\ 0 \end{smallmatrix}\right)$ and $\ket{}{\Psi_1}=\left(\begin{smallmatrix} \cos{\alpha } \\ \sin{\alpha }\end{smallmatrix}\right)$. 
Again this can be verified by simply calculating the involved entropies. Unfortunately this calculation is not very insightful, therefore we choose to give an alternative proof, which might also give some intuition towards why our conjectured lower bound has the given additional symmetry. 
The alternative proof will be based on the concept of dual channels as explained in Section~\ref{duality}. 

Lets fix $W_1$ and $W_2$ to be channels with pure output states of the form in Equation \ref{BSCdual} and therefore dual channels of BSCs. With the above arguments we can now show in an intuitive way that channels of this form achieve equality for the quantum side of our conjecture. 
\begin{align}
H( W_1 \boxast W_2) &= H(W_1) + H(W_2) - H( W_1 \varoast W_2) \\
&=  H(W_1) + H(W_2) -\log2 + H(W_1^{\bot}\boxast W_2^{\bot}) \\ 
&=  H(W_1) + H(W_2) -\log2 + h(h^{-1}(H(W_1^{\bot}))\ast h^{-1}(H(W_2^{\bot}))) \\
&=  H(W_1) + H(W_2) -\log2 + h(h^{-1}(\log2 - H(W_1))\ast h^{-1}(\log2 - H(W_2))), 
\end{align}
where the first inequality follows from the chain rule for mutual information, the second one from Equation \ref{Hbv}, the third from the classical Mrs. Gerbers Lemma and the final one from Equation \ref{II1}. Note that the equality holds because in the classical Mrs. Gerbers Lemma binary symmetric channels achieve equality. 

\begin{remark} With an argument along the same lines one can prove immediately that our conjectured lower bound is true not only for all states that are classical channels (or embeddings of such) but also for all states that are duals of such classical channels.  
\end{remark}

Now, lets look at Conjecture \ref{upper}. From the classical upper bound it can be easily seen that equality is achieved by embeddings of binary erasure channels, which give the following class of states
\begin{equation}
\rho = \frac{1}{2} \ketbra{}{0}{0} \otimes ((1-\epsilon)\ketbra{}{0}{0} + \epsilon\ketbra{}{e}{e}) + \frac{1}{2} \ketbra{}{1}{1} \otimes ((1-\epsilon)\ketbra{}{1}{1} + \epsilon\ketbra{}{e}{e}).
\end{equation}
\begin{remark} 
It is interesting to note -- concerning the duality relations used before -- that  the upper bound \emph{can} coincide with the quantum bound because the dual channel of a BEC with error probability $\epsilon$ is again a channel from the same family, i.e.\ a BEC with error probability $1-\epsilon$. 
\end{remark}

\subsection{Numerical evidence}
\begin{figure}[t!]
\centering
\includegraphics[clip, scale=0.9]{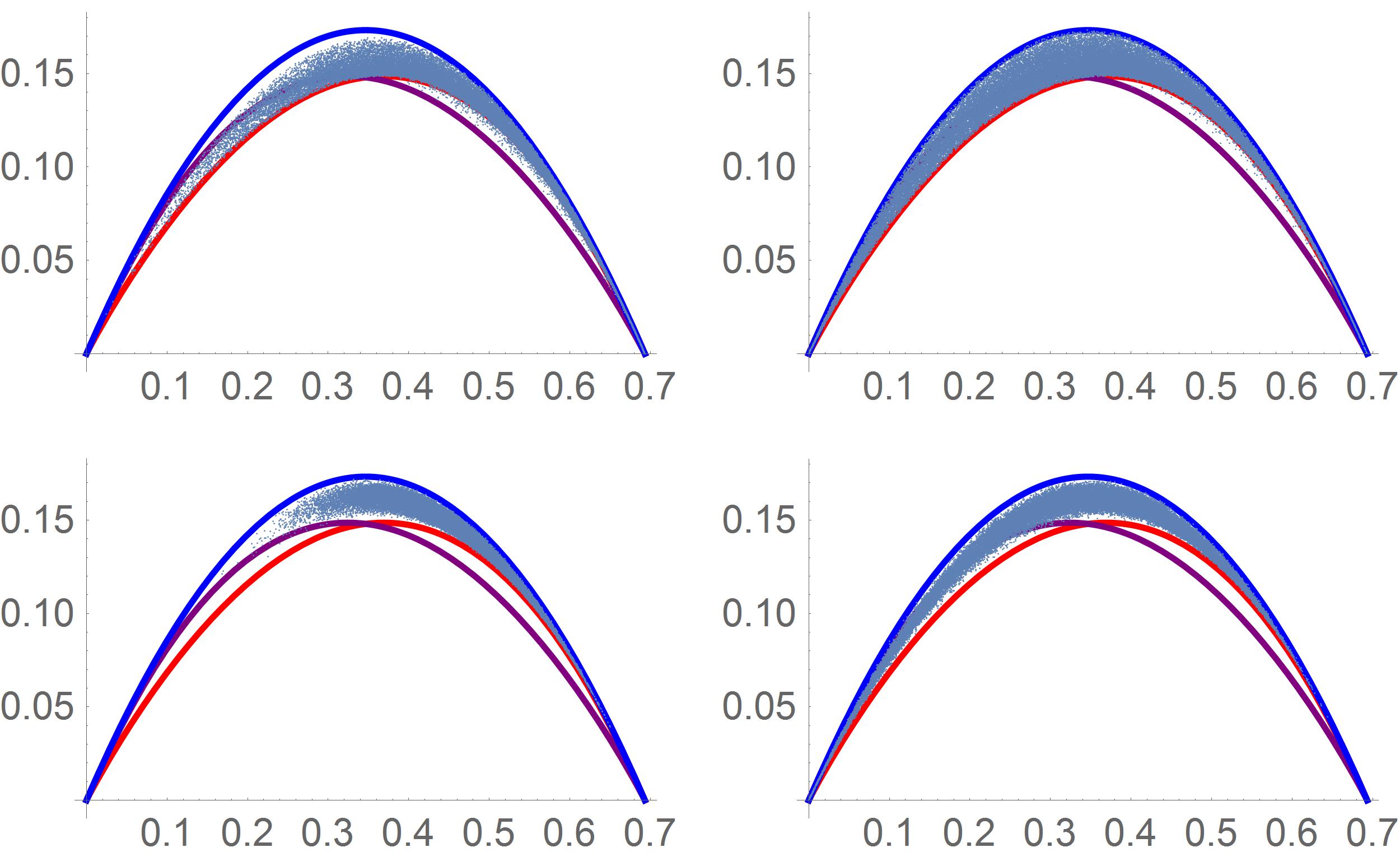}
\caption{\label{fig:numerics} These pictures present some of the numerical evidence gathered to test Conjectures~\ref{QMGL} and~\ref{upper}. Plotted is $H(X_1+X_2|B_1B_2) - H$ when $H_1=H_2=H$ against $H$. For each of the examples classical-quantum states of the form in Equation~\ref{num:test-states} where randomly generated. Those on the left with $p=\frac{1}{2}$ and those on the right with $p\in [0,1]$ at random, those on the top with quantum states of dimension $d=2$ and the bottom ones with dimension $d=4$. Each plot contains $50000$ examples.} 
\end{figure}

Naturally we tested our Conjectures~\ref{QMGL} and~\ref{upper} by numerical examples. 
For those we generated classical-quantum states of the form
\begin{equation}\label{num:test-states}
\rho^{XB}=p{\ketbra{}{0}{0}}_{X}\otimes\rho_0^{B}+(1-p){\ketbra{}{1}{1}}_{X}\otimes\rho_1^{B},
\end{equation}
where $\rho_0^{B}$ and $\rho_1^{B}$ are randomly chosen quantum states of dimension $d\in\{ 2,3,4,5,6 \}$, using the over-parametrized generation method (see e.g.~\cite{M15}),  and $p$ is either fixed $p=\frac{1}{2}$ or drawn at random from $p\in[0,1]$. We then used each of these states to calculate the exact value of $H(X_1+X_2|B_1B_2)$ with $H_1=H_2=H$ and compared it to our conjectured bounds. For all $10$ combinations we tested our conjectures with several $100000$ resulting classical-quantum states. No violations of our bounds were found. We furthermore found that the states coming close to our conjectured bounds are close to the conjectured optimal forms stated in Section \ref{subsec:optimal}.

We remark that while none of the generated states violated our conjectured bounds, violation of the classical lower bound was easily observed. A sample of our numerics is shown in Figure~\ref{fig:numerics}. (That this violation of the classical lower bound must occur ist clear from the analytic results of Section \ref{subsec:optimal}).

Additionally we carried out similar numerics for the case of two different classical-quantum states, i.e.\ with differing entropies $H_1\neq H_2$. Again, we found no violation of Conjectures~\ref{QMGL} and~\ref{upper}.

\section{Application to classical-quantum Polar codes}\label{PolarCodes}

In this section we apply the above results to classical-quantum Polar codes. We will first introduce the concept and the necessary technical aspects of Polar codes. Then we will show how our results can be used to translate a simple proof of polarization from the classical-classical case to the classical-quantum case. Our results also allow us to prove polarization for non-stationary channels. Finally, we will describe the impact of our quantitative bounds from Section \ref{lower-bound} on the \emph{speed} of polarization of cq-Polar codes and comment on the possible speed when assuming our conjectured lower bound from Conjecture \ref{QMGL}.

\subsection{Introduction to cq-Polar codes}
Polar codes were introduced by Arikan as the first classical constructive capacity achieving codes with efficient encoding and decoding~\cite{A09}. The underlying idea of Polar codes is that by adding the input bit of a later channel onto one of an earlier channel, that earlier channel becomes harder to decode while providing side-information for decoding the later one. Polar Codes rely on an iteration of this scheme, which eventually leads to almost perfect or almost useless channels, combined with a successive cancellation decoder. This decoder attempts to decode the output bit by bit, assuming at each step full knowledge of previously received bits while ignoring later outputs. Since information is sent only via channels that polarize to (almost) perfect channels while useless channels transmit so called \textit{frozen bits}, which are known to the receiver, this decoder can achieve a very low error probability. In fact, it was proven in \cite{AT09} that the block error probability scales as $O(2^{-N^\beta})$ (for any $\beta<1/2$).

Based on the classical setting, Polar codes were later generalized to channels with quantum outputs~\cite{WG13}. These quantum polar codes inherit many of the desirable features like the efficient encoder and the exponentially vanishing block error probability~\cite{WG13, H14}, while especially the efficient decoder remains an open problem~\cite{WLH13}.

Since their introduction Polar codes have been investigated in many ways, like adaptations to many different settings in classical~\cite{A12, A10} and quantum information theory~\cite{HMW14,CM15}.

In particular, in the classical setting, Polar codes have been generalized to non-stationary channels~\cite{AT14} and it was shown that the exponentially vanishing block error rate can be achieved with just a polynomial block length \cite{GX15}. Both of these results have not so far been extended to the classical-quantum setting, and their proofs rely heavily on the classical Mrs.\ Gerbers Lemma.

Let us now look at the relationship between bounds on information combining and Polar codes. The most natural quantity to track the quality of a channel during the polarization process is its conditional entropy (or equivalently, for symmetric channels, its mutual information), and the most basic element in Polar coding is the application of a CNOT gate. As described above, from such an application we can derive one channel that is worse then either of the two original channels, and one that is better (in terms of their conditional entropy). The worse channel is usually denoted by $\langle W_1,W_2\rangle^-$ and the better one by $\langle W_1,W_2\rangle^+$, where $W_1$ and $W_2$ are the original channels. It turns out that (see Section~\ref{qcombining})
\begin{equation}
H(\langle W_1,W_2\rangle^-) = H(X_1+X_2 | Y_1Y_2) = H(W_1 \boxast W_2)
\end{equation}
and
\begin{equation}
H(\langle W_1,W_2\rangle^+) = H(X_2 | X_1+X_2,Y_1Y_2) = H(W_1 \varoast W_2).
\end{equation}
Naturally, the same is true for the corresponding quantities based on the channel mutual information $I(W)$, which we recall is defined by $I(W):=\log2-H(W)$ for the case of symmetric channels, which is the only case we consider here.

Therefore, it is intuitive that good bounds on information combining can be very helpful for investigating specific properties of Polar codes and in particular the of the polarization process. This is because those bounds allow to characterize the difference in entropy between the synthesized channels $\langle W_1,W_2\rangle^-$ and $\langle W_1,W_2\rangle^+$ and the original channels $W_1,W_2$.

\subsection{Polarization for stationary and non-stationary channels}
Polarization is one of the main features of Polar codes and crucial for their ability to achieve capacity. It was first proven in the classical setting in~\cite{A09} by showing convergence of certain martingales and a similar approach has later been used to establish polarization for classical-quantum Polar codes in~\cite{WG13}. Recently a conceptually simpler proof of polarization has been found in~\cite{AT14} making use of the classical Mrs.\ Gerbers Lemma as its main tool. Besides its more intuitive approach, one of the main advantages of this new proof is that it can be extended to non-stationary channels, while the martingale approach is only known to work for stationary channels.

In this section we show that our results from Section \ref{lower-bound} are sufficient to extend the polarization proof from \cite{AT14} to the setting of classical-quantum channels, and also to prove polarization for non-stationary classical-quantum channels. The main observation that enables us to translate the classical proofs is the following Lemma.
\begin{lem}{}\label{lem:lower}
Let $W_1$ and $W_2$ be two classical-quantum binary and symmetric channels with $I(W_1),I(W_2)\in [a,b]$, then the following holds
\begin{equation}
I(\langle W_1,W_2\rangle^+) - I(\langle W_1,W_2\rangle^-) \geq |I(W_1)-I(W_2)| + \mu(a,b), \label{ent-diff}
\end{equation}
where $\mu(a,b)>0$ whenever $0<a<b<\log2$. 
\end{lem}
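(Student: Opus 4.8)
The plan is to convert the claimed capacity gap into a statement about the combined entropy $H(X_1+X_2|B_1B_2)$ and then invoke the quantum Mrs.\ Gerber's Lemma (Theorem \ref{QuantumMrsGerberTheorem2DifferentStates}) together with its equality analysis (Remark \ref{equality-condition-for-proof}). First I would use that for symmetric channels $I(W)=\log2-H(W)$, so that, writing $H_j:=H(W_j)$, the entropy conservation relation \ref{qchain}, i.e.\ $H(\langle W_1,W_2\rangle^-)+H(\langle W_1,W_2\rangle^+)=H_1+H_2$, translates into $I(\langle W_1,W_2\rangle^-)+I(\langle W_1,W_2\rangle^+)=I(W_1)+I(W_2)$. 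Subtracting $2\,I(\langle W_1,W_2\rangle^-)$ and using $I(\langle W_1,W_2\rangle^-)=\log2-H(X_1+X_2|B_1B_2)$ yields the identity
\[
I(\langle W_1,W_2\rangle^+)-I(\langle W_1,W_2\rangle^-)=2H(X_1+X_2|B_1B_2)-H_1-H_2.
\]
Since moreover $|I(W_1)-I(W_2)|=|H_1-H_2|=2\max\{H_1,H_2\}-H_1-H_2$, the whole lemma reduces to proving the single inequality $H(X_1+X_2|B_1B_2)\geq\max\{H_1,H_2\}+\tfrac12\mu(a,b)$ with a strictly positive $\mu(a,b)$.

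Next I would control the right-hand side \emph{uniformly}. Let $B(H_1,H_2)$ denote the explicit lower bound (the maximum of the four expressions) in \ref{firstFGmainresult}; by Theorem \ref{QuantumMrsGerberTheorem2DifferentStates} we have $H(X_1+X_2|B_1B_2)\geq B(H_1,H_2)$, and crucially this bound depends on $H_1,H_2$ \emph{only}. The first of the four expressions is $H_1-2\log\cos[\cdots]\geq H_1$ (because $\cos\leq1$), and by the strict-monotonicity argument recorded in Remark \ref{equality-condition-for-proof} it equals $H_1$ exactly when $H_1=\log2$ or $H_2=0$; symmetrically for the second expression. Hence $B(H_1,H_2)>\max\{H_1,H_2\}$ strictly whenever $(H_1,H_2)\in(0,\log2)^2$. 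The hypotheses $I(W_j)\in[a,b]$ with $0<a<b<\log2$ force $H_j\in[\log2-b,\log2-a]\subset(0,\log2)$, a compact box inside the open square.

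Finally I would define
\[
\mu(a,b):=2\min_{(H_1,H_2)\in[\log2-b,\,\log2-a]^2}\big(B(H_1,H_2)-\max\{H_1,H_2\}\big).
\]
The integrand is continuous (the constituents $h_2^{-1}$, $\arccos$, $e^{(\cdot)}$ are continuous, and the argument of each $\cos$ lies in $[0,\pi/4]$, so each $\log\cos$ is well defined and finite); therefore the minimum over the compact box is attained, and being strictly positive at every point of the box it is positive, giving $\mu(a,b)>0$. Combining, $I(\langle W_1,W_2\rangle^+)-I(\langle W_1,W_2\rangle^-)=2H(X_1+X_2|B_1B_2)-H_1-H_2\geq 2B(H_1,H_2)-H_1-H_2=2\big(B(H_1,H_2)-\max\{H_1,H_2\}\big)+|H_1-H_2|\geq\mu(a,b)+|I(W_1)-I(W_2)|$, which is the claim.

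I expect the main obstacle to be precisely the \emph{uniformity} of the gap: Remark \ref{equality-condition-for-proof} supplies only a pointwise strict inequality, and the physical quantity $H(X_1+X_2|B_1B_2)$ depends on the full states rather than on $(H_1,H_2)$ alone, so one cannot minimize it directly over a finite-dimensional box. The resolution, and the reason the argument closes, is to pass to the state-independent surrogate $B(H_1,H_2)$ and run the compactness argument there; along the way one must verify that the $\cos$ appearing in $B$ stays bounded away from $0$ (its argument never exceeds $\pi/4$), so that $\mu(a,b)$ is genuinely finite and positive.
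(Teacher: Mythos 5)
Your proposal is correct and follows essentially the same route as the paper: the same algebraic identity reducing $I(\langle W_1,W_2\rangle^+)-I(\langle W_1,W_2\rangle^-)-|I(W_1)-I(W_2)|$ to $2\bigl(H(X_1+X_2|B_1B_2)-\max\{H_1,H_2\}\bigr)$, followed by passing to the state-independent lower bound of Theorem \ref{QuantumMrsGerberTheorem2DifferentStates}, whose continuity and strict positivity on the interior (via Remark \ref{equality-condition-for-proof}) yield a uniform minimum over the compact box $[\log2-b,\log2-a]^2$. Your write-up is in fact slightly more explicit than the paper's about why the compactness argument must be run on the surrogate bound $B(H_1,H_2)$ rather than on the state-dependent entropy itself.
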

\begin{proof}
The statement follows from the results in Section~\ref{lower-bound}, in particular Remark~\ref{equality-condition-for-proof}. To see this, note that
\begin{align}
I(\langle W_1,W_2\rangle^+) - I(\langle W_1,W_2\rangle^-)-|I(W_1)-I(W_2)|&=2\left(H(\langle W_1,W_2\rangle^-)-\max\{H(W_1),H(W_2)\}\right)\nonumber\\
&=2\left(H(X_1+X_1|B_1B_2)-\max\{H_1,H_2\}\right),\label{herewewantauniformlowerbound}
\end{align}
where the last line is written in the notation of Remark~\ref{equality-condition-for-proof}. Since our lower bound (\ref{firstFGmainresult}) from Theorem \ref{QuantumMrsGerberTheorem2DifferentStates} is continuous in $H_1,H_2$ and equals $0$ only on the boundary, given by the condition $H_1\in\{0,\log2\}$ or $H_2\in\{0,\log2\}$, we obtain a strictly positive uniform lower bound $\mu(a,b)>0$ on Eq.\ (\ref{herewewantauniformlowerbound}) for $H_1,H_2\in[b,\log2-a]$ with $0<a<b<\log2$ (see also Fig.\ \ref{fig-bound-with-h1h2}).
\end{proof}
In the usual setting of stationary channels it is enough to consider the two original channels $W_1=W_2=W$ to be equal in which case Equation~\ref{ent-diff} simplifies to
\begin{equation}\label{defineDeltaW}
\Delta(W) := I(W^+) - I(W^-) \geq \kappa(a,b),
\end{equation}
if $I(W)\in[a,b]$. With this tool we are now ready to address the question of polarization for classical-quantum channel. First we will look at stationary channels and prove polarization in the classical-quantum setting. As mentioned before this result was already achieved in~\cite{WG13}, but we will give an alternative simple proof based on~\cite{AT14}.
\begin{thm}
For any classical-quantum BMSC $W$ and any $0<a<b<\log2,$ the following holds
\begin{align}
&\lim_{n\rightarrow\infty} \frac{1}{2^n} \#\{ s^n \in \{+,-\}^n : I(W^{s^n})\in [0,a) \} = 1-I(W)/\log2, \\
&\lim_{n\rightarrow\infty} \frac{1}{2^n} \#\{ s^n \in \{+,-\}^n : I(W^{s^n})\in [a,b] \} = 0, \\
&\lim_{n\rightarrow\infty} \frac{1}{2^n} \#\{ s^n \in \{+,-\}^n : I(W^{s^n})\in (b,\log2] \} = I(W)/\log2.
\end{align}
\end{thm}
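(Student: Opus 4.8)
The plan is to establish polarization via the standard martingale argument of Arıkan, now enabled by the quantitative entropy-gap bound of Lemma~\ref{lem:lower} (specifically its stationary form~(\ref{defineDeltaW})). First I would set up the random process: choose $S_1,S_2,\ldots$ independent uniform on $\{+,-\}$ and define $I_n:=I(W^{S^n})$ where $S^n=(S_1,\ldots,S_n)$. By the entropy-conservation identity~(\ref{qchain}) (equivalently, the fact that $I(W^-)+I(W^+)=2I(W)$ for symmetric channels), the sequence $\{I_n\}$ is a bounded martingale taking values in $[0,\log2]$: indeed $\mathbb{E}[I_{n+1}\mid S^n]=\tfrac12 I((W^{S^n})^-)+\tfrac12 I((W^{S^n})^+)=I(W^{S^n})=I_n$. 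By the martingale convergence theorem, $I_n\to I_\infty$ almost surely and in $L^1$, and $\mathbb{E}[I_\infty]=\mathbb{E}[I_0]=I(W)$.

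Next I would argue that the limit $I_\infty$ is almost surely valued in $\{0,\log2\}$. This is exactly where the quantitative bound enters. Since $\{I_n\}$ is an $L^1$-convergent martingale, $\mathbb{E}[|I_{n+1}-I_n|]\to0$. Conditioned on $S^n$, the increment $|I_{n+1}-I_n|$ equals $\tfrac12(I(W^{S^n})^+)-I((W^{S^n})^-))=\tfrac12\Delta(W^{S^n})$, so $\mathbb{E}[\Delta(W^{S^n})]\to0$. Now for any $0<a<b<\log2$, if $I_n\in[a,b]$ then~(\ref{defineDeltaW}) gives $\Delta(W^{S^n})\geq\kappa(a,b)>0$. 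Hence
\begin{align*}
\kappa(a,b)\,\Pr[I_n\in[a,b]]\leq\mathbb{E}[\Delta(W^{S^n})]\to0,
\end{align*}
so $\Pr[I_n\in[a,b]]\to0$. Since $a,b$ are arbitrary with $0<a<b<\log2$, the limit $I_\infty$ avoids the open interval $(0,\log2)$ almost surely, i.e.\ $I_\infty\in\{0,\log2\}$. The middle statement of the theorem is precisely the rephrasing that $\frac{1}{2^n}\#\{s^n:I(W^{s^n})\in[a,b]\}=\Pr[I_n\in[a,b]]\to0$.

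Finally I would pin down the two limiting masses. Because $I_\infty\in\{0,\log2\}$ a.s., we have $\mathbb{E}[I_\infty]=\log2\cdot\Pr[I_\infty=\log2]$, and combining with $\mathbb{E}[I_\infty]=I(W)$ yields $\Pr[I_\infty=\log2]=I(W)/\log2$ and correspondingly $\Pr[I_\infty=0]=1-I(W)/\log2$. To convert these statements about $I_\infty$ into the claimed limits of the fractions at finite level, I would note that $\frac{1}{2^n}\#\{s^n:I(W^{s^n})\in(b,\log2]\}=\Pr[I_n>b]$, and since $\Pr[I_n\in[a,b]]\to0$ for every such $a,b$ while $I_n\to I_\infty\in\{0,\log2\}$, the mass escaping $[a,b]$ accumulates at the two endpoints with the proportions just computed; a standard argument (choosing $a\to0$, $b\to\log2$ and using dominated convergence for the bounded variables $\mathbf{1}[I_n>b]$) gives $\Pr[I_n>b]\to\Pr[I_\infty=\log2]=I(W)/\log2$ and likewise $\Pr[I_n<a]\to1-I(W)/\log2$.

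The main obstacle is ensuring that the quantitative gap $\kappa(a,b)$ is genuinely bounded away from zero uniformly, which is the content of Lemma~\ref{lem:lower} and rests on the continuity of the lower bound~(\ref{firstFGmainresult}) together with the equality-characterization in Remark~\ref{equality-condition-for-proof}; without a \emph{strictly positive uniform} gap the middle-mass-vanishing step would fail. The remaining subtlety is the careful interchange of limits in the last paragraph to transfer endpoint masses from $I_\infty$ back to the finite-$n$ fractions, but this is routine given the almost-sure and $L^1$ convergence already established.
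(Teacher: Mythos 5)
Your proof is correct, but it takes a genuinely different route from the paper. The paper deliberately avoids martingale theory and instead follows the elementary second-moment argument of Alsan--Telatar: it defines the deterministic averages $\mu_n=\frac{1}{2^n}\sum_s I(W^s)$ and $\nu_n=\frac{1}{2^n}\sum_s I(W^s)^2$, shows $\mu_{n+1}=\mu_n$ by the chain rule and $\nu_{n+1}\geq\nu_n+\frac{1}{4}\theta_n(a,b)\,\kappa(a,b)^2$ using the same Lemma~\ref{lem:lower}, concludes $\theta_n(a,b)\to0$ from the convergence of the bounded increasing sequence $\nu_n$, and then pins down $\alpha_n$ and $\beta_n$ by the elementary sandwich $\mu_n\leq a\,\alpha_n(a)+b\,\theta_n(a,b)+(\log2)\beta_n(b)$ together with $\alpha_n+\beta_n\leq1$. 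You instead run the classical Arıkan/Wilde--Guha martingale: $I_n=I(W^{S^n})$ is a bounded martingale, $L^1$-convergence forces $\mathbb{E}[|I_{n+1}-I_n|]=\frac{1}{2}\mathbb{E}[\Delta(W^{S^n})]\to0$, Lemma~\ref{lem:lower} then kills the middle mass, and $\mathbb{E}[I_\infty]=I(W)$ fixes the endpoint probabilities. Both arguments rest on exactly the same quantitative input (the uniform gap $\kappa(a,b)>0$ from Lemma~\ref{lem:lower}, which in turn comes from Theorem~\ref{QuantumMrsGerberTheorem2DifferentStates} and Remark~\ref{equality-condition-for-proof}), and both are complete; the main trade-off is that the paper's elementary bookkeeping carries over essentially verbatim to \emph{non-stationary} channel collections (the following theorem in the paper), where the martingale structure is not available, whereas your route invokes the martingale convergence theorem and is confined to the stationary i.i.d.\ setting. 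Two small points worth making explicit in your write-up: the identity $|I_{n+1}-I_n|=\frac{1}{2}\Delta(W^{S^n})$ uses the conservation law $I(W^+)+I(W^-)=2I(W)$ so that both branches of the increment have equal magnitude, and the final passage $\Pr[I_n>b]\to\Pr[I_\infty=\log2]$ needs the observation that $\Pr[I_\infty=b]=0$ (which holds since $I_\infty\in\{0,\log2\}$ a.s.\ and $0<b<\log2$) before applying dominated convergence.
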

\begin{proof}
The proof follows essentially the one in~\cite{AT14} adjusted to the classical-quantum setting considered in our work. We will nevertheless state the important steps in the proof here. 
We start with a given classical-quantum channel $W$ and arbitrary $0<a<b<\log2$. We define the following
\begin{align}
\alpha_n(a) := \frac{1}{2^n} \#\{ s \in \{+,-\}^n : I(W^{s})\in [0,a) \}, \\
\theta_n(a,b):=\frac{1}{2^n} \#\{ s \in \{+,-\}^n : I(W^{s})\in [a,b] \}, \\
\beta_n(b):=\frac{1}{2^n} \#\{ s \in \{+,-\}^n : I(W^{s})\in (b,\log2] \},
\end{align}
where $s:=s^n$ to simplify the notation. Furthermore we will need to additional quantities 
\begin{equation}
\mu_n = \frac{1}{2^n} \sum_{ s \in \{+,-\}^n} I(W^s)
\end{equation}
and
\begin{equation}
\nu_n = \frac{1}{2^n} \sum_{ s \in \{+,-\}^n} [I(W^s)]^2.
\end{equation}
Now, it follows directly from the chain rule (Equation~\ref{qchain}) that 
\begin{equation}
\mu_{n+1} = \mu_n = I(W). 
\end{equation}
It can also be seen that 
\begin{align}
\nu_{n+1} &=  \frac{1}{2^{n+1}} \sum_{ s \in \{+,-\}^{n+1}} I(W^s)^2 \\
&=  \frac{1}{2^{n}} \sum_{ t \in \{+,-\}^{n}} \frac{1}{2}[I(W^{t+})^2 + I(W^{t-})^2] \\
&=  \frac{1}{2^{n}} \sum_{ t \in \{+,-\}^{n}} I(W^{t})^2 +\left(\frac{1}{2}\Delta(W^{t})\right)^2 \\
&\geq \nu_n + \frac{1}{4}\theta_n(a,b)\kappa(a,b)^2,
\end{align}
where $\Delta(W)$ has been defined in (\ref{defineDeltaW}) and we take $\kappa(a,b)>0$ from Lemma \ref{lem:lower}. It follows that $\nu_n$ is monotonically increasing and, since it is also bounded, therefore converging. Particularly we can use it to bound $\theta_n(a,b)$ by
\begin{equation}
0\leq \theta_n(a,b) \leq 4\frac{\nu_{n+1} - \nu_n}{\kappa(a,b)^2}
\end{equation}
and therefore conclude that $\lim_{n\rightarrow\infty}\theta_n(a,b) = 0$. Next we show that 
\begin{align}
I(W) = \mu_n &\leq a \alpha_n(a) + b\theta_n(a,b) + (\log2)\beta_n(b) \\
&= a + (b-a)\theta_n(a,b) + (\log2-a)\beta_n(b),
\end{align}
thus by taking $n$ to infinity and $a$ infinitesimal small it follows that
\begin{equation}
\liminf_{n\rightarrow\infty} \beta_n(b) \geq I(W)/\log2. 
\end{equation}
Similarly upper bounding $1-\mu_n$ leads to
\begin{equation}
\liminf_{n\rightarrow\infty} \alpha_n(a) \geq 1-I(W)/\log2. 
\end{equation}
Finally the original claim follows from the fact that $\alpha_n(a) + \beta_n(b) \leq 1$.
\end{proof}
Now we will look at classical-quantum Polar codes for non-stationary channels, following the treatment in \cite{AT14}. Instead of a fixed channel $W$ we start with a collection of channels $W_{0,t}$, where the first index numbers the coding step and the second the channel position. From here we can define the coding steps similar to the classical case recursively as
\begin{align}
W_{n,Nm+j} &= \langle W_{n-1,Nm+j} ,W_{n-1,Nm+N/2+j} \rangle^- \\
W_{n,Nm+N/2+j} &= \langle W_{n-1,Nm+j} ,W_{n-1,Nm+N/2+j} \rangle^+,
\end{align}
with $n\geq 1$, $N=2^n$ and $0\leq j\leq N/2-1$.
With these definitions we can state the result for non-stationary channels.  
\begin{thm}
For any collection of classical-quantum BMSC $W_{0,t}$ and any $0<a<b<\log2$, the following holds
\begin{align}
&\lim_{n\rightarrow\infty}\lim_{T\rightarrow\infty} \frac{1}{T} \#\{ 0\leq t<T : I(W_{n,t})\in [0,a) \} = 1-\mu/\log2, \\
&\lim_{n\rightarrow\infty}\lim_{T\rightarrow\infty} \frac{1}{T} \#\{  0\leq t<T : I(W_{n,t})\in [a,b] \} = 0, \\
&\lim_{n\rightarrow\infty}\lim_{T\rightarrow\infty} \frac{1}{T} \#\{  0\leq t<T : I(W_{n,t})\in (b,\log2] \} = \mu/\log2,
\end{align}
with $\mu=\lim_{T\rightarrow\infty}\frac{1}{T} \sum_{t<T}I(W_{0,t})$, under the condition that $\mu$ is well defined.
\end{thm}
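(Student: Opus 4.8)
The plan is to follow, essentially line for line, the proof of the stationary theorem given just above, replacing the average over sign sequences $s^n\in\{+,-\}^n$ by the position average $\tfrac1T\sum_{t<T}(\cdot)$ and sending $T\to\infty$ before $n\to\infty$. The single new ingredient relative to the stationary case is Lemma~\ref{lem:lower}, whose importance is that it applies to two \emph{possibly distinct} channels $W_1,W_2$ both lying in $[a,b]$ — which is exactly the situation created by a non-stationary collection. First I would introduce the position-averaged fractions and moments
\begin{align*}
\alpha_n(a) &:= \lim_{T\to\infty}\tfrac1T\#\{t<T: I(W_{n,t})\in[0,a)\}, &
\theta_n(a,b) &:= \lim_{T\to\infty}\tfrac1T\#\{t<T: I(W_{n,t})\in[a,b]\},\\
\beta_n(b) &:= \lim_{T\to\infty}\tfrac1T\#\{t<T: I(W_{n,t})\in(b,\log2]\}, &
\mu_n &:= \lim_{T\to\infty}\tfrac1T\sum_{t<T}I(W_{n,t}),
\end{align*}
together with $\nu_n:=\lim_{T\to\infty}\tfrac1T\sum_{t<T}I(W_{n,t})^2$, and establish that these averages exist at every level. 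Here some technical care is needed, since the hypothesis only guarantees that $\mu=\mu_0$ exists: one either works with upper and lower densities ($\limsup_T$, $\liminf_T$) throughout, or argues that the local, bounded, block-wise nature of each polar step propagates existence of the averages from one level to the next, as in \cite{AT14}.

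Next, exactly as in the stationary case, I would derive the two moment relations. Summing the local conservation law (\ref{qchain}), i.e.\ $I(W^-)+I(W^+)=I(W_1)+I(W_2)$ for every combined pair, over all positions gives $\mu_{n+1}=\mu_n=\mu$. For the second moment, write $x,y$ for the two parents' mutual informations and $u\le\min(x,y)\le\max(x,y)\le v$ for those of the $W^-,W^+$ children; since $u+v=x+y$, the per-pair increment is $\tfrac12\big[(v-u)^2-(x-y)^2\big]\ge0$, so $\nu_n$ is non-decreasing, and being bounded by $(\log2)^2$ it converges, with summable increments. By Lemma~\ref{lem:lower} (inequality (\ref{ent-diff}), which rests on Theorem~\ref{QuantumMrsGerberTheorem2DifferentStates} and Remark~\ref{equality-condition-for-proof}), whenever both parents satisfy $I(W_1),I(W_2)\in[a,b]$ with $0<a<b<\log2$ one has $v-u\ge|x-y|+\mu(a,b)$, hence the per-pair increment is at least $\tfrac12\mu(a,b)^2>0$. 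Summing, $\nu_{n+1}-\nu_n\ge\tfrac14\mu(a,b)^2\,q_n$, where $q_n$ is the limiting fraction of combined \emph{pairs} both of whose parents lie in $[a,b]$; summability of the increments then forces $q_n\to0$.

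The step I expect to be the main obstacle is upgrading $q_n\to0$ (a statement about pairs) to $\theta_n(a,b)\to0$ (a statement about individual channels). The difficulty is genuine: by (\ref{ent-diff}) and the equality analysis of Remark~\ref{equality-condition-for-proof}, a middle channel ($I\in[a,b]$) combined with a \emph{perfect} (resp.\ \emph{useless}) partner produces no spreading and simply re-emerges, essentially unchanged, as its $W^-$ (resp.\ $W^+$) child, so a middle channel could in principle survive as long as it meets an extreme partner at every level. Ruling this out is exactly where the bit-reversal structure of the recursion must be used: at level $n$ the recursion combines channels whose positions differ by $2^{n-1}$, and this \emph{distance changes with $n$}, so the same lineage cannot be matched with an extreme partner forever — concretely, an alternating middle/extreme arrangement that avoids both-middle pairs at one level is forced into both-middle pairs at the next. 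Turning this mixing into the implication $q_n\to0\Rightarrow\theta_n(a,b)\to0$, together with the handling of the order of limits $\lim_{n\to\infty}\lim_{T\to\infty}$, is the combinatorial heart of the argument and is supplied by the non-stationary treatment of \cite{AT14}, into which Lemma~\ref{lem:lower} slots as the quantum replacement for the classical Mrs.\ Gerber's Lemma.

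Finally, once $\theta_n(a,b)\to0$ is available, the proof closes as in the stationary case. From $\mu=\mu_n$ and $\alpha_n+\theta_n+\beta_n=1$ one gets $\mu\le a+(b-a)\theta_n(a,b)+(\log2-a)\beta_n(b)$, so letting $n\to\infty$ and then $a\downarrow0$ yields $\liminf_n\beta_n(b)\ge\mu/\log2$; bounding $\log2-\mu_n$ symmetrically gives $\liminf_n\alpha_n(a)\ge1-\mu/\log2$; and $\alpha_n+\beta_n\le1$ forces all three stated limits, the middle one being $0$.
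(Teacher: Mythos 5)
Your proposal is correct and follows essentially the same route as the paper: the same position-averaged moments $\mu_n,\nu_n$, the same conservation and second-moment increment argument via Lemma~\ref{lem:lower}, and the same deferral of the crucial pair-to-individual upgrade ($q_n\to0\Rightarrow\theta_n(a,b)\to0$) to the purely combinatorial corollary of \cite{AT14}, which the paper likewise invokes (if $\theta_n(a,b)>\binom{k}{\lfloor k/2\rfloor}/2^k$ then $\nu_{n+k}\geq\nu_n+\delta$). You correctly identify this as the heart of the matter, and your only deviation is an immaterial constant in the per-pair increment ($\tfrac12$ vs.\ the correct $\tfrac14$).
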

\begin{proof}
Again the proof will follow very closely the one in~\cite{AT14}. 
We start again by defining the fractions $\alpha_n(a)$, $\theta_n(a,b)$ and $\beta_n(b)$ as the quantities under investigation before taking the limes over $n$. Furthermore we will similarly to the last proof define the quantities
\begin{equation}
\mu_n=\lim_{T\rightarrow\infty}\frac{1}{T} \sum_{t<T}I(W_{n,t})
\end{equation}
and
\begin{equation}
\nu_n=\liminf_{T\rightarrow\infty}\frac{1}{T} \sum_{t<T}I(W_{n,t})^2.
\end{equation}
Note that from the assumption that the limit in $\mu = \mu_0$ exists also follows that all $\mu_n$ are well defined, the reasoning being the same as in the classical case (see~\cite{AT14}).
Therefore it also follows that $\mu_n=\mu_{n+1}$ as in the previous proof.  \\
Next we are looking at the change in variance when combining two channels. From the general Lemma~\ref{lem:lower} we can also deduce the following statement
\begin{equation}
\Delta^2(W_1,W_2) := \frac{1}{2} [I(\langle W_1,W_2\rangle^-)^2 + I(\langle W_1,W_2\rangle^+)^2] - \frac{1}{2} [I(W_1)^2 + I(W_2)^2] \geq \zeta(a,b),
\end{equation}
if $I(W_1),I(W_2)\in[a,b]$, where $\zeta(a,b)>0$ whenever $0<a<b<1$. This is sufficient to conclude that $\nu_{n+1}\geq\nu_n$, however to relate their difference to $\theta_n$ more work is needed. It is easy to see that in special cases (e.g. every second channel is already extremal) the combination of different channels might not lead to a positive $\zeta(a,b)$ bounding $\nu_{n+1}-\nu_n$. Nevertheless even those seemingly ineffective coding steps deterministically permute the channels and therefore allow for progress in later coding steps. This has been made precise in~\cite{AT14} in a Corollary that we will also use here. It states that if $\theta_n(a,b) > {{k}\choose{\left\lfloor k/2\right\rfloor}}/2^k:=\epsilon_k$, then
\begin{equation}
\nu_{n+k} \geq \nu_n + \delta,
\end{equation}
where $\delta>0$ is a quantity that depends only on $k$, $\theta_n$, $a$ and $b$. The proof in~\cite{AT14} is entirely algebraic and works also in our generalized setting. From this we can conclude, for every $k\in{\mathbb N}$, that $\theta_n\leq\epsilon_k$ holds for sufficiently large $n$, and therefore
\begin{equation}
\lim_{n\rightarrow\infty}\lim_{T\rightarrow\infty} \frac{1}{T} \#\{  0\leq t<T : I(W_{n,t})\in [a,b] \} = 0,
\end{equation}
since $\lim_{k\rightarrow\infty}\epsilon_k = 0$. \\
The claims about $\alpha_n$ and $\beta_n$ now follow from the same reasoning as in the stationary case. 
\end{proof}

\subsection{Speed of polarization}
Applying our quantitative result from Theorem \ref{QuantumMrsGerberTheorem1StateTwice} to the entropy change of binary-input classical-quantum channels under the polar transform, we now prove a \emph{quantitative} result on the speed of polarization for i.i.d.\ binary-input classical-quantum channels. For our proof, we adapt the method of \cite{GX15} to the $\sim H/(-\log H)$ lower bound guaranteed by our Eq.\ (\ref{FonlyMoreConvenientLowerBound}), which is somewhat worse than the linear lower bound $\sim H$ for the classical-classical case in \cite[see in particular Lemma 6]{GX15}; this is the reason that our following result does not guarantee a polynomial blocklength $\sim(1/\varepsilon)^\mu$, but only a subexponential one $\sim(1/\varepsilon)^{\mu\log1/\varepsilon}$. Under our Conjecture \ref{QMGL}, however, we can show the same polynomial blocklength result as in \cite{GX15} for classical-classical channels, as we will point out in Remark \ref{poly-blocklength-under-conjecture}. Note that we do not make any claim about efficient decoding of classical-quantum polar codes (e.g.\ with a circuit of subexponential size), which remains an open problem (see Section \ref{conclusion-section}).

\begin{thm}[Blocklength subexponential in gap to capacity suffices for classical-quantum binary polar codes]There is an absolute constant $\mu<\infty$ such that the following holds. For any binary-input classical-quantum channel $W$, there exists $a_W<\infty$ such that for all $\varepsilon>0$ and all powers of two $N\geq a_W(1/\varepsilon)^{\mu\log1/\varepsilon}$, a polar code of blocklength $N$ has rate at least $I(W)-\varepsilon$ and block-error probability at most $2^{-N^{0.49}}$, where $I(W)$ is the symmetric capacity of $W$.
\end{thm}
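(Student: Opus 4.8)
The plan is to adapt the two-phase martingale argument of Guruswami and Xia \cite{GX15} to the classical-quantum setting, tracking along the polar recursion both the conditional entropy and the fidelity parameter of the synthesized channels. Since we start from $N=2^n$ i.i.d.\ copies of $W$, the two channels combined at every node of the polar transform are always \emph{identical}, so Theorem \ref{QuantumMrsGerberTheorem1StateTwice} and the relation of Theorem \ref{tightRelationFHtheorem} apply verbatim at each step. Writing $S^n$ for a uniform string in $\{+,-\}^n$, I would follow $H_n:=H(W^{S^n})$ and $f_n:=F(\sigma_0^{S^n},\sigma_1^{S^n})$. By the chain rule (\ref{qchain}), $H_n$ is a bounded martingale, while the fidelity obeys the \emph{exact} squaring $f(W^+)=f_n^2$ (which follows from (\ref{qu:plus2}) and the multiplicativity of fidelity on tensor products), together with a bound ensuring that $f(W^-)$ grows by at most a bounded factor, as in the classical Bhattacharyya recursion.

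First I would establish the speed of \emph{entropy} polarization, which controls the gap to capacity. In the bulk region $H_n\in[\tau,\log2-\tau]$, Lemma \ref{lem:lower} (equivalently the strict positivity of (\ref{FonlyMainResult}) away from the endpoints) yields a uniform lower bound $\kappa(a,b)^2/4$ on the per-step variance of $H_n$, so the process cannot linger in the bulk. The decisive and genuinely new ingredient is the behaviour near the endpoints: from (\ref{FonlyMoreConvenientLowerBound}) the worse channel satisfies the entropy increase $H(W^-)-H\geq 0.083\,H/(1-\log H)$ for small $H$, hence the better channel contracts as $H(W^+)\leq H\,\bigl(1-0.083/(1-\log H)\bigr)$. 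Because this contraction factor tends to $1$ only logarithmically — rather than staying bounded below $1$, as the linear gain $H(W^-)-H\gtrsim H$ of \cite[Lemma 6]{GX15} does classically — driving $H$ from a constant down to a target $\eta$ along a $+$-heavy path costs $\Theta((\log1/\eta)^2)$ steps instead of $\Theta(\log1/\eta)$. Propagating this estimate through the GX15 counting of channels per entropy band shows that after $n=O((\log1/\varepsilon)^2)$ polarization steps the fraction of synthesized channels not yet driven below entropy $\eta$ is at most $\varepsilon$. Since $2^{O((\log1/\varepsilon)^2)}=(1/\varepsilon)^{O(\log1/\varepsilon)}$, this reproduces the claimed blocklength $N\geq a_W(1/\varepsilon)^{\mu\log1/\varepsilon}$ and guarantees rate at least $I(W)-\varepsilon$.

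Second I would handle the block-error probability, which is where the fidelity enters and where the weaker entropy bound plays \emph{no} role. For a channel that has already reached small entropy (hence small $f$, by the upper bound $f\leq 1-2h_2^{-1}(\log2-H)$ of Theorem \ref{tightRelationFHtheorem}), the exact squaring $f\mapsto f^2$ drives the fidelity doubly-exponentially to zero along the remaining $+$ steps; reserving the last $\approx 0.49\,n$ levels for this decay reproduces, exactly as in the Arikan--Telatar argument \cite{AT09} underlying \cite{GX15}, that an $I(W)/\log2$-fraction of channels attains $f\leq 2^{-N^{0.49}}$. The known relation between the fidelity parameter of a binary-input cq-channel and the error probability of the quantum successive-cancellation decoder \cite{WG13,H14} then bounds the block-error probability by the sum of the fidelities over the information set, which is at most $2^{-N^{0.49}}$.

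I expect the main obstacle to be transporting the quantitative counting argument of \cite{GX15} through the degraded $H/(-\log H)$ contraction while controlling the absolute constant $\mu$. Concretely, one must show that the logarithmic slowdown inflates the step count of the \emph{rate}-polarization phase from $\log1/\varepsilon$ to $(\log1/\varepsilon)^2$, yet leaves the doubly-exponential error phase intact — the latter being governed solely by the quantum squaring $f\mapsto f^2$ and hence retaining the classical exponent $0.49$. A secondary point is to check that the endpoint estimates of Theorem \ref{tightRelationFHtheorem} and the fidelity recursions cleanly interface the entropy recursion, where Theorem \ref{QuantumMrsGerberTheorem1StateTwice} lives, with the fidelity recursion, where the error analysis lives, so that the two phases can be spliced at a common intermediate threshold $\eta$.
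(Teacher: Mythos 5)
Your proposal is correct and follows essentially the same route as the paper: adapt the rough-plus-fine polarization scheme of \cite{GX15}, observe that the $H/(-\log H)$ gain from Theorem \ref{QuantumMrsGerberTheorem1StateTwice} (in place of the classical linear gain in \cite[Lemma 6]{GX15}) inflates the rough-polarization step count from $\Theta(\log 1/\varepsilon)$ to $\Theta((\log 1/\varepsilon)^2)$, and note that the fine-polarization/error-probability phase governed by the fidelity squaring is unaffected, giving $N\geq(1/\varepsilon)^{\mu\log 1/\varepsilon}$. The only presentational difference is that the paper packages the rough step via the potential $T(W)=H(W)(1-H(W))$ and the differential equation $\frac{d}{dn}f=-\kappa f/(-\log f)$ with solution $e^{-\sqrt{2\kappa n+(\log f(0))^2}}$, rather than your bulk-variance/endpoint-contraction path counting.
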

\begin{proof}Our proof follows the proofs of \cite[Propositions 5 and 10]{GX15} (``rough'' and ``fine'' polarization). The main reason why we can guarantee only a subexponential scaling here, lies in the rough polarization step (\cite[Proposition 5]{GX15}). In the following, we outline only the main differences to the proofs in \cite{GX15} which are responsible for the altered scaling. As in \cite{GX15} we define $T(W):=H(W)(1-H(W))$. Then \cite[Lemma 8]{GX15} is modified to
\begin{align*}
\underset{i~\text{mod}~2}{{\mathbb E}}[T(W^{(i)}_{n+1}]\leq T(W^{(\lfloor i/2\rfloor)}_n)-\kappa\frac{T(W^{(\lfloor i/2\rfloor)}_n)}{-\log T(W^{(\lfloor i/2\rfloor)}_n)}
\end{align*}
with some $\kappa>0$. Using convexity we obtain the same relation for the full expectation values (similar to the equation in the proof of \cite[Corollary 9]{GX15}):
\begin{align*}
\underset{i}{{\mathbb E}}[T(W^{(i)}_{n+1})]\leq \underset{i}{{\mathbb E}}[T(W^{(i)}_n)]-\kappa\frac{\underset{i}{{\mathbb E}}[T(W^{(i)}_n)]}{-\log\underset{i}{{\mathbb E}}[T(W^{(i)}_n)]}.
\end{align*}
This now does not anymore guarantee that the decrease of $\underset{i}{{\mathbb E}}[T(W^{(i)}_n)]$ is exponential in $n$, as in \cite[Corollary 9]{GX15} which was obtained from the recursion $\underset{i}{{\mathbb E}}[T(W^{(i)}_{n+1})]\leq \underset{i}{{\mathbb E}}[T(W^{(i)}_n)]-\kappa\underset{i}{{\mathbb E}}[T(W^{(i)}_n)]$ (or the same recursion for $\underset{i}{{\mathbb E}}[\sqrt{T(W^{(i)}_{n})}]$). Thus, instead of the differential equation $\frac{d}{dn}f(n)=-\kappa f(n)$, the behaviour here is goverened by the equation $\frac{d}{dn}f(n)=-\kappa\frac{f(n)}{-\log f(n)}$. This differential equation has the solution $f(n)=\exp[-\sqrt{2\kappa n+(\log f(0))^2}]$ (note, $f(n)\leq1$ for all $n$) und we therefore obtain the following bound:
\begin{align*}
\underset{i}{{\mathbb E}}[T(W^{(i)}_n)]\leq e^{-\sqrt{2\kappa n+(\log T(W_0^{(0)}))^2}}\leq e^{-\sqrt{2\kappa n}},
\end{align*}
guaranteeing the expectation value of $T(W^{(i)}_n)$ to decrease at least superpolynomially with the number of polarization steps $n$.

This expectation value will thus be smaller than any $\delta>0$ if only the number of polarization steps satisfies $n\geq\frac{1}{2\kappa}\left(\log\frac{1}{\delta}\right)^2\sim\left(\log\frac{1}{\delta}\right)^2$. This expression can now be connected with the ``fine polarization step'' \cite[Proposition 10]{GX15} since for any fixed power $\delta\sim\varepsilon^p$ (with $\varepsilon$ from the statement of the theorem) we again obtain that $n\geq\widetilde{\mu}\left(\log\frac{1}{\varepsilon}\right)^2$ with some constant $\widetilde{\mu}$ suffices. Since the number $n$ of polarization steps is related to the blocklength $N$ via $N=2^n$, we find that the constructed polar code has the desired properties as soon as the blocklength satisfies $N\geq2^{\widetilde{\mu}(\log1/\varepsilon)^2}=(1/\varepsilon)^{\mu\log1/\varepsilon}$ (with $\mu=\widetilde{\mu}\log2$). The constant $a_W$ from the theorem statement accounts for the fact that the above analysis is only valid for sufficiently small $\varepsilon$.

It is instructive to compare the reasoning in the previous paragraph with the blocklength result obtained in \cite{GX15}. The bound obtained from $f(n)$ in this case is $\underset{i}{{\mathbb E}}[T(W^{(i)}_n)]\leq e^{-\kappa n}$, so that $n\geq\widetilde{\mu}\log\frac{1}{\varepsilon}$ suffices for $\underset{i}{{\mathbb E}}[T(W^{(i)}_n)]\leq\varepsilon^p$. This shows that a blocklength $N\geq2^{\widetilde{\mu}\log1/\varepsilon}=(1/\varepsilon)^{\mu}$ suffices.
\end{proof}

\begin{remark}[Polynomial blocklength suffices under Conjecture \ref{QMGL}]\label{poly-blocklength-under-conjecture}
If Conjecture \ref{QMGL} holds, then one can prove the same polynomial blocklength result as \cite[Theorem 1]{GX15} also for classical-quantum channels. The only part of the proof which has to be changed is \cite[Lemma 6]{GX15}, where the classical Mrs.\ Gerber's Lemma is to be replaced by Conjecture \ref{QMGL}. However, this change does not even affect the numerical value of $\theta$ that can be chosen in \cite[Lemma 6]{GX15}, since our conjectured optimal lower bound in the classical-quantum case is simply a symmetrization of the classical lower bound.
\end{remark}

\section{Conclusion and open questions}\label{conclusion-section}
In this work we have investigated the problem of bounds on information combining when the side information available is quantum. This is a generalization of the classical problem of information combining which has found many applications in classical information theory. 
In particular we find a non-trivial lower bound the conditional entropy of check nodes (or the minus polar transform) and accordingly an upper bound on that of variable nodes (or the plus polar transform). 

On the way of proofing this bound we find several technical results that are also of interest in their own merit, including a novel lower bound on the concavity of von Neumann entropy which we expect to also be useful in many other scenarios whenever a bound in terms of the fidelity is needed. 
Furthermore we show a direct relation between our problem and lower bounds on the conditional mutual information which have generated much attention recently. Our proof gives a direct application of the most prominent result, the lower bound by Fawzi and Renner in terms of the Fidelity of Recovery~\cite{FR14}. Nevertheless it also raises the question whether there are stronger bounds especially in the case when the conditioned systems are classical, which would allow to get closer to our conjectured bounds.
Another important ingredient is the concept of channel duality. The fact that duality is not only useful in proving our bound, but also provides an intuitive explanation for the states that achieve our conjecture with equality (and therefore the additional symmetry in our conjecture), might point to a close relation between these two fields. 

Finally, the application of our bounds to classical-quantum Polar coding allows us to prove new results, namely that non-stationary classical-quantum channels also experience polarization and that a sub-exponential block length is sufficient  to achieve the optimal block-error rate for stationary Polar codes. 

In the same manner we expect our results to also have applications in other coding scenarios, such as branching MERA and convolutional Polar codes~\cite{FP14, FP142, FHP17}. In general the applications of the classical bounds give natural possibilities for quantum extensions. 

Lastly we would like to point out some open problems. The most obvious one being to find a proof for our conjectured bounds (see Section \ref{main}), which comes along with several other open questions, such as a better understanding of conditioning on quantum system and duality in quantum information theory as well as new bounds on strong subadditivity. 
Also our given lower bound as well as the conjectured ones can be seen as special cases of the Mrs. Gerbers Lemma by Wyner and Ziv, which in their version not only applies to single copies of the channel but $n$ copies. Since its discovery the Mrs. Gerbers Lemma has been generalized to many settings~\cite{W74, AK77, JA12, OS15, C14}, all of which pose natural open problems in the quantum setting. While the $n$-copy case could be useful in Shannon theory, generalization to non-binary inputs would have applications to coding such as Polar codes for arbitrary classical-quantum channels (see e.g~\cite{GV14,GB15,NR17}). A final question is whether an efficient decoding procedure (e.g., using a number of gates polynomial in the blocklength) exists for classical-quantum Polar codes~\cite{WLH13}.

\section*{Acknowledgments}
We thank Robert K\"onig, Christian Majenz, and Joe Renes for fruitful discussions, and Joe in particular for pointing us to the relation between our problem and channel duality. \\
CH acknowledges support by the Spanish MINECO, project FIS2013-40627-P, FIS2016-80681-P (AEI/FEDER, UE) and FPI Grant No. BES-2014-068888, as well as by the Generalitat de Catalunya, CIRIT project no. 2014 SGR 966. \\
DR acknowledges support from the ERC grant DQSIM.

\bibliographystyle{IEEEtran}
\bibliography{bib}

\begin{thebibliography}{10}
\providecommand{\url}[1]{#1}
\csname url@samestyle\endcsname
\providecommand{\newblock}{\relax}
\providecommand{\bibinfo}[2]{#2}
\providecommand{\BIBentrySTDinterwordspacing}{\spaceskip=0pt\relax}
\providecommand{\BIBentryALTinterwordstretchfactor}{4}
\providecommand{\BIBentryALTinterwordspacing}{\spaceskip=\fontdimen2\font plus
\BIBentryALTinterwordstretchfactor\fontdimen3\font minus
  \fontdimen4\font\relax}
\providecommand{\BIBforeignlanguage}[2]{{%
\expandafter\ifx\csname l@#1\endcsname\relax
\typeout{** WARNING: IEEEtran.bst: No hyphenation pattern has been}%
\typeout{** loaded for the language `#1'. Using the pattern for}%
\typeout{** the default language instead.}%
\else
\language=\csname l@#1\endcsname
\fi
#2}}
\providecommand{\BIBdecl}{\relax}
\BIBdecl

\bibitem{WZ73}
A.~D. Wyner and J.~Ziv, ``A theorem on the entropy of certain binary sequences
  and applications--i,'' \emph{IEEE Transactions on Information Theory},
  vol.~19, no.~6, pp. 769 -- 772, November 1973.

\bibitem{GKbook}
A.~E. Gamal and Y.-H. Kim, \emph{Network Information Theory}.\hskip 1em plus
  0.5em minus 0.4em\relax New York, U. S. A.: Cambridge University Press,
  January 2012.

\bibitem{RU08}
T.~Richardson and R.~Urbanke, \emph{Modern Coding Theory}.\hskip 1em plus 0.5em
  minus 0.4em\relax New York, NY, USA: Cambridge University Press, 2008.

\bibitem{A09}
E.~Arikan, ``Channel polarization: A method for constructing capacity-achieving
  codes for symmetric binary-input memoryless channels,'' \emph{IEEE
  Transactions on Information Theory}, vol.~55, no.~7, pp. 3051--3073, July
  2009, arXiv:0807.3917.

\bibitem{AT14}
M.~Alsan and E.~Teletar, ``A simple proof of polarization and polarization for
  non-stationary channels,'' \emph{Proceedings of the 2014 IEEE International
  Symposium on Information Theory}, pp. 301 -- 305, July 2014.

\bibitem{GX15}
V.~Guruswami and P.~Xia, ``Polar codes: Speed of polarization and polynomial
  gap to capacity,'' \emph{IEEE Transactions on Information Theory}, vol.~61,
  no.~1, pp. 3--16, January 2015, arXiv:1304.4321.

\bibitem{WG13}
M.~M. Wilde and S.~Guha, ``Polar codes for classical-quantum channels,''
  \emph{IEEE Transactions on Information Theory}, vol.~59, no.~2, pp.
  1175--1187, February 2013, arXiv:1109.2591.

\bibitem{FR14}
O.~Fawzi and R.~Renner, ``{Quantum conditional mutual information and
  approximate Markov chains},'' Oct. 2014, arXiv:1410.0664.

\bibitem{RSH14}
J.~M. {Renes}, D.~{Sutter}, and S.~{Hamed Hassani}, ``{Alignment of Polarized
  Sets},'' \emph{ArXiv e-prints}, Nov. 2014.

\bibitem{R17}
J.~M. Renes, ``{Duality of channels and codes},'' Jan. 2017, arXiv:1701.05583.

\bibitem{S48}
C.~E. Shannon, ``{A mathematical theory of communication},'' \emph{Bell System
  Tech. J.}, vol.~27, pp. 379–423, 623–656, 1948.

\bibitem{S59}
A.~J. Stam, ``Some inequalities satisfied by the quantities of information of
  fisher and shannon,'' \emph{Information and Control.}, vol.~2, no.~2, pp.
  101--112, 1959.

\bibitem{SW90}
S.~Shamai and A.~D. Wyner, ``A binary analog to the entropy-power inequality,''
  \emph{IEEE Transactions on Information Theory}, vol.~36, no.~6, pp.
  1428--1430, November 1990.

\bibitem{HAT14}
S.~Haghighatshoar, E.~Abbe, and {\`I}.~E. Telatar, ``A new entropy power
  inequality for integer-valued random variables,'' \emph{IEEE Transactions on
  Information Theory}, vol.~60, no.~7, pp. 3787--3796, 2014.

\bibitem{Tao10}
T.~Tao, ``Sumset and inverse sumset theory for shannon entropy,''
  \emph{Combinatorics, Probability and Computing}, vol.~19, no.~04, pp.
  603--639, 2010.

\bibitem{KS14}
R.~{Koenig} and G.~{Smith}, ``{The entropy power inequality for quantum
  systems},'' \emph{IEEE Transactions on Information Theory}, vol.~60, no.~3,
  pp. 1536--1548, March 2014.

\bibitem{PMG14}
G.~{de Palma}, A.~{Mari}, and V.~{Giovannetti}, ``{A generalization of the
  entropy power inequality to bosonic quantum systems},'' \emph{Nature
  Photonics}, vol.~8, pp. 958--964, Dec. 2014.

\bibitem{ADO15}
K.~{Audenaert}, N.~{Datta}, and M.~{Ozols}, ``{Entropy power inequalities for
  qudits},'' \emph{Journal of Mathematical Physics}, vol.~57, no.~5, p. 052202,
  May 2016.

\bibitem{NC00}
M.~A. Nielsen and I.~L. Chuang, \emph{Quantum Information and Quantum
  Computation}.\hskip 1em plus 0.5em minus 0.4em\relax Cambridge University
  Press, 2000.

\bibitem{K15}
R.~{Koenig}, ``{The conditional entropy power inequality for Gaussian quantum
  states},'' \emph{Journal of Mathematical Physics}, vol.~56, no.~2, p. 022201,
  Feb. 2015.

\bibitem{dPT17}
G.~{De Palma} and D.~{Trevisan}, ``{The Entropy Power Inequality with quantum
  memory},'' \emph{ArXiv e-prints}, 2017.

\bibitem{LHHH05}
I.~Land, S.~Huettinger, P.~A. Hoeher, and J.~B. Huber, ``{Bounds on Information
  Combining},'' \emph{IEEE Transactions on Information Theory}, vol.~51, no.~2,
  pp. 612--619, Feb. 2005.

\bibitem{SSZ05}
I.~Sutskover, S.~Shamai, and J.~Ziv, ``{Extremes of information combining},''
  \emph{IEEE Transactions on Information Theory}, vol.~51, no.~4, pp. 1313 --
  1325, 2005.

\bibitem{R16bp}
J.~M. {Renes}, ``{Belief propagation decoding of quantum channels by passing
  quantum messages},'' \emph{ArXiv e-prints}, Jul. 2016.

\bibitem{Lieb2002}
\BIBentryALTinterwordspacing
E.~H. Lieb and M.~B. Ruskai, \emph{Proof of the strong subadditivity of
  quantum-mechanical entropy}.\hskip 1em plus 0.5em minus 0.4em\relax Berlin,
  Heidelberg: Springer Berlin Heidelberg, 2002, pp. 63--66. [Online].
  Available: \url{http://dx.doi.org/10.1007/978-3-642-55925-9_6}
\BIBentrySTDinterwordspacing

\bibitem{DFR16}
F.~{Dupuis}, O.~{Fawzi}, and R.~{Renner}, ``{Entropy accumulation},''
  \emph{ArXiv e-prints}, Jul. 2016.

\bibitem{WR12}
J.~M. Renes and M.~M. Wilde, ``Polar codes for private and quantum
  communication over arbitrary channels,'' \emph{IEEE Transactions on
  Information Theory}, vol.~60, no.~6, pp. 3090--3103, June 2014,
  arXiv:1212.2537.

\bibitem{RB08}
\BIBentryALTinterwordspacing
J.~M. Renes and J.-C. Boileau, ``Physical underpinnings of privacy,''
  \emph{Physical Review A}, vol.~78, p. 032335, Sep 2008. [Online]. Available:
  \url{http://link.aps.org/doi/10.1103/PhysRevA.78.032335}
\BIBentrySTDinterwordspacing

\bibitem{WR12a}
M.~M. Wilde and J.~M. Renes, ``Quantum polar codes for arbitrary channels,''
  \emph{Proceedings of the 2012 IEEE International Symposium on Information
  Theory}, pp. 334--338, July 2012.

\bibitem{Wilde2014asdf}
\BIBentryALTinterwordspacing
M.~M. Wilde, A.~Winter, and D.~Yang, ``Strong converse for the classical
  capacity of entanglement-breaking and hadamard channels via a sandwiched
  r{\'e}nyi relative entropy,'' \emph{Communications in Mathematical Physics},
  vol. 331, no.~2, pp. 593--622, 2014. [Online]. Available:
  \url{http://dx.doi.org/10.1007/s00220-014-2122-x}
\BIBentrySTDinterwordspacing

\bibitem{muller2013quantum}
M.~M{\"u}ller-Lennert, F.~Dupuis, O.~Szehr, S.~Fehr, and M.~Tomamichel, ``On
  quantum r{\'e}nyi entropies: a new generalization and some properties,''
  \emph{Journal of Mathematical Physics}, vol.~54, no.~12, p. 122203, 2013.

\bibitem{tomamichel2015quantum}
M.~Tomamichel, \emph{Quantum Information Processing with Finite Resources:
  Mathematical Foundations}.\hskip 1em plus 0.5em minus 0.4em\relax Springer,
  2015, vol.~5.

\bibitem{audenaert2012comparisons}
K.~M. Audenaert, ``Comparisons between quantum state distinguishability
  measures,'' \emph{Quant. Inf. Comp.}, vol.~14, no.~1, pp. 31--38, 2014.

\bibitem{W13}
M.~M. Wilde, \emph{Quantum Information Theory}.\hskip 1em plus 0.5em minus
  0.4em\relax Cambridge University Press, 2013.

\bibitem{PhysRevLett.105.040505}
\BIBentryALTinterwordspacing
W.~Roga, M.~Fannes, and K.~\ifmmode~\dot{Z}\else \.{Z}\fi{}yczkowski,
  ``Universal bounds for the holevo quantity, coherent information, and the
  jensen-shannon divergence,'' \emph{Phys. Rev. Lett.}, vol. 105, p. 040505,
  Jul 2010. [Online]. Available:
  \url{http://link.aps.org/doi/10.1103/PhysRevLett.105.040505}
\BIBentrySTDinterwordspacing

\bibitem{kim2014bounds}
I.~Kim and M.~B. Ruskai, ``Bounds on the concavity of quantum entropy,''
  \emph{Journal of Mathematical Physics}, vol.~55, no.~9, p. 092201, 2014.

\bibitem{MFW16}
A.~M{\"u}ller-Hermes, D.~Stilck~Fran{\c{c}}a, and M.~M. Wolf, ``Relative
  entropy convergence for depolarizing channels,'' \emph{Journal of
  Mathematical Physics}, vol.~57, no.~2, p. 022202, 2016.

\bibitem{NR17}
R.~Nasser and J.~M. Renes, ``{Polar Codes for Arbitrary Classical-Quantum
  Channels and Arbitrary cq-MACs},'' Jan. 2017, arXiv:1701.03397.

\bibitem{SRDR13}
D.~Sutter, J.~M. Renes, F.~Dupuis, and R.~Renner, ``Efficient quantum polar
  codes requiring no preshared entanglement,'' \emph{IEEE Transactions on
  Information Theory}, vol.~61, no.~11, pp. 6395 -- 6414, November 2015,
  arXiv:1307.1136.

\bibitem{JRSWW15}
M.~Junge, R.~Renner, D.~Sutter, M.~M. Wilde, and A.~Winter, ``Universal
  recovery from a decrease of quantum relative entropy,'' \emph{arXiv preprint
  arXiv:1509.07127}, 2015.

\bibitem{STH16}
D.~{Sutter}, M.~{Tomamichel}, and A.~W. {Harrow}, ``{Strengthened Monotonicity
  of Relative Entropy via Pinched Petz Recovery Map},'' \emph{IEEE Transactions
  on Information Theory}, vol.~62, no.~5, pp. 2907 -- 2913, 2016.

\bibitem{SBT16}
D.~Sutter, M.~Berta, and M.~Tomamichel, ``Multivariate trace inequalities,''
  \emph{Communications in Mathematical Physics}, pp. 1--22, 2016.

\bibitem{BHOS15}
F.~G.~S.~L. {Brand{\~a}o}, A.~W. {Harrow}, J.~{Oppenheim}, and S.~{Strelchuk},
  ``{Quantum Conditional Mutual Information, Reconstructed States, and State
  Redistribution},'' \emph{Physical Review Letters}, vol. 115, no.~5, p.
  050501, Jul. 2015.

\bibitem{CHMOSWW16}
T.~{Cooney}, C.~{Hirche}, C.~{Morgan}, J.~P. {Olson}, K.~P. {Seshadreesan},
  J.~{Watrous}, and M.~M. {Wilde}, ``{Operational meaning of quantum measures
  of recovery},'' \emph{Physical Review A}, vol.~94, no.~2, p. 022310, Aug.
  2016.

\bibitem{B69}
D.~Bures, ``{An extension of Kakutani's theorem on infinite product measures to
  the tensor product of semifinite $ w\sp{\ast} $-algebras},'' \emph{Trans.
  Amer. Math. Soc.}, vol. 135, pp. 199--212, 1969.

\bibitem{GLN05}
A.~Gilchrist, N.~K. Langford, and M.~A. Nielsen, ``Distance measures to compare
  real and ideal quantum processes,'' \emph{Physical Review A}, vol.~71, no.~6,
  p. 062310, 2005.

\bibitem{R02}
A.~Rastegin, ``Relative error of state-dependent cloning,'' \emph{Physical
  Review A}, vol.~66, no.~4, p. 042304, 2002.

\bibitem{BFT15}
M.~{Berta}, O.~{Fawzi}, and M.~{Tomamichel}, ``{On Variational Expressions for
  Quantum Relative Entropies},'' \emph{ArXiv e-prints}, Dec. 2015.

\bibitem{M15}
J.~{Maziero}, ``{Random Sampling of Quantum States: a Survey of Methods. And
  Some Issues Regarding the Overparametrized Method},'' \emph{Brazilian Journal
  of Physics}, vol.~45, pp. 575--583, Dec. 2015.

\bibitem{AT09}
E.~Arikan and E.~Telatar, ``On the rate of channel polarization,''
  \emph{Proceedings of the 2009 IEEE International Symposium on Information
  Theory}, pp. 1493 -- 1495, July 2009.

\bibitem{H14}
C.~Hirche, ``{Polar codes in quantum information theory},'' 2014, {Master's
  thesis, Hannover, arXiv:1501.03737}.

\bibitem{WLH13}
M.~M. Wilde, O.~Landon-Cardinal, and P.~Hayden, ``Towards efficient decoding of
  classical-quantum polar codes,'' \emph{Proceedings of the 8th Conference on
  the Theory of Quantum Computation, Communication and Cryptography (TQC
  2013)}, pp. 157--177, May 2013, arXiv:1302.0398.

\bibitem{A12}
E.~Arikan, ``Polar coding for the {Slepian-Wolf} problem based on monotone
  chain rules,'' \emph{Proceedings of the 2012 IEEE International Symposium on
  Information Theory}, pp. 566--570, July 2012.

\bibitem{A10}
------, ``{Source polarization},'' Jan. 2010, arXiv:1001.3087.

\bibitem{HMW14}
C.~Hirche, C.~Morgan, and M.~M. Wilde, ``Polar codes in network quantum
  information theory,'' \emph{IEEE Transactions on Information Theory},
  vol.~62, no.~2, pp. 1--10, February 2016, arXiv:1409.7246.

\bibitem{CM15}
C.~Hirche and C.~Morgan, ``An improved rate region for the classical-quantum
  broadcast channel,'' \emph{Proceedings of the 2015 IEEE International
  Symposium on Information Theory}, pp. 2782 -- 2786, July 2015.

\bibitem{FP14}
A.~Ferris and D.~Poulin, ``Branching {MERA} codes: A natural extension of
  classical and quantum polar codes,'' \emph{Proceedings of the 2014 IEEE
  International Symposium on Information Theory}, pp. 1081 -- 1085, July 2014.

\bibitem{FP142}
A.~J. Ferris and D.~Poulin, ``{Tensor networks and quantum error correction},''
  \emph{Physical Review Letters}, vol. 113, no.~3, p. 030501, Jul. 2014,
  arXiv:1312.4578.

\bibitem{FHP17}
A.~J. Ferris, C.~Hirche, and D.~Poulin, ``{Convolutional Polar Codes},''
  \emph{ArXiv e-prints}, Apr. 2017.

\bibitem{W74}
H.~S. {Witsenhausen}, ``{Entropy inequalities for discrete channels},''
  \emph{IEEE Transactions on Information Theory}, vol.~20, no.~5, pp. 610--616,
  September 2074.

\bibitem{AK77}
R.~Ahlswede and J.~Körner, ``{On the connection between the entropies of input
  and output distributions of discrete memoryless channels},'' pp. 13--22,
  1977.

\bibitem{JA12}
V.~{Jog} and V.~{Anantharam}, ``{The Entropy Power Inequality and Mrs. Gerber's
  Lemma for Abelian Groups of Order 2\^{}n},'' \emph{ArXiv e-prints}, Jul.
  2012.

\bibitem{OS15}
O.~{Ordentlich} and O.~{Shayevitz}, ``{Minimum MS. E. Gerber's Lemma},''
  \emph{ArXiv e-prints}, May 2015.

\bibitem{C14}
F.~{Cheng}, ``{Generalization of Mrs. Gerber's Lemma},'' \emph{ArXiv e-prints},
  Sep. 2014.

\bibitem{GV14}
V.~Guruswami and A.~Velingker, ``{An entropy sumset inequality and polynomially
  fast convergence to Shannon capacity over all alphabets},'' Nov. 2014,
  arXiv:1411.6993.

\bibitem{GB15}
D.~Goldin and D.~Burshtein, ``On the finite length scaling of ternary polar
  codes,'' \emph{Proceedings of the 2015 IEEE International Symposium on
  Information Theory}, pp. 226--230, July 2015.

\end{thebibliography}

\end{document}